\newcommand{\stargraph}[2]{\begin{tikzpicture}
    \node[circle,fill=black] at (360:0mm) (center) {};
    \foreach \n in {1,...,#1}{
         \node[circle,fill=black,inner sep=1pt] at ({\n*360/#1}:#2cm) (n\n) {};
        \draw (center)--(n\n);
        \node at (0,-#2*1.5) {}; 
    }
\end{tikzpicture}}
\newcommand{\upwardsubseteq}{\rotatebox{90}{\(\subseteq\)}}
\newcommand{\upwardeq}{\rotatebox{90}{\(=\)}}
\newcommand{\completegraph}[2]{%
\begin{tikzpicture}[baseline=(current bounding box.center)]
    \foreach \n in {1,...,#1}{
        \node[circle,fill=black,inner sep=1pt] 
            (n\n) at ({360*\n/#1}:#2cm) {};
    }
    \foreach \i in {1,...,#1}{
        \foreach \j in {1,...,#1}{
            \ifnum\i<\j
                \draw  (n\i)--(n\j);
            \fi
        }
    }
\end{tikzpicture}}
\newcommand{\tabfigure}[2]{\raisebox{-.5\height}{\includegraphics[#1]{#2}}}
\renewcommand{\epsilon}{\varepsilon}
\renewcommand{\phi}{\varphi}
\newcommand{\MLDUI}[1]{\mathcal{M}_{#1}(\mathbb{C})^{\times 2}_{\mathbb{C}^{#1}}}
\newcommand{\R}[1]{\mathbb{R}^{#1}}
\newcommand{\C}[1]{\mathbb{C}^{#1}}
\newcommand{\M}[1]{\mathcal{M}_{#1}(\mathbb{C})}
\newcommand{\Mreal}[1]{\mathcal{M}_{#1}(\mathbb{R})}
\newcommand{\Msa}[1]{\mathcal{M}^{\mathrm{sa}}_{#1}(\mathbb{C})}
\newcommand{\Mrealsa}[1]{\mathcal{M}^{\mathrm{sa}}_{#1}(\mathbb{R})}
\newcommand{\sa}{\mathrm{sa}}
\newcommand{\PhiDUC}[2]{\Phi^{\mathsf{DUC}}_{{#1},{#2}}}
\newcommand{\PhiCDUC}[2]{\Phi^{\mathsf{CDUC}}_{{#1},{#2}}}
\newcommand{\PhiDUCCDUC}[2]{\Phi^{\mathsf{(C)DUC}}_{{#1},{#2}}}
\newcommand{\XLDUI}[2]{X^{\mathsf{LDUI}}_{{#1},{#2}}}
\newcommand{\XCLDUI}[2]{X^{\mathsf{CLDUI}}_{{#1},{#2}}}
\newcommand{\XLDUICLDUI}[2]{X^{\mathsf{(C)LDUI}}_{{#1},{#2}}}
\DeclareMathOperator{\diag}{diag}
\DeclareMathOperator{\Aut}{Aut}
\DeclareMathOperator{\Paley}{Paley}
\newcommand{\PPT}{\mathsf{PPT}}
\DeclareMathOperator{\EWP}{\mathsf{EWP}}
\newcommand{\PSDR}{\mathsf{PSD}^{\mathbb R}}
\newcommand{\PSDC}{\mathsf{PSD}^{\mathbb C}}
\DeclareMathOperator{\COPCP}{\mathsf{PCOP}}
\DeclareMathOperator{\CLDUI}{\mathsf{CLDUI}}
\DeclareMathOperator{\LDUI}{\mathsf{LDUI}}
\DeclareMathOperator{\COP}{\mathsf{COP}}
\DeclareMathOperator{\PCP}{\mathsf{PCP}}
\DeclareMathOperator{\DUC}{\mathsf{DUC}}
\DeclareMathOperator{\PDEC}{\mathsf{PDEC}}
\newcommand{\CP}{\mathsf{CP}}
\newcommand{\DNN}{\mathsf{DNN}}
\newcommand{\SPN}{\mathsf{SPN}}
\newcommand{\PDNN}{\mathsf{PDNN}}
\newtheorem{theorem}{Theorem}[section]
\newtheorem{proposition}[theorem]{Proposition}
\newtheorem{corollary}[theorem]{Corollary}
\newtheorem{lemma}[theorem]{Lemma}
\newtheorem*{conjecture*}{Conjecture}
\theoremstyle{definition}
\newtheorem{definition}[theorem]{Definition}
\theoremstyle{definition}
\newtheorem{remark}[theorem]{Remark}
\theoremstyle{definition}
\newtheorem{example}[theorem]{Example}
\theoremstyle{definition}
\newtheorem{question}[theorem]{Question}
\newcommand\vertarrowbox[3][6ex]{
  \begin{array}[t]{@{}c@{}} #2 \\
  \left\uparrow\vcenter{\hrule height #1}\right.\kern-\nulldelimiterspace\\
  \makebox[0pt]{\scriptsize#3}
  \end{array}
}
\newcommand{\id}{\mathrm{id}}
\newcommand{\la}{\langle}
\newcommand{\ra}{\rangle}
\newcommand{\K}{\mathsf{K}}
\newcommand{\Real}{\mathbb{R}}
\newcommand{\Comp}{\mathbb{C}}
\author{Aabhas Gulati}
\email{aabhas.gulati@math.univ-toulouse.fr}
\address{Institut de Mathématiques, Université de Toulouse, UPS, France.}
\author{Ion Nechita}
\email{nechita@irsamc.ups-tlse.fr}
\address{Laboratoire de Physique Th\'eorique, Universit\'e de Toulouse, CNRS, UPS, France}
\author{Sang-Jun Park}
\email{spark@irsamc.ups-tlse.fr}
\address{Laboratoire de Physique Th\'eorique, Universit\'e de Toulouse, CNRS, UPS, France}
\title[Positive maps and extendibility hierarchies from copositive matrices]{Positive maps and extendibility hierarchies\\from copositive matrices}
\begin{document}
\begin{abstract}
The characterization of positive, non-completely positive linear maps is a central problem in operator algebras and quantum information theory, where such maps serve as entanglement witnesses. This work introduces and systematically studies a new convex cone of \emph{pairwise copositive matrices}, denoted $\COPCP_n$. We establish that this cone is dual to the cone of pairwise completely positive matrices and, critically, provides a complete characterization for the positivity of the broad and physically relevant class of covariant maps. We provide a way to systematically lift matrices from the classical cone of copositive matrices, $\COP_n$, to the new pairwise cone $\COPCP_n$, thereby creating a powerful bridge between the well-studied theory of copositive forms and the structure of positive maps. We develop an analogous framework for decomposable maps, introducing the \emph{cone $\mathsf{PDEC}_n$ of pairwise decomposable matrices}. {For several families of linear maps having diagonal unitary symmetry such as generalized Choi maps, we characterize membership in these cones using simple properties of the parameters of the maps.}

As a primary application of this framework, we define a novel family of linear maps $\Phi_t^G$ parameterized by a graph $G$ and a real parameter $t$. We derive exact thresholds on $t$ that determine when these maps are positive, decomposable, or completely positive, linking these properties to fundamental graph-theoretic parameters. This construction yields vast \emph{new families of positive indecomposable maps}, for which we provide explicit examples derived from infinite classes of graphs, most notably rank 3 strongly regular graphs such as Paley graphs. 

On the dual side, we investigate the entanglement properties of large classes of symmetric states, such as the (mixture of) Dicke states. We prove that the \emph{sum-of-squares (SOS) hierarchies} used in polynomial optimization to approximate the cone of copositive matrices correspond precisely to dual cones of witnesses for different levels of the \emph{PPT bosonic extendibility hierarchy}. {In the setting of the DPS hierarchy for separability, we construct a large family of \textit{boundary} entanglement witnesses that are not certifiable by any level of the PPT bosonic extendibility hierarchy, answering a long standing open question from \cite{DPS04}.} Leveraging the duality, we also provide an explicit construction of bipartite (mixture of) Dicke states that are simultaneously entangled and $\mathcal{K}_r$-PPT bosonic extendible for any desired hierarchy level $r \geq 2$ and local dimension $n \geq 5$.
\end{abstract}

\maketitle

\tableofcontents

\section{Introduction} \label{sec:introduction}

The study of positive linear maps between matrix algebras is a cornerstone of modern functional analysis and operator theory \cite{stormer1963positive, choi1975completely, woronowicz1976positive}. A linear map $\Phi:\M{n} \to \M{n}$ is called \emph{positive} if it maps positive semidefinite matrices to positive semidefinite matrices. A stronger condition is that of \emph{complete positivity} \cite{paulsen2002completely}, where the amplification $\Phi \otimes \id_k$ remains positive for all ancilla dimensions $k$. While completely positive maps admit a simple structural characterization via the Choi-Kraus representation, the set of maps that are positive but not completely positive possesses a far more intricate structure.

Originating in operator algebra, the distinction between positivity and complete positivity turns out to have profound implications for \emph{quantum entanglement theory} \cite{horodecki2009quantum}. The celebrated \emph{Horodecki criterion} \cite{horodecki1996separability} established that a bipartite quantum state $\rho$ is entangled if and only if there exists a positive but not completely positive map $\Phi$ such that $[\Phi \otimes \id](\rho)$ is not positive semidefinite. Consequently, such maps $\Phi$ are synonymous with \emph{entanglement witnesses} \cite{terhal2000bell}, and the challenge of constructing and classifying them is one of the foremost problems in quantum information science \cite{nielsen2010quantum, aubrun2017alice, watrous2018theory}. Within the set of positive maps lies the important convex cone of \emph{decomposable maps}, that can be written as the sum of a completely positive map and a completely co-positive map (the composition of a completely positive map with the matrix transpose). Maps that are positive but not decomposable, known as \emph{indecomposable maps}, are of particular interest as they can detect subtle forms of entanglement, such as that found in \emph{PPT} (positive partial transpose) entangled states \cite{horodecki1997separability}. While it was shown by Størmer and Woronowicz that every positive map $\M{2} \to \M{2,3}$ is known to be decomposable \cite{stormer1963positive, woronowicz1976positive}, Choi \cite{choi1975} provided the first example of a positive indecomposable map $\M{3} \to \M{3}$, {which was later generalized in various related forms (see, e.g., \cite{choi1977,tanahashi1988,osaka1991,kye1992,cho1992,ha2003class,scala2024optimality}).} However, a general theory has remained largely out of reach. {\emph{Unextendible Product Bases} (UPBs) underpin a range of methods for constructing positive indecomposable maps in quantum information theory. DiVincenzo et al. \cite{divincenzo2003unextendible} define UPBs using orthogonality graphs, tiling patterns, and algebraic constructions; they show that the complement of a UPB yields a bound entangled state with a positive partial transpose that furnishes indecomposable maps. Terhal \cite{terhal2001family} leverages these UPBs to generate positive maps that resist decomposition by standard methods. Bravyi \cite{Bravyi2004Unextendible} classifies three‐qubit UPBs and demonstrates that uncompletable product bases yield bound entangled states, while Alon and Lovász \cite{alon2001unextendible} employ graph‐theoretic techniques to construct multipartite UPBs and derive bounds from orthogonality graphs. Several studies focus on explicit algebraic constructions of positive indecomposable maps: Chruściński and Kossakowski \cite{chruscinski2006class, Chruscinski2007On} introduce classes of maps based on cyclic bistochastic matrices and generalizations of the Choi map, as well as atomic maps characterized by their action on rank one projectors. Sarbicki and Chruściński \cite{Sarbicki2012class} construct exposed indecomposable maps in $2n \times 2n$ matrices using reduction maps, generalizing the Robertson map. Ha and Kye \cite{Ha2016Construction} present parameterized families of exposed indecomposable maps, with explicit conditions for exposedness and the bi-spanning property. Collins, Hayden, and the second author \cite{collins2015random} employ random matrix and free probability techniques, constructing $k$-positive maps via free convolution and random matrix models. Marciniak and Rutkowski \cite{Marciniak2017Merging} introduce a geometric ``merging'' technique, systematically combining two positive maps with additional operators and functionals to produce new positive, often exposed, indecomposable maps. Müller-Hermes \cite{Muller2018Decomposability} uses symmetrization and tensor product techniques, particularly with Werner maps, to analyze decomposability under tensor powers and construct new non-decomposable maps. Siudzińska \cite{Siudzinska2022Indecomposability} uses symmetric measurements (positive operator-valued measures) and operator bases to construct positive maps and indecomposable entanglement witnesses. 
}

Other than or beyond the PPT criterion, the study of bipartite entanglement has also motivated the study of the notion of \emph{extendibility} \cite{DPS04,christandl2007one,navascues2009power,doherty2014entanglement,harrow2017improved}. In particular, the notion of \emph{bosonic (or Bose-symmetric) extendibility} plays a central role in the study of quantum entanglement, providing one of the most powerful and systematic separability criteria. Introduced in the seminal work of Doherty, Parrilo, and Spedalieri \cite{DPS04}, the hierarchy of $r$-extendible states forms an increasing sequence of convex sets that converges to the set of separable states. This completeness ensures that every entangled state eventually fails some level of the hierarchy. Moreover, each level can be efficiently tested using semidefinite programming (SDP), making the DPS hierarchy not only theoretically elegant but also computationally tractable.

A stronger version of this idea arises when one focuses on bipartite \emph{bosonic} systems. In this setting, one may further require that the Bose-symmetric extensions simultaneously include both original subsystems, leading to what is called \emph{$r$-PPT bosonic exchangeability} \cite{christandl2007one} or \emph{PPT extendible with respect to the complete graph $\mathcal{K}_r$} \cite{ACG+23+}. This refinement yields a sharper and still complete hierarchy of tests that interpolate between the PPT and separability conditions. Thanks to the monogamy of entanglement \cite{terhal2004entanglement}, a state passing all these tests behaves as if it could be consistently extended to many identical copies while maintaining only a weak amount of entanglement across any bipartition. Despite the natural formulation of this criterion, explicitly constructing entangled states that satisfy such strong extendibility conditions remains a challenging problem, even for the simplest cases \cite{DPS04}.

A powerful approach to understanding cones of maps and states is through \emph{duality}. The cone of positive maps is dual to the cone of separable states, while the cone of decomposable maps is dual to the cone of PPT states \cite{eom2000duality, skowronek2009cones, kye2023compositions}. Duality also connects map properties to cones of matrices. The cone of \emph{completely positive matrices} $\CP_n$ \cite{berman2003completely}, consisting of matrices that can be factored as $XX^\top$ for some entrywise non-negative (rectangular) matrix $X$, has as its dual the cone of \emph{copositive matrices} $\COP_n$ \cite{shaked2021copositive}. A real symmetric matrix $M$ is in $\COP_n$ if $\la x, Mx \ra \geq 0$ for all non-negative vectors $x\in \R{n}_+$. Just like the set of positive maps, the cone $\COP_n$ is notoriously difficult to characterize, with membership being a co-NP-complete problem \cite{murty1987some}. The subset of $\COP_n$ comprising matrices that are a sum of a positive semidefinite and an entrywise non-negative matrix, denoted $\SPN_n$, is tractable and corresponds to the first level of a sum-of-squares (SOS) hierarchy that approximates $\COP_n$. The strict containment $\COP_n \supsetneq \SPN_n$ for $n \geq 5$, exemplified by the famous Horn matrix \cite{shaked2021copositive}, mirrors the distinction between positive and decomposable maps. Let us point out here that the use of copositive matrices for entanglement detection dates back to \cite{marconi2021entangled}, see also \cite{marconi2023entanglement,romero2025multipartite, marconi2025symmetric}. On the dual side, the connection between completely positive matrices and separability dates back to \cite{yu2016separability, tura2018separability}, see also \cite{johnston2019pairwise,singh2021diagonal,singh2021entanglement,berman2023completely,gulati2025entanglement,gulati2025witnessing}. Finally, the dual of (PPT-) bosonic extendibility hierarchy admits a natural interpretation in terms of sum-of-squares (SOS) representations of Hermitian forms \cite{DPS04,fang2021sum}. This duality connects the feasibility of an $r$-extendibility test to the existence of SOS certificates of block-positivity for corresponding entanglement witnesses. In this context, a fundamental open question was raised in \cite[Sec. VI]{DPS04}, which we state informally, 
\begin{center}
    Can the cone of entanglement witnesses (i.e.~block-positive operators or elements of the dual of the separable set) be obtained as the union of the witnesses of these DPS levels?
\end{center}
We refer to \cref{sec:DPS} for further details. Any counterexample to this question could be used to detect entanglement of quantum states that are $r$-PPT bosonic extendible for arbitrarily high $r\geq 1$. While partial results are known (e.g., inclusion of its interior), a full resolution remains open, reflecting the deep interplay between convex geometry, operator algebras, and optimization in the characterization of quantum entanglement.

On the other hand, a fruitful strategy for finding PPT entangled states and positive indecomposable maps is to impose \emph{symmetries} on the states and maps under consideration \cite{vollbrecht2001entanglement,eggeling2001separability,singh2021diagonal,park2024universal}. A particularly rich and physically relevant class is that of \emph{diagonal unitary covariant} (DUC) maps \cite{liu2015unitary, singh2021diagonal}, which commute with conjugations by diagonal unitary matrices. Such maps, and their close relatives the \emph{conjugate diagonal unitary covariant} (CDUC) maps, have recently been the subject of intense study due to their structured algebraic properties \cite{singh2022ppt, kennedy2018composition, singh2024ergodic}. Recent work established a fundamental isomorphism between the space of DUC/CDUC maps and the vector space of \emph{pairs of matrices} $(A,B)$ sharing a common diagonal \cite{johnston2019pairwise, singh2021diagonal}. Within this framework, a cone of pairwise completely positive matrices $\PCP_n$ was introduced, which was shown to correspond precisely to the class of \emph{entanglement breaking} DUC/CDUC maps. This breakthrough provided a complete description of the strongest form of positivity for these maps. However, a corresponding characterization for general positivity was left as a significant open problem. This gap motivates the central questions of this paper:
\smallskip
\begin{enumerate}
\renewcommand{\labelenumi}{\boxed{{\rm Q}\arabic{enumi}}}
\setlength{\itemsep}{.5em}
    \item Can we define and characterize a cone of matrix pairs that corresponds precisely to the set of all positive DUC/CDUC maps?
    \item Can we leverage the well-understood theory of classical matrix cones, such as $\COP_n$ and $\SPN_n$, to systematically construct elements in these new pairwise cones, similarly to the correspondence between the $\CP_n$ and the $\PCP_n$ cones? 
    \item Can such a framework be applied to generate vast, explicit families of positive indecomposable maps, thereby providing new tools for entanglement detection?
    \item Can these findings be further leveraged to construct families of entangled quantum states that nevertheless display separability-like properties, such as being PPT or possessing high-order symmetric extensions? More precisely, are there explicit examples of entangled bosonic states that are $\mathcal{K}_r$-PPT bosonic extendible? On the dual side, can we construct entanglement witnesses that admit no SOS certificate of any finite order?
    
\end{enumerate}

This paper provides affirmative answers to all these questions. We introduce a comprehensive framework for understanding positivity and decomposability of DUC/CDUC maps by defining and analyzing novel cones of matrix pairs. Our main contributions are as follows:
\begin{itemize}\setlength{\itemsep}{.5em}
    \item \underline{The Pairwise Copositive Cone}: We introduce in \cref{sec:COPCP} the cone of \emph{pairwise copositive matrices}, denoted $\COPCP_n$. Our first main result is that this cone is the dual of the $\PCP_n$ cone and provides the sought-after characterization: a DUC/CDUC map parameterized by a pair $(A,B)$ is positive if and only if $(A,B) \in \COPCP_n$. This completely resolves the problem of characterizing positivity for this class of maps, answering \boxed{{\rm Q}1}.

    \item \underline{A bridge between the classical and pairwise cones}: We establish a powerful constructive result, Theorem \ref{thm:COPCP-from-COP}, which provides a canonical way to ``lift'' any matrix $M \in \COP_n$ to a family of pairs in $\COPCP_n$. This theorem enables the transfer results about copositive matrices into the domain of operator algebra and quantum information theory. These results, discussed in \cref{sec:positivity}, and the results discussed in the next point (see \cref{sec:decomposability}) answer \boxed{{\rm Q}2}.

    \item \underline{Characterization of (in)decomposability}: We extend our analysis to decomposability by introducing the cone of \emph{pairwise decomposable matrices}, $\PDEC_n$. We show this cone precisely characterizes decomposable DUC/CDUC maps and prove an analogous constructive theorem (Theorem \ref{thm:PDEC-from-SPN}) that links $\PDEC_n$ to the tractable cone $\SPN_n$. In particular, for each matrix in $\COP_n \setminus \SPN_n$, we construct an \emph{$(n^2 - n)/2$-dimensional family of positive indecomposable linear maps} (Corollary~\ref{cor:PosIndecomp}). Interestingly, we further show that when the underlying copositive matrix is \emph{extremal}, the lifted family translates directly into a family of \emph{optimal} entanglement witnesses, yielding an $(n^2 - n)/2$-dimensional collection of “edge” objects (Corollary~\ref{cor:OptimalEW}). The gap between $\COPCP_n$ and $\PDEC_n$ thus corresponds to the domain of indecomposable positive maps, and our results bridge this gap to the one between $\COP_n$ and $\SPN_n$, which has been studied extensively in the literature. {Markov-Choi maps (discussed in \cref{sec:XJ}) correspond to LDUI maps having a special form. We characterize positivity, decomposability, and other properties of such maps in Proposition \ref{prop:equivalent-MC-positive}. In Propositions \ref{prop:MC-PDEC1} and \ref{prop:MC-PDEC2}, we further provide new conditions for identifying positive indecomposable linear maps within this family. As a consequence, in Section \ref{sec:MC1} and Examples \ref{ex:MC-PosIndecomp1}, \ref{ex:MC-PosIndecomp2}, and \ref{ex:MC-PosIndecomp3}, we present several examples of indecomposable Markov–Choi maps that significantly extend the known classes described in \cite{kye1992,cho1992,ha2003class}.
    }

    \item \underline{Positive maps from graphs}: As a primary application of our framework, we introduce in \cref{sec:graphs} a one-parameter family of linear maps $\Phi_t^G$ associated with any simple graph $G$. These maps act like the completely depolarizing channel on the diagonal of the input, and as a scaled Schur product on the off-diagonal entries of the input corresponding to edges in $G$; a similar construction has been proposed in \cite{kennedy2018composition}. We derive sharp, analytic thresholds for the parameter $t$ that separate the regimes of complete positivity, decomposability, and positivity. These thresholds are expressed in terms of fundamental graph parameters of the underlying graph $G$: the \emph{largest eigenvalue} $\lambda(G)$, the \emph{clique number} $\omega(G)$, and a new graph parameter $\sigma(G)$ which we define and relate to the $\SPN_n$ cone; see \cref{thm:graph-map-properties}. This general constructions serves as a tool for generating new positive indecomposable maps. We prove in Proposition \ref{prop:triangle-free-bipartite} that any triangle-free non-bipartite graph yields such maps; this is the case for odd cycles, see Proposition \ref{prop:cycle-graph}. More significantly, by exploiting the rich algebraic structure of rank 3 \emph{strongly regular graphs}, we derive an exact formula for the parameter $\sigma(G)$ in \cref{thm:sigma-rank3-SRG}. This allows us to prove that infinite families of strongly regular graphs, including the classical \emph{Paley graphs} (see Corollary \ref{cor:Paley}), generate positive indecomposable maps, providing a wealth of new, highly structured examples and thus answering \boxed{{\rm Q}3}.

    \item \underline{Connections to quantum state extendibility}: In the final part of the paper, we connect our results to hierarchies of quantum state extendibility. We show in Proposition \ref{prop:Ext-to-SOS}, \cref{thm:PPTExtDual,thm:Ext-from-SOS} that the well-known \emph{sum-of-squares} (SOS) hierarchies for copositive matrices ($\K_n^{(r)}$, see \cref{sec:graphs} and also \cite{deklerk2002approximation,schrijver2003comparison}) correspond directly to witnesses for different levels of PPT-bosonic extendibility of quantum states, such as (mixture of) Dicke states. In particular, such connections allow us to construct a surprisingly wide family of entanglement witnesses which do not have SOS properties of any order (\cref{thm:EWnotSOS}), thus providing counterexamples to the question dating back to \cite{DPS04}. Furthermore, on the dual side, we completely characterize the PPT bosonic extendibility of Dicke states (with respect to the complete graph $\mathcal{K}_{r+2}$) in terms of the dual set $(\mathsf{K}_n^{(r)})^{\circ}$ (\cref{thm:DickeExt}), which allows us to provide explicit examples of bosonic entangled states that are arbitrarily highly extendible whenever the local dimension $n$ is not less than $5$. This answers \boxed{{\rm Q}4}. As a byproduct, we also prove in Corollary \ref{cor:MultiDickeEnt} the existence of $r$-partite entangled bosonic (Dicke) states of which are PPT with respect to all bipartitions.
\end{itemize}

\bigskip

The paper is organized as follows. \cref{sec:preliminaries} establishes our notation and reviews the necessary preliminaries on matrix cones and positive maps. \cref{sec:COPCP} introduces the central object of study, the pairwise copositive cone $\COPCP_n$, and proves its duality with $\PCP_n$ as well as some other basic properties. \cref{sec:positivity}  establishes the main constructive theorem linking the $\COPCP_n$ cone to $\COP_n$, hence providing a characterization of positivity for large classed of (C)DUC linear maps. \cref{sec:decomposability} develops the analogous theory for the pairwise decomposable cone $\PDEC_n$. In \cref{sec:XJ} we discuss Markov-Choi linear maps and their positivity properties. \cref{sec:graphs} applies this framework to define the graph-based maps and determines the exact positivity and decomposability thresholds; we also specialize this analysis to the case of strongly regular graphs. Finally, in \cref{sec:sos-hierarchies} we put forward  the remarkable connection between hierarchies of quantum state extendibility and  hierarchies of sum-of-squares programs for membership in the $\COP_n$ cone.

\section{Preliminaries}\label{sec:preliminaries}
\subsection{Notation}

We start by defining the notation used throughout this paper. A vector $v$ is an element in either \(\mathbb{C}^n\) or \(\mathbb{R}^n\). We sometimes also use Dirac's \emph{bra-ket} notation to write vectors. In this notation, column vectors $v \in \mathbb{C}^n$ are written as kets $\ket{v}$ and their dual row vectors (conjugate transposes) $v^* \in (\mathbb{C}^n)^*$ are written as bras $\bra{v}$. The standard \emph{inner product} $v^*w= \braket{v,w}$ on $\mathbb{C}^n$ is denoted by $\langle v | w \rangle$ and the rank one matrix $vw^*$ is denoted by the \emph{outer product} $\ketbra{v}{w}$. The standard basis in $\C{n}$ is denoted by $\{ \ket{i}\}_{i\in [n]}$, where $[n]:= \{0,1,\ldots ,n-1 \}$.

We define $\M{n}$ as the set of \(n \times n\) complex matrices and $\Msa{n}:= \{A\in \M{n} : A=A^* \}$ as the set of $n\times n$ self-adjoint complex matrices, where the conjugate transpose of $A\in \M{n}$ is denoted by $A^*$. $\Mreal{n}$ and $\Mrealsa{n}$ are defined similarly for real matrices. The cone of positive semi-definite (PSD) matrices in $\M{n}$ is denoted by $\PSDC_n$, while the cone of real positive semidefinite matrices in $\Mreal{n}$ is denoted by $\PSDR_n$. The cone of entry-wise non-negative matrices by $\EWP_{n}$. The set of all linear maps $\Phi : \M{n}\to \M{n}$ is denoted by \(\mathcal{T}_n(\mathbb{C})\). In this paper, we shall overload the $\diag$ notation to denote three different things:
\begin{itemize}
    \item For a matrix $A \in \M{n}$, $\diag(A) \in \M{n}$ denotes the diagonal matrix obtained by setting the off-diagonal entries of $A$ to zero. In other words, $\diag(\cdot)$ is the conditional expectation from $n \times n$ matrices to the set of diagonal matrices.
    \item For a matrix $A \in \M{n}$, $\diag[A] \in \C{n}$ denotes the vector containing the diagonal elements of the matrix $A$.
    \item For a vector $a \in \C{n}$, $\diag\{a\} \in \M{n}$ denotes the diagonal matrix with the entries of the vector $a$ on the diagonal.
\end{itemize}

In general, the meaning of the operator can be inferred from the context, but we shall consistently use the appropriate brackets for the sake of clarity. We use ${I}_n$ to denote the identity matrix, and ${J}_n$ to denote the matrix with all entries equal to $1$. We define the vector of all ones as $\ket{\mathbf{1}_n} := \sum^n_{i=1} \ket{i}$ such that $\ketbra{\mathbf{1}_n}{\mathbf{1}_n} = {J}_n$. Finally, we introduce a specific notation that will be used frequently throughout this paper: \emph{the removal of the diagonal part} of a matrix, denoted by
    $$\mathring{A}:=A-\diag(A)=\sum_{i\neq j}A_{ij}|i\ra\la j|, \quad A\in \M{n}.$$

\subsection{Convex properties of matrices}

In this section, we discuss some important convex conic subsets of matrices. Recall that the cone of \emph{real positive-semidefinite matrices} can be characterized in two equivalent ways,

$$\PSDR_n = \operatorname{cone}\Big\{ \ketbra {v}{v}  \, : \, v \in \mathbb R^n \Big\} = \Big\{ X \in \mathcal M^\mathrm{sa}_n(\mathbb R) \, : \, \braket{v | X | v} \geq 0, \, \forall v \in \mathbb R^n \Big\}.$$ 
We use $\mathbb{R}^n_+$ to mean the set of all vectors with non-negative entries in the computational basis. Similarly, we denote by $\EWP_n$ the set of entrywise non-negative $n \times n$ matrices 
$$\EWP_n := \Big\{ X \in \mathcal M_n(\R{}) \, : \, X_{ij} \geq 0, \, \forall i,j \in [n] \Big\},$$
as well as its symmetric counterpart $\EWP_n^\sa:=\EWP_n \cap \Mrealsa{n}$.

Next, we define set of \emph{completely positive matrices} \cite{berman2003completely} 

\[\mathsf{CP}_n := \operatorname{cone}\Big\{ \ketbra v  \, : \, v \in \mathbb R^n_+ \Big\}.\]
The extremal rays of $\mathsf{CP}_n$ are easily characterized: \[\operatorname{ext}\mathsf{CP}_n = \{\mathbb{R}_+ \ketbra v \, : \, v \in \R{n}_+\}.\]
It is clear from the definition above that completely positive matrices are positive-semidefinite and entrywise positive; we call the latter set \emph{doubly non-negative matrices}. Thus we have the inclusion,
\begin{equation}\label{eq:CP-subseteq-DNN}
    \CP_n \subseteq \DNN_n:=\PSDR_n \cap \EWP_n.
\end{equation}
It is a remarkable fact that the above inclusion is an equality for $n \leq 4$ \cite[Theorem 2.4]{berman2003completely}, but it is strict for $n \geq 5$ . Below is an example of a doubly non-negative matrix that is not completely positive, see \cite[Example 2.4]{berman2003completely}:
$$X = 
\begin{bmatrix}
1 & 1 & 0 & 0 & 1 \\
1 & 2 & 1 & 0 & 0 \\
0 & 1 & 2 & 1 & 0 \\
0 & 0 & 1 & 2 & 1 \\
1 & 0 & 0 & 1 & 6
\end{bmatrix}.$$

We also introduce the set of \emph{copositive matrices} \cite{shaked2021copositive}  
$$\mathsf{COP}_n := \Big\{ X \in \mathcal M^\mathrm{sa}_n(\mathbb R) \, : \, \braket{v | X | v} \geq 0, \, \forall v \in \mathbb R^n_+ \Big\}$$ 
which is dual to the cone of completely positive matrices $\mathsf{CP}_n^\circ = \mathsf{COP}_n$. We recall here that the dual of the cone $C \subseteq \R{N}$ is the cone
$$C^\circ := \{x \in \R{N} \, : \, \braket{x,y} \geq 0, \, \forall y \in C\}.$$
The complete characterization of the extremal rays for copositive matrices is still an open question, and the complete solution has only been found up to dimension $6$ \cite{hildebrand2012extreme,afonin2021extreme}.

In particular, the cone of positive-semidefinite matrices is self-dual $\mathsf{PSD}_n^{\circ} = \mathsf{PSD}_n$, as are the cones on entrywise non-negative matrices $\EWP_n^\circ = \EWP_n$, $(\EWP_n^\sa)^\circ = \EWP_n^\sa$. The dual of doubly non-negative matrices is the set
\begin{equation}\label{eq:def-SPN}
    \SPN_n := \PSDR_n + \EWP^\sa_n.
\end{equation}
Applying duality to \cref{eq:CP-subseteq-DNN} we obtain the reversed inclusion $\COP_n \supseteq \SPN_n$, which is an equality for $n=4$ and strict for $n \geq 5$, see \cite[Chapters 2.9, 2.10]{shaked2021copositive}. The paradigmatic example of a copositive matrix that cannot be decomposed as the sum of a real positive semidefinite matrix and an entrywise (symmetric) non-negative matrix is the \emph{Horn matrix}
\begin{equation}\label{eq:Horn}
    H := 
    \begin{bmatrix}
        1 & -1 & 1 & 1 & -1 \\
        -1 & 1 & -1 & 1 & 1 \\
        1 & -1 & 1 & -1 & 1 \\
        1 & 1 & -1 & 1 & -1 \\
        -1 & 1 & 1 & -1 & 1
    \end{bmatrix}.
\end{equation}

Combining all the following observations and dualities, we obtain the following chain of inclusions of cones, where the blue arrows indicate a duality relation:
\vspace{-15pt}
$$\begin{tikzpicture}[
node distance = 5pt,
  start chain = A going right,
    inner sep = 0pt,
    outer sep = 0pt,
every node/.style = {on chain=A}                       
                        ]
\node{$\CP_n$}; 
\node{$\subseteq$}; 
\node{$\DNN_n$}; 
\node{$\subseteq$}; 
\node{$\PSDR_n,\, $}; 
\node{$\EWP^\sa_n$}; 
\node{$\subseteq$}; 
\node{$\SPN_n$}; 
\node{$\subseteq$}; 
\node{$ \COP_n\,.$}; 

    \begin{scope}[
        every path/.append style = {<->, blue},
                  ]
                  
\draw ([yshift=5pt]A-1.north) to [out=90,in=90]([yshift=5pt, xshift=-2pt]A-10.north);
\draw ([yshift=5pt]A-3.north) to [out=90,in=90] ([yshift=5pt]A-8.north);
\draw ([yshift=5pt, xshift=-5pt]A-5.north) to [out=135,in=45, loop, looseness=10] ([yshift=5pt, xshift=2pt]A-5.north);
\draw ([yshift=5pt, xshift=-5pt]A-6.north) to [out=135,in=45, loop, looseness=10] ([yshift=5pt, xshift=2pt]A-6.north);
    \end{scope}
\end{tikzpicture}$$

Note that all the cones above are subsets of $\Mrealsa{n}$ and are \emph{proper cones}: they are closed, have non-empty interior, and are pointed (they contain no line). 
\bigskip

We provide a generalization of the completely positive cone $\CP_n$ to matrix pairs recently introduced and studied in \cite{johnston2019pairwise,singh2021diagonal}. Consider the convex cone of \emph{pairwise completely positive matrices} \cite[Definition 3.1]{johnston2019pairwise}:

$$\mathsf{PCP}_n := \operatorname{cone}\Big\{ \big(\ketbra{v \odot \bar v}{w \odot \bar w}, \ketbra{v \odot w}{v \odot w} \big) \, : \, v,w \in \mathbb C^n \Big\}$$
where $\odot$ denotes the Hadamard (or entrywise) product of vectors: $(v \odot w)_i = v_i w_i$. This cone lives in the vector space of pairs of matrices having the same (real) diagonal (see \cite{singh2021diagonal})
$$\Mreal{n} \underset{\R{n}}{\times} \Msa{n} := \Big\{ (A,B) \in \Mreal{n} \times \Msa{n} \, : \, \diag[A] = \diag[B]\Big\}.$$
as a \emph{proper cone}. Indeed, we will see later that $(J_n+A,I_n+B)\in \PCP_n$ whenever 
    $$\|(A,B)\|_2^2:=\|A\|_2^2+\|\mathring{B}\|_2^2= \sum_{i,j}|A_{ij}|^2+\sum_{i\neq j}|B_{ij}|^2\leq 1,$$
which implies that $(J_n,I_n)\in {\rm int}(\PCP_n)$, from the correspondence with the separability of diagonal unitary invariant states (Proposition~\ref{prop:PCP-EB}) and the separability criterion in~\cite[Theorem~1]{gurvits2002largest}. Moreover, it is obvious that $\PCP_n$ cannot contain any line, proving that it is a proper cone. 

The extremal rays of the convex cone $\mathsf{PCP}_n$ have been characterized in \cite[Theorem 5.13]{singh2021diagonal}:
\begin{equation}\label{eq:ext-PCP}
    \operatorname{ext}\mathsf{PCP}_n =  \Big\{ \mathbb{R}_+\big(\ketbra{v \odot \bar v}{w \odot \bar w}, \ketbra{v \odot w}{v \odot w} \big) \, : \, v,w \in \mathbb C^n \Big\}.
\end{equation}

The connection between the sets $\CP_n$ and $\PCP_n$ is given by the following result, proven in \cite[Theorem 3.4]{johnston2019pairwise}.

\begin{proposition}\label{prop:PCP-equal}
Given a matrix $A \in \Mrealsa{n}$, the following equivalence holds: 
    \[(A,A) \in \PCP_n \iff A \in \CP_n.\]
\end{proposition}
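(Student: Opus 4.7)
The plan is to treat the two implications separately: the ($\Leftarrow$) direction is nearly definitional, while the ($\Rightarrow$) direction rests on a single Cauchy--Schwarz argument, tested against the all-ones vector, which extracts enough rigidity to collapse each generator of the PCP expansion into a rank-one summand of a CP factorization of $A$.

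For the direction $A \in \CP_n \Rightarrow (A, A) \in \PCP_n$, I would take a completely positive decomposition $A = \sum_k \ketbra{u_k}$ with $u_k \in \R{n}_+$ and set $v_k = w_k := \sqrt{u_k}$ (entrywise square root, which is legitimate since each $u_k$ has non-negative entries). With this choice one computes $v_k \odot \bar v_k = w_k \odot \bar w_k = v_k \odot w_k = u_k$, so the $\PCP_n$-generator attached to the pair $(v_k, w_k)$ is exactly $(\ketbra{u_k}, \ketbra{u_k})$, and summing in $k$ yields $(A, A) \in \PCP_n$.

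For the direction $(A, A) \in \PCP_n \Rightarrow A \in \CP_n$, I would fix an arbitrary conic expansion
\[
(A, A) = \sum_k \big( \ketbra{v_k \odot \bar v_k}{w_k \odot \bar w_k}, \ketbra{v_k \odot w_k}{v_k \odot w_k} \big),
\]
and then evaluate $\bra{\mathbf{1}_n} A \ket{\mathbf{1}_n}$ using each of the two components. Because $\braket{\mathbf{1}_n | v \odot \bar v} = \|v\|^2$, the first component yields $\sum_k \|v_k\|^2 \|w_k\|^2$, while the second yields $\sum_k |\braket{\mathbf{1}_n | v_k \odot w_k}|^2 = \sum_k |\braket{\bar v_k | w_k}|^2$. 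Termwise Cauchy--Schwarz gives $|\braket{\bar v_k | w_k}|^2 \leq \|v_k\|^2 \|w_k\|^2$, and the equality of the two sums forces saturation for every $k$, so $w_k = \mu_k \bar v_k$ for some $\mu_k \in \C{}$. Consequently $v_k \odot w_k = \mu_k |v_k|^2$ is a scalar multiple of a vector in $\R{n}_+$, and the second component of the expansion rewrites as $A = \sum_k |\mu_k|^2 \ketbra{|v_k|^2}$, an explicit CP factorization of $A$.

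The crux of the argument --- and the only step that is not routine bookkeeping --- is recognizing that evaluating against $\ket{\mathbf{1}_n}$ converts the two components of the PCP identity into precisely the two sides of Cauchy--Schwarz, so that the given sum equality upgrades automatically to termwise saturation. Once this rigidity is in hand the CP factorization reads off directly from the second component, with no need to invoke the strict containment $\CP_n \subsetneq \DNN_n$ that obstructs deducing $A \in \CP_n$ from mere double non-negativity.
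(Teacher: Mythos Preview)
Your proof is correct. The paper itself does not supply a proof of this proposition but only cites an external reference (Johnston--Patterson, Theorem~3.4). Your argument --- in particular the $(\Rightarrow)$ direction, where evaluating both components of the $\PCP_n$ expansion against $\ket{\mathbf{1}_n}$ converts the identity into a sum of termwise Cauchy--Schwarz inequalities that are then forced to saturate --- is clean and self-contained, and furnishes a complete proof where the paper gives only a citation.
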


The convex dual of the cone $\PCP_n$ will be introduced and studied in detail in \cref{sec:COPCP}, with the goal of constructing positive maps.

\subsection{Convex properties of linear maps}
In this section, we provide a background on properties of linear maps that play a significant role in entanglement theory.  

\begin{definition}

We define the following notions of linear maps,  
\begin{enumerate}
\item A linear map \(\mathcal{E} \in \mathcal{T}_n(\mathbb{C})\) is called \textit{positive} if \(\mathcal{E}(X) \in \PSDC_{n}\) for all \(X \in \PSDC_{n}\).

\item A linear map \(\mathcal{E} \in \mathcal{T}_n(\mathbb{C})\) is \textit{\(k\)-positive} if the map \(\operatorname{id}_{k} \otimes \, \mathcal{E} : \mathcal{M}_{k} \otimes \mathcal{M}_n \rightarrow \mathcal{M}_{k}\otimes \mathcal{M}_n\) is positive, where $\operatorname{id}_{k}:\M{k}\to \M{k}$ is the identity map.
\end{enumerate}
\end{definition}

A positive map is trivially $1$-positive. 

\begin{definition}
Let $\top$ be the transposition map. We define the following properties of linear maps,
\begin{enumerate}
    \item A linear map \(\mathcal{E} \in \mathcal{T}_n(\mathbb{C})\) is completely positive if it is $k$-positive for all $k \in \mathbb{N}$.
    \item A linear map \(\mathcal{E} \in \mathcal{T}_n(\mathbb{C})\) is called \textit{completely copositive} if \( \mathcal{E}\circ \top\) is completely positive, or equivalently, $\mathcal{E} = \mathcal{X}\circ \top$ where $\mathcal{X}$ is completely positive.
    \item A linear map \(\mathcal{E} \in \mathcal{T}_n(\mathbb{C})\) is called \textit{PPT}  if it is completely positive and completely copositive.
    \item  A completely positive linear map $\mathcal{E} \in \mathcal{T}_n(\mathbb{C})$ is entanglement breaking if $\mathcal{E} \otimes \operatorname{id}_n (X)$ is separable for all $X \in {\mathsf{PSD}(\Comp^n\otimes \Comp^n)}$
\end{enumerate}
\end{definition}

A linear map \(\mathcal{E} \in \mathcal{T}_n(\mathbb{C})\) that is $n$-positive is completely positive. Positive, Completely positive maps, and PPT maps are closed under composition, while the completely copositive maps are not. 

\begin{definition}
    A linear map \(\mathcal{E} \in \mathcal{T}_n(\mathbb{C})\) is said to be decomposable if it a sum of completely positive and a copositive map, i.e 
    $\mathcal{E} = \mathcal{E}_1 + \mathcal{E}_2\circ \top$ where $\mathcal{E}_1, \mathcal{E}_2$ are completely positive
\end{definition}

The paradigmatic examples of decomposable map is the identity map $\operatorname{id}$, the transposition map $\top$, and the depolarizing map $\mathcal{E}(\rho) = \operatorname{Tr}(\rho)I$, and their convex sums. The convex properties of linear maps have dual connection to the convex properties of bipartite matrices. We provide these well-known duality conditions in the next theorem \cite{horodecki1996separability,kye2023compositions}. 

\begin{theorem}
\label{thm:duality-of-bipartite-states}
    The following duality conditions are true for bipartite quantum states $\rho \in \M{n} \otimes \M{n}$ and linear maps in $\mathcal{T}_n(\mathbb{C})$.

    \begin{itemize}
        \item $\rho$ is positive-semidefinite if and only if 
        $\mathcal{E} \otimes \operatorname{id}_n{(\rho)} \geq 0$ for all completely positive maps $\mathcal{E}\in \mathcal{T}_n(\Comp)$.
        \item $\rho$ is PPT if and only if $\mathcal{E} \otimes \operatorname{id}_n{(\rho)} \geq 0$ for all decomposable maps $\mathcal{E}\in \mathcal{T}_n(\Comp)$.
        \item $\rho$ is separable if and only if $\mathcal{E} \otimes \operatorname{id}_n{(\rho)} \geq 0$ for all positive maps $\mathcal{E}\in \mathcal{T}_n(\Comp)$.
    \end{itemize}
\end{theorem}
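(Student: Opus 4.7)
The plan is to prove the three equivalences separately, using the Choi--Jamio\l{}kowski isomorphism together with a Hahn--Banach separation argument for the hardest direction. I would organize the proof from the easiest to the hardest of the three equivalences.

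For the first equivalence, the forward direction ($\rho\geq 0$ implies $(\mathcal{E}\otimes \id_n)(\rho)\geq 0$ for all CP $\mathcal{E}$) is immediate from the definition of complete positivity. For the reverse direction, I would plug in $\mathcal{E}=\id_n$, which is trivially completely positive, to recover $\rho = (\id_n\otimes \id_n)(\rho)\geq 0$.

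For the second equivalence, the forward direction is a direct computation. Writing $\mathcal{E}=\mathcal{E}_1+\mathcal{E}_2\circ \top$ with $\mathcal{E}_1,\mathcal{E}_2$ completely positive, I would split
\[
(\mathcal{E}\otimes \id_n)(\rho) = (\mathcal{E}_1\otimes \id_n)(\rho) + (\mathcal{E}_2\otimes \id_n)\bigl((\top\otimes \id_n)(\rho)\bigr),
\]
where the first summand is PSD because $\rho\geq 0$ and $\mathcal{E}_1$ is CP, and the second is PSD because $(\top\otimes \id_n)(\rho)\geq 0$ (by the PPT assumption) and $\mathcal{E}_2$ is CP. For the converse, I would plug in both $\mathcal{E}=\id_n$ and $\mathcal{E}=\top$ (both of which are decomposable) to conclude that $\rho\geq 0$ and $\rho^{T_1}\geq 0$.

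The third equivalence is the classical Horodecki entanglement criterion, and its converse direction is the main obstacle. The forward implication follows from decomposing a separable state as $\rho=\sum_i p_i \ketbra{a_i}\otimes \ketbra{b_i}$ and using the positivity of $\mathcal{E}$ on each $\ketbra{a_i}$. For the hard converse, I would argue by contrapositive: assuming $\rho$ is not separable, the Hahn--Banach separation theorem produces a Hermitian operator $W\in (\M{n}\otimes \M{n})^{\sa}$ with $\Tr(W\rho)<0$ and $\Tr(W\sigma)\geq 0$ for every separable $\sigma$. Via the Choi--Jamio\l{}kowski isomorphism, $W$ corresponds to a linear map $\mathcal{E}_W:\M{n}\to \M{n}$ such that $W=(\mathcal{E}_W\otimes \id_n)(\ketbra{\Omega})$ where $\ket{\Omega}=\sum_i \ket{ii}$. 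The positivity of $\Tr(W\sigma)$ on all product states $\sigma=\ketbra{a}\otimes \ketbra{b}$ translates precisely into positivity of $\mathcal{E}_W$ on rank-one projections $\ketbra{\bar b}$, hence on all of $\PSDC_n$. Finally, the inequality $\Tr(W\rho)<0$ rewrites as a non-positive matrix element of $(\mathcal{E}_W\otimes \id_n)(\rho)$, exhibiting the required witness. Since the details of this argument are standard and already carefully written down in \cite{horodecki1996separability,kye2023compositions}, I would simply refer the reader there rather than reproduce them.
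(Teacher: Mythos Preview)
Your sketch is correct and follows the standard arguments. Note, however, that the paper does not actually give a proof of this theorem: it is stated as a well-known result with references to \cite{horodecki1996separability,kye2023compositions}, which are precisely the sources you invoke for the Horodecki direction. So your proposal is strictly more detailed than what the paper provides, and entirely compatible with it.

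One small point to tidy up if you decide to write out the third equivalence in full rather than cite: from $\Tr(W\rho)<0$ with $W=J(\mathcal{E}_W)$ you obtain $\bra{\Omega}(\mathcal{E}_W^{*}\otimes \id_n)(\rho)\ket{\Omega}<0$, so the positive map that actually witnesses the failure is the adjoint $\mathcal{E}_W^{*}$ (which is positive if and only if $\mathcal{E}_W$ is). This is the kind of bookkeeping detail your final sentence wisely delegates to the references.
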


\begin{remark}
    A linear map is positive if it is decomposable, and in $\mathcal{T}_2(\mathbb{C})$ all positive linear maps are decomposable. On the dual side, by \cref{thm:duality-of-bipartite-states}, it implies that all PPT states are separable. 
\end{remark} The convex properties of linear maps can also be understood in terms of the matrix properties with the Choi-Jamio{\l}kowski Isomorphism that provides a bijection between linear maps in matrix algebras to bipartite matrices. 

\begin{lemma}[Choi-Jamio{\l}kowski Isomorphism \cite{choi1975completely}]  \label{lemma:CJiso}
Define the linear bijection $J:\mathcal{T}_d(\mathbb{C}) \rightarrow \mathcal{M}_d(\mathbb{C}) \otimes \mathcal{M}_d(\mathbb{C})$ as $J(\Phi) = \sum_{i,j=1}^d \Phi(\vert i \rangle \langle j \vert) \otimes \vert i \rangle \langle j \vert$. Then, $\Phi\in \mathcal{T}_{d} (\mathbb{C})$ is
\begin{enumerate}
    \item hermiticity preserving if and only if $J(\Phi)$ is self-adjoint,
    \item positive if and only if $J(\Phi)$ is block positive, i.e., $\langle x \otimes y \vert J(\Phi) \vert x \otimes y \rangle \geq 0 \, \, \forall \ket{x}, \ket{y} \in \mathbb{C}^d$,
    \item completely positive if and only if $J(\Phi)$ is positive semi-definite,
    \item completely copositive if and only if $J(\Phi)^\Gamma$ is positive semi-definite,
    \item decomposable if and only if $\exists X_1, X_2  \geq 0 \quad \text{such that} \quad J(\Phi) = X_1 + X_2^\Gamma$
    \item entanglement breaking if and only if $J(\Phi)$ is separable.
\end{enumerate}
\end{lemma}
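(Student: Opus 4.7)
The plan is to establish that $J$ is a linear bijection and then verify each of the six equivalences, deducing most of them from Choi's theorem (item (3)) together with a few direct computations.

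First, I would observe that $\mathcal{T}_d(\Comp)$ and $\M{d}\otimes\M{d}$ are both $d^4$-dimensional $\Comp$-vector spaces, that $J$ is plainly linear, and that injectivity follows because the coefficients $\Phi(\ketbra{i}{j})$ are recoverable from $J(\Phi)$ as its $d\times d$ block entries. It is useful to note the equivalent expression $J(\Phi) = (\Phi \otimes \id_d)(\ketbra{\Omega}{\Omega})$ with $\ket{\Omega} = \sum_i \ket{i,i}$. Item (1) then follows by direct computation: $J(\Phi)^* = \sum_{ij} \Phi(\ketbra{i}{j})^* \otimes \ketbra{j}{i}$ equals $J(\Phi)$ precisely when $\Phi(\ketbra{j}{i}) = \Phi(\ketbra{i}{j})^*$ for all $i,j$, which is the basis-level statement of hermiticity preservation.

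For item (3), I would invoke Choi's theorem. One direction is immediate: if $\Phi$ is completely positive, then $\Phi \otimes \id_d$ is positive, so $J(\Phi) = (\Phi \otimes \id_d)(\ketbra{\Omega}{\Omega}) \geq 0$. For the converse, decompose $J(\Phi) = \sum_k \ketbra{v_k}{v_k}$ and identify each $\ket{v_k} \in \C{d}\otimes\C{d}$ with a matrix $A_k \in \M{d}$ via $(A_k)_{\alpha, i} := \braket{\alpha, i | v_k}$; matching matrix entries yields the Kraus form $\Phi(X) = \sum_k A_k X A_k^*$, which is manifestly completely positive. Item (2) follows from the identity $\braket{x \otimes y | J(\Phi) | x \otimes y} = \braket{x | \Phi(\ketbra{\bar y}{\bar y}) | x}$, obtained by expanding the definition of $J$: positivity of $\Phi$ on all rank-one PSD inputs is equivalent to block positivity, since $\bar y$ ranges over $\C{d}$ as $y$ does.

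The remaining items rest on the identity $J(\Phi \circ \top) = J(\Phi)^\Gamma$, where $\Gamma$ denotes the partial transpose on the second factor; this is immediate from $J(\Phi\circ\top) = \sum_{ij} \Phi(\ketbra{j}{i}) \otimes \ketbra{i}{j}$ after relabeling the summation indices. Item (4) now follows by applying (3) to $\Phi \circ \top$, and item (5) by linearity: $\Phi = \Phi_1 + \Phi_2 \circ \top$ with $\Phi_i$ completely positive translates to $J(\Phi) = X_1 + X_2^\Gamma$ with $X_i = J(\Phi_i) \geq 0$, and the converse is symmetric. For item (6), I would invoke the Holevo characterization of entanglement-breaking maps, $\Phi(X) = \sum_k \operatorname{Tr}(E_k X)\sigma_k$ with $E_k, \sigma_k \geq 0$, which directly gives $J(\Phi) = \sum_k \sigma_k \otimes E_k^\top$, hence separable; conversely, a separable decomposition $J(\Phi) = \sum_k \alpha_k \otimes \beta_k$ with PSD factors inverts, via the reconstruction formula $\Phi(X) = \operatorname{Tr}_2[J(\Phi)(I \otimes X^\top)]$, to $\Phi(X) = \sum_k \operatorname{Tr}(\beta_k^\top X)\alpha_k$, which is entanglement breaking.

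Since the entire argument reduces to Choi's theorem, simple index relabeling, and the Holevo decomposition, no step presents a genuine obstacle. The only care needed is bookkeeping the complex conjugation appearing in the block positivity formula (item (2)) and the convention of which tensor factor the partial transpose acts upon (items (4)--(6)); these conventions must then be used consistently throughout the rest of the paper.
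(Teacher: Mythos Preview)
Your proof is correct and essentially the standard argument for each item. Note, however, that the paper does not actually prove this lemma: it is stated as a well-known result with citations (Choi for (3), St{\o}rmer for (5), Horodecki--Shor--Ruskai for (6)), with no argument supplied in the text. So your proposal goes beyond what the paper does by filling in the details that the paper leaves to the literature. The arguments you give---the block-positivity identity in (2), the partial-transpose relation $J(\Phi\circ\top)=J(\Phi)^\Gamma$ for (4)--(5), and the Holevo form for (6)---are exactly the standard ones found in those references, so there is no mathematical divergence, only a difference in level of detail. Your bookkeeping on conjugation in (2) and on which tensor factor carries the partial transpose is correct and consistent with the paper's conventions.
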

In Lemma~\ref{lemma:CJiso}, part~$(5)$ appears in \cite{stormer1982}, while part~$(6)$ is due to \cite{horodecki2003entanglement}. 

\subsection{Linear maps and matrices with diagonal unitary symmetry} \label{sec:LDUI}

We provide some background about the class of quantum maps and matrices that play an important role in this paper, because of their nice symmetry properties. Let us denote the group of diagonal unitary matrices with  $D\mathcal{U}_n\subseteq \mathcal{U}_n$, i.e the group of matrices with complex phases on the diagonal.

\[
D\mathcal{U}_n \;=\; 
\left\{ \diag\{e^{i\theta_1}, \ldots, e^{i\theta_n}\}
\;\middle|\;
\theta_1, \dots, \theta_n \in \mathbb{R}
\right\}.
\]

Now, we introduce two notions of maps covariant with the diagonal unitary symmetries,  
\begin{definition}
A linear map \(\mathcal{E} \in \mathcal{T}_n(\mathbb{C})\) is called 
\begin{enumerate}
    \item \emph{diagonal unitary covariant (DUC)} if 
    \[\forall X \in \M{n} \text{ and } \forall U \in D\mathcal{U}_n, \quad \mathcal{E}(U X U^*) = U \mathcal{E}(X)U^*\]

    \item \emph{conjugate diagonal unitary covariant (CDUC)} if 
    \[\forall X \in \M{n} \text{ and } \forall U \in D\mathcal{U}_n, \quad \mathcal{E}(U X U^*) = \overline{U} \mathcal{E}(X)U^{\top} \]
\end{enumerate}
\end{definition}

These classes of linear maps were first introduced in \cite{liu2015unitary,lopes2015generic}. In \cite{johnston2019pairwise, singh2021diagonal} , the structure of these channels was characterized completely in terms of matrix pairs, with common diagonals. 
Let $\MLDUI{n} := \{(A,B) \in \M{d}^{\times 2} \mid \operatorname{diag}(A) = \operatorname{diag}(B)\}$ denote the set of such matrix pairs.  

\begin{proposition}
The linear space of (conjugate) diagonal unitary covariant channels is isomorphic to matrix pairs, $\MLDUI{n} \cong \mathsf{DUC}_n \cong \mathsf{CDUC}_n$, with the correspondences $(A,B) \mapsto \PhiDUC{A}{B}$ and $(A,B) \mapsto \PhiCDUC{A}{B}$:
\begin{align*}
    \PhiDUC{A}{B} (X) &= \delta_A (X) + \mathring{B} \odot X,\\
    \PhiCDUC{A}{B} (X) &= \delta_A (X) + \mathring{B} \odot X^{\top}
\end{align*}
for $X\in \M{n}$, where $\delta_A (X):=\diag\{A\diag[X]\} = \sum_{i,j=1}^n A_{ij} X_{jj}|i\ra\la i|$.
\end{proposition}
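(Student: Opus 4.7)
The plan is to prove the two isomorphisms simultaneously by (i) directly verifying that $\PhiDUC{A}{B}$ and $\PhiCDUC{A}{B}$ satisfy the stated covariance, and (ii) a character argument on the torus $\mathbb{T}^n$ showing that every DUC/CDUC map must take this form. Injectivity will then follow from explicit recovery formulas for $A$ and $B$ from $\Phi$.

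For step (i), fix $U = \diag\{e^{i\theta_1},\ldots,e^{i\theta_n}\} \in D\mathcal{U}_n$. The identity $(UXU^*)_{kl} = e^{i(\theta_k-\theta_l)} X_{kl}$ yields $\diag[UXU^*] = \diag[X]$, so $\delta_A(UXU^*) = \delta_A(X)$, a diagonal matrix that commutes with $U$. For the Schur-product term, the $(k,l)$-entry of $\mathring{B}\odot(UXU^*)$ equals $B_{kl}e^{i(\theta_k-\theta_l)}X_{kl}$, matching $U(\mathring{B}\odot X)U^*$ entry-wise. The CDUC case uses the analogous identity $(\bar U M U^\top)_{kl} = e^{i(\theta_l-\theta_k)}M_{kl}$ for diagonal $U$, together with the transposition that swaps the row and column phases in $X$.

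For step (ii), let $\Phi$ be DUC and write $\Phi(X) = \sum_{ij} X_{ij} M^{ij}$ with $M^{ij} := \Phi(|i\ra\la j|)$. Applying the covariance condition to the rank-one input $|i\ra\la j|$ and reading the $(k,l)$-entry yields
\[
M^{ij}_{kl}\bigl(e^{i(\theta_i - \theta_j)} - e^{i(\theta_k - \theta_l)}\bigr) = 0 \qquad \text{for all } \theta \in \R{n}.
\]
Linear independence of the characters $\theta \mapsto e^{i\la m, \theta\ra}$ on $\mathbb{T}^n$ forces $M^{ij}_{kl} = 0$ whenever $e_i - e_j \neq e_k - e_l$ in $\mathbb{Z}^n$. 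The only surviving index configurations are (a) $(k,l) = (i,j)$, and (b) $i = j$ and $k = l$. Defining $B_{ij} := M^{ij}_{ij}$ for $i \neq j$ and $A_{ki} := M^{ii}_{kk}$, the expansion of $\Phi$ collapses to exactly $\delta_A(X) + \mathring{B}\odot X$, and completing $B_{ii} := A_{ii}$ places $(A,B)$ in $\MLDUI{n}$. The CDUC case follows the same template, but with the modified identity $e_i - e_j = e_l - e_k$ whose nontrivial solutions are $(k,l) = (j,i)$ or $i = j$, $k = l$, producing the $\mathring{B}\odot X^\top$ term instead.

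Injectivity of $(A,B) \mapsto \PhiDUC{A}{B}$ (and likewise for CDUC) is immediate from the recovery formulas $A_{ki} = \la k|\Phi(|i\ra\la i|)|k\ra$ and $B_{ij} = \la i|\Phi(|i\ra\la j|)|j\ra$ for $i \neq j$. The only genuinely nontrivial step is the character-independence argument that kills the off-support entries of $M^{ij}$; this is a standard Pontryagin-duality fact on $\mathbb{T}^n$, while the remainder of the argument is careful bookkeeping of indices.
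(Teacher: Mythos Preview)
Your proof is correct. The paper itself does not supply a proof of this proposition; it is stated as a preliminary fact with attribution to \cite{johnston2019pairwise,singh2021diagonal}, so there is no in-paper argument to compare against. Your character/Pontryagin-duality argument on $\mathbb{T}^n$ is the standard way to establish the result and matches what one finds in those references: the covariance relation applied to matrix units $|i\ra\la j|$ forces $M^{ij}_{kl}=0$ unless $e_i-e_j=e_k-e_l$ (DUC) or $e_i-e_j=e_l-e_k$ (CDUC), and the bookkeeping you describe then recovers the pair $(A,B)$. The recovery formulas for injectivity are correct as written, and the verification in step~(i) is routine. One small stylistic remark: since step~(ii) already produces an explicit inverse to the linear assignment $(A,B)\mapsto\Phi$, injectivity in step~(iii) is a restatement of that inverse rather than an independent check, so you could fold the two together.
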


\begin{remark}

Note that our naming of DUC/CDUC map is \textbf{opposite} to that of \cite[Definition 6.3]{singh2021diagonal}. Also note that, 

\begin{itemize}
    \item the DUC/CDUC maps $\PhiDUCCDUC{A}{B}$ are hermiticity preserving if and only if the matrix pair 
    $(A,B) \in \Mreal{n} \underset{\R{n}}{\times} \Msa{n}$.
    \item the notions of DUC and CDUC maps are connected by the composition with the transpose map, i.e,  
    $$\PhiDUC{A}{B} \circ \top = \PhiCDUC{A}{B}$$
\end{itemize}
\end{remark}

In relation with (C)DUC linear maps, we further introduce the following invariance properties of matrices.
\begin{definition}
A bipartite matrix $X \in \M{n} \otimes \M{n}$ is called
\begin{enumerate}
    \item  \emph{local diagonal unitary invariant (LDUI)} if 
        $$\forall U \in D\mathcal{U}_n, \quad (U \otimes U) X (U \otimes U)^* = X.$$
    \item \emph{conjugate local diagonal unitary invariant (CLDUI)} if 
        $$\forall U \in D\mathcal{U}_n, \quad (U \otimes \overline{U}) X (U \otimes \overline{U})^* = X.$$
\end{enumerate}
We denote these linear spaces by $\LDUI_n$ and $\CLDUI_n$, following the terminology of \cite{singh2021diagonal}. We then further identify $\M{n}^{\times 2}_{\Comp^n} \cong \LDUI_n \cong \CLDUI_n$ via the Choi-Jamio{\l}kowski isomorphism (Lemma \ref{lemma:CJiso}) described below (see also Proposition~\ref{prop:(C)DUC-CP-CLDUI+} for related adapted notation).

\begin{proposition} [\cite{singh2021diagonal}]
For $\Phi\in \mathcal{T}_{n} (\mathbb{C})$,
    \begin{itemize}
        \item $\Phi$ is DUC if and only if $J(\Phi)$ is CLDUI. In particular, we have 
        \begin{equation} \label{eq:CLDUI-DUC}
            J(\PhiDUC{A}{B}) = \sum_{ij}A_{ij} \ketbra{ij}{ij} + \sum_{i \neq j} B_{ij} \ketbra{ii}{jj} =: \XCLDUI{A}{B}, \quad (A,B)\in \MLDUI{n};
        \end{equation}
        
        \item $\Phi$ is CDUC if and only if $J(\Phi)$ is LDUI. In particular, we have 
        \begin{equation} \label{eq:LDUI-CDUC}
            J(\PhiCDUC{A}{B}) = \sum_{ij}A_{ij} \ketbra{ij}{ij} + \sum_{i \neq j} B_{ij} \ketbra{ij}{ji} =: \XLDUI{A}{B}, \quad (A,B)\in \MLDUI{n}.
        \end{equation}
    \end{itemize}
\end{proposition}

\end{definition}

Under the correspondence above, pairs of matrices in $\PCP_n$ have been shown in \cite[Lemmas 6.10, 6.11]{singh2021diagonal} to correspond to \emph{entanglement breaking} (conjugate) diagonal unitary covariant maps. 
\begin{proposition}\label{prop:PCP-EB}
    Given a pair of matrices $(A,B) \in \Mreal{n} \underset{\R{n}}{\times} \Msa{n}$, we have 
    $$(A,B) \in \PCP_n \iff \PhiDUCCDUC{A}{B} \text{ is entanglement breaking} \iff \XLDUICLDUI{A}{B} \text{ is separable}.$$
\end{proposition}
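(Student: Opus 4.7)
The plan is to reduce the proposition to a single nontrivial assertion, namely that $(A,B) \in \PCP_n$ is equivalent to separability of $\XLDUICLDUI{A}{B}$. Indeed, the equivalence ``$\PhiDUCCDUC{A}{B}$ is entanglement breaking $\iff \XLDUICLDUI{A}{B}$ is separable'' is immediate from item~(6) of \cref{lemma:CJiso} combined with the explicit Choi identifications $J(\PhiDUC{A}{B}) = \XCLDUI{A}{B}$ and $J(\PhiCDUC{A}{B}) = \XLDUI{A}{B}$ given in \eqref{eq:CLDUI-DUC} and \eqref{eq:LDUI-CDUC}.

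The main tool for the remaining equivalence is the $D\mathcal{U}_n$-twirl. Define the projections
$$P_{\CLDUI}(X) = \int_{D\mathcal{U}_n} (U \otimes \bar U)\,X\,(U \otimes \bar U)^* \,\mathrm{d}U, \qquad P_{\LDUI}(X) = \int_{D\mathcal{U}_n} (U \otimes U)\,X\,(U \otimes U)^*\,\mathrm{d}U,$$
which are the conditional expectations onto $\CLDUI_n$ and $\LDUI_n$, respectively. Both preserve separability since the conjugation action is by local unitaries. First I would compute the twirl of a rank-one product state $\ketbra{v \otimes w}{v \otimes w}$: character analysis on $D\mathcal{U}_n$ shows that only the matrix units $\ketbra{ij}{kl}$ whose index pattern matches the one appearing in \eqref{eq:CLDUI-DUC} (respectively \eqref{eq:LDUI-CDUC}) survive, and a direct bookkeeping gives
$$P_{\CLDUI}\bigl(\ketbra{v\otimes w}{v\otimes w}\bigr) \;=\; \XCLDUI{A_{v,w}}{B_{v,w}}, \qquad (A_{v,w},B_{v,w}) := \bigl(\ketbra{v\odot \bar v}{w\odot \bar w},\, \ketbra{v\odot w}{v\odot w}\bigr),$$
with the analogous formula for $P_{\LDUI}$ (in which $w$ is replaced by $\bar w$ on the right-hand side, producing the same family of extreme rays since the set is closed under $w \mapsto \bar w$). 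In particular, each such twirl is an extremal generator of $\PCP_n$ in the sense of \eqref{eq:ext-PCP}.

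For the implication $(A,B) \in \PCP_n \Rightarrow \XLDUICLDUI{A}{B}$ separable, I expand $(A,B)$ as a conic combination of the extreme rays in \eqref{eq:ext-PCP} and use the linearity of $(A,B) \mapsto \XLDUICLDUI{A}{B}$ to express $\XLDUICLDUI{A}{B}$ as a conic combination of twirls of pure product states; the latter are integrals of product states and hence separable, so their conic combination is too. For the converse, suppose $\XLDUICLDUI{A}{B} = \sum_k \lambda_k \ketbra{v_k \otimes w_k}{v_k \otimes w_k}$ is separable. Since $\XLDUICLDUI{A}{B}$ already lies in the $\CLDUI_n$ (resp.\ $\LDUI_n$) subspace, applying the corresponding twirl projection leaves the left-hand side unchanged, while by the computation of Step 1 the right-hand side becomes $\sum_k \lambda_k \XLDUICLDUI{A_{v_k,w_k}}{B_{v_k,w_k}}$. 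Finally, because the correspondence $(A,B) \mapsto \XLDUICLDUI{A}{B}$ is a linear bijection onto its image, we may cancel the $\XLDUICLDUI{\cdot}{\cdot}$ to obtain $(A,B) = \sum_k \lambda_k (A_{v_k,w_k},B_{v_k,w_k}) \in \PCP_n$.

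The only genuinely delicate step is the combinatorial matching in Step 1, where one must ensure that the parameterization of the extreme rays of $\PCP_n$ (with either $w$ or $\bar w$, and with the correct placement of conjugates on the diagonal block $A$) matches, without a sign or conjugation discrepancy, the surviving matrix units after twirling; everything else reduces to the bijectivity and linearity of the $\XLDUICLDUI{\cdot}{\cdot}$ parameterization and to convexity.
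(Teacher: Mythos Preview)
Your proposal is correct. The paper does not give its own proof of this proposition, instead citing \cite[Lemmas 6.10, 6.11]{singh2021diagonal}; your twirling argument is precisely the standard proof from that reference, and in fact the key computation you call ``Step~1'' (namely $\mathcal{T}_{\LDUI}(\ketbra{vw}{vw}) = \XLDUI{\ketbra{v\odot\bar v}{w\odot\bar w}}{\ketbra{v\odot\bar w}{v\odot\bar w}}$) appears verbatim later in the paper, inside the proof of \cref{prop:twirling-formulas}.
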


We end this section by recalling the following definition and result from \cite{singh2021diagonal}, characterizing the stronger notion of \emph{complete positivity} for $\mathsf{(C)DUC}$ maps. 

\begin{proposition}\label{prop:(C)DUC-CP-CLDUI+}
    Consider the following cones:
    \begin{align*}
        \LDUI^+_n &:= \Big\{ (A,B) \in \Mreal{n} \underset{\R{n}}{\times} \Msa{n} \, : \, A \in \EWP_n, \, A_{ij}A_{ji}\geq |B_{ij}|^2 \,\forall\, i\neq j \,  \Big\}.\\
        \CLDUI^+_n &:= \Big\{ (A,B) \in \Mreal{n} \underset{\R{n}}{\times} \Msa{n} \, : \, A \in \EWP_n, \, B \in \PSDC_n, \,  \Big\}.
    \end{align*}
Then we have
\begin{align*}
    (A,B) \in \LDUI_n^+ &\iff \PhiCDUC{A}{B} \text{ is completely positive} \\
    &\iff \PhiDUC{A}{B} \text{ is completely copositive} \iff \XLDUI{A}{B}\geq 0, \\
    (A,B) \in \CLDUI_n^+ &\iff \PhiDUC{A}{B} \text{ is completely positive} \\
    &\iff \PhiCDUC{A}{B} \text{ is completely copositive} \iff \XCLDUI{A}{B}\geq 0.
\end{align*}
\end{proposition}
\begin{proof}
    The inclusions follows from easily from Definition \ref{def:COPCP}, while the equivalence is proven in \cite[Lemma 2.12]{singh2021diagonal}, see also \cite[ Lemma 7.6]{nechita2021graphical} or \cite[Theorem 5.2]{johnston2019pairwise} 
\end{proof}

\begin{proposition}\label{prop:(C)DUC-PPT-PDNN}
    Consider the following cone of pairwise doubly non-negative matrices:
    $$\PDNN_n := \Big\{ (A,B) \in \Mreal{n} \underset{\R{n}}{\times} \Msa{n} \, : \, A \in \EWP_n, \, B \in \PSDC_n, \, \forall i,j \,\,  A_{ij} A_{ji} \geq |B_{ij}|^2 \Big\}.$$
Then we have
    $$(A,B) \in \PDNN_n \iff \PhiDUCCDUC{A}{B} \text{ is PPT} \iff \XLDUICLDUI{A}{B} \text{ is PPT}.$$
\end{proposition}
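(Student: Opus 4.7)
The plan is to use the Choi--Jamio\l kowski isomorphism together with \cref{prop:(C)DUC-CP-CLDUI+}. Recall that a hermiticity-preserving map $\mathcal{E}$ is PPT if and only if it is both completely positive and completely copositive, equivalently both $J(\mathcal{E})$ and $J(\mathcal{E})^{\Gamma}$ are PSD (by parts (3)--(4) of \cref{lemma:CJiso}). For the DUC case we have $J(\PhiDUC{A}{B}) = \XCLDUI{A}{B}$ from \eqref{eq:CLDUI-DUC}, and a direct computation using $|ii\ra\la jj| = |i\ra\la j|\otimes |i\ra\la j| \,\stackrel{\Gamma}{\mapsto}\, |i\ra\la j| \otimes |j\ra\la i| = |ij\ra\la ji|$ shows that the partial transpose swaps the two canonical matrices:
$$(\XCLDUI{A}{B})^{\Gamma} = \XLDUI{A}{B}, \qquad (\XLDUI{A}{B})^{\Gamma} = \XCLDUI{A}{B}.$$
Hence, for both the DUC and CDUC cases (which differ by composition with $\top$ and therefore share the same PPT property), $\PhiDUCCDUC{A}{B}$ is PPT iff $\XCLDUI{A}{B}\geq 0$ \emph{and} $\XLDUI{A}{B}\geq 0$.

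Next I would invoke \cref{prop:(C)DUC-CP-CLDUI+} to identify $\XCLDUI{A}{B}\geq 0$ with the conditions $A\in \EWP_n$ and $B\in\PSDC_n$. The condition $\XLDUI{A}{B}\geq 0$ must then be analyzed by hand: by construction this matrix is block-diagonal with respect to the decomposition $\Comp^n\otimes\Comp^n = \bigoplus_i \Comp|ii\ra \oplus \bigoplus_{i<j} \operatorname{span}\{|ij\ra,|ji\ra\}$. On the $1$-dimensional block indexed by $|ii\ra$ the entry is $A_{ii}$, while on the $2$-dimensional block indexed by $\{|ij\ra,|ji\ra\}$ it is
$$\begin{pmatrix} A_{ij} & B_{ij} \\ \overline{B_{ij}} & A_{ji}\end{pmatrix},$$
using that $B$ is self-adjoint, so $B_{ji}=\overline{B_{ij}}$. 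Positivity of all these blocks is equivalent to $A_{ii}\ge 0$, $A_{ij},A_{ji}\ge 0$ and the determinant condition $A_{ij}A_{ji}\ge |B_{ij}|^2$ for all $i\neq j$.

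Combining the two PSD conditions gives exactly the defining conditions of $\PDNN_n$: $A\in \EWP_n$, $B\in\PSDC_n$, and $A_{ij}A_{ji}\ge |B_{ij}|^2$ for all $i,j$ (the case $i=j$ being automatic since $A_{ii}=B_{ii}\in \R{}$). No step should be difficult; the only care point is the correct identification of the partial transpose of $\XCLDUI{A}{B}$ as $\XLDUI{A}{B}$, which is the reason the PPT condition combines the ``CP witnessing'' cone $\CLDUI_n^+$ with the Schur-like off-diagonal bound $A_{ij}A_{ji}\ge |B_{ij}|^2$.
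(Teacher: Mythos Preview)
Your proof is correct and essentially unpacks the argument that the paper merely delegates to \cite[Lemmas 6.11, 6.12]{singh2021diagonal}: the paper's own proof is a one-line citation, while you carry out the block-diagonal analysis of $\XLDUI{A}{B}$ and invoke \cref{prop:(C)DUC-CP-CLDUI+} for $\XCLDUI{A}{B}\ge 0$, which is exactly the content of those lemmas. The identification $(\XCLDUI{A}{B})^{\Gamma}=\XLDUI{A}{B}$ and the $2\times 2$ block computation are both right, so there is no gap.
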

\begin{proof}
    The equivalence follows from \cite[Lemma 6.11,6.12]{singh2021diagonal}
\end{proof}

Finally, we introduce a subclass of LDUI states called the (mixture of) Dicke states \cite{yu2016separability,tura2018separability,singh2021diagonal}, which will be of our main interest in \cref{sec:sos-hierarchies}. Recall that the symmetric space $\Comp^n\vee \Comp^n\subset (\Comp^n)^{\otimes 2}$ has an orthonormal basis $\{|\psi_{ij}\ra\}_{1\leq i\leq j\leq n}$ (called \emph{Dicke basis}) where
    $$|\psi_{ij}\ra:=\begin{cases}
        (|ij\ra+|ji\ra)/\sqrt{2} & \text{if $i<j$,}\\
        |ii\ra & \text{if $i=j$.}
    \end{cases}$$
Bipartite positive semidefinite matrices which are diagonal in the Dicke basis 
    $$X=\sum_{1\leq i\leq j\leq n} Y_{ij}|\psi_{ij}\ra\la \psi_{ij}| \in \mathcal{B}(\Comp^n\vee \Comp^n), \quad Y_{ij}\in \Real_+,$$
are called \emph{Dicke states} (or \emph{diagonal symmetric matrices}). By taking $P\in \Mrealsa{n}$ as 
    $$P_{ij}=\begin{cases}
    Y_{ij}/2 & \text{if $i<j$},\\
    Y_{ji}/2 & \text{if $j<i$},\\
    Y_{ii} & \text{if $i=j$}, 
\end{cases}$$
we have $X=\XLDUI{P}{P}$. From \cite[Example 3.4] {singh2021diagonal}, we conclude that
\begin{align*}
    \XLDUI{P}{P} \text{ is positive semidefinite} &\iff P\in \EWP_n^{\sa},\\
    \XLDUI{P}{P} \text{ is PPT} &\iff P\in \DNN_n,\\
    \XLDUI{P}{P} \text{ is separable} &\iff P\in \CP_n.
\end{align*}
In particular, when $n\leq 4$, every bipartite Dicke state is separable if and only if PPT while when $n\geq 5$, every $P\in \DNN_n\setminus \CP_n$ gives rise to a PPT entangled Dicke state $\XLDUI{P}{P}$.

\section{Pairwise copositivity}\label{sec:COPCP}

We introduce and study in this section the set of \emph{pairwise copositive matrices}, which is the generalization of the cone of copositive matrices $\COP_n$ to the setting of pairs of matrices, in the same way as pairwise completely positive matrices $\PCP_n$ generalize completely positive matrices $\CP_n$. Recall that to a matrix $X$ we associate its diagonal-less version $\mathring X$
$$\mathring{X} = X - \diag(X) \qquad \text{ or } \qquad \mathring{X}_{ij} = \mathds{1}_{i \neq j} X_{ij}.$$

\begin{definition}\label{def:COPCP}
    Define the set of \emph{pairwise copositive matrices} as 
    $$\COPCP_n := \Big\{ (A,B) \in \Mreal{n} \underset{\R{n}}{\times} \Msa{n} \, : \, \braket{v \odot \bar v | A | w \odot \bar w} + \braket{v \odot w | \mathring{B} | v \odot w}  \geq 0, \, \forall v,w \in \mathbb C^n\Big\}.$$
\end{definition}

\begin{proposition}
    The set $\COPCP_n$ is a proper convex cone, dual to $\PCP_n$: 
    $$\COPCP_n = \PCP_n^\circ.$$
\end{proposition}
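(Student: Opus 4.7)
The plan is to exhibit a natural real bilinear pairing on the vector space $\Mreal{n} \underset{\R{n}}{\times} \Msa{n}$ for which non-negativity against the extremal generators of $\PCP_n$ recovers precisely the defining inequality of $\COPCP_n$. Concretely, I would take
$$\langle (A, B), (A', B')\rangle := \operatorname{Tr}(A^\top A') + \operatorname{Tr}(\mathring B\, \mathring{B'}),$$
which is a real-valued, symmetric, non-degenerate bilinear form on the space in question: reality follows since $A, A'$ are real and $\mathring B, \mathring{B'}$ are traceless Hermitian, symmetry from cyclicity and the Hermiticity/reality of the entries, and non-degeneracy by pairing against elementary matrix pairs respecting the matched-diagonal constraint. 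One may equivalently view this pairing as the pullback of the Frobenius inner product under the injection $(A, B) \mapsto \XLDUI{A}{B}$ (or $\XCLDUI{A}{B}$), which ties the duality here to the bipartite-matrix picture used elsewhere in the paper.

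The key computation is then to evaluate this pairing on an extremal generator $(X, Y) = \bigl(\ketbra{v \odot \bar v}{w \odot \bar w}, \ketbra{v \odot w}{v \odot w}\bigr)$ of $\PCP_n$. Since $\ket{v \odot \bar v}$ and $\ket{w \odot \bar w}$ are real vectors, direct expansion gives
$$\operatorname{Tr}(A^\top X) \;=\; \sum_{i,j} A_{ij}\,|v_i|^2|w_j|^2 \;=\; \braket{v \odot \bar v | A | w \odot \bar w},$$
while the diagonal of $\ketbra{v \odot w}{v \odot w}$ is killed by $\mathring{(\cdot)}$, yielding
$$\operatorname{Tr}(\mathring B\, \mathring Y) \;=\; \sum_{i \neq j} B_{ij}\,\overline{v_iw_i}\,v_jw_j \;=\; \braket{v \odot w | \mathring B | v \odot w}.$$
Summing, the pairing $\langle (A, B), (X, Y)\rangle$ equals exactly the expression in Definition~\ref{def:COPCP}, so $(A, B) \in \COPCP_n$ if and only if this pairing is non-negative on every generator. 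By bilinearity and the description of $\PCP_n$ as the conic hull (and closure thereof) of its generators, this is in turn equivalent to $(A, B) \in \PCP_n^\circ$, establishing $\COPCP_n = \PCP_n^\circ$.

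For the proper cone property, the excerpt already shows that $\PCP_n$ is itself a proper cone in $\Mreal{n} \underset{\R{n}}{\times} \Msa{n}$ (closed, pointed, and containing $(J_n, I_n)$ in its interior via \cite{gurvits2002largest}). In a finite-dimensional real vector space with a non-degenerate bilinear form, the dual of a proper cone is automatically a proper cone: closedness and convexity of $\COPCP_n$ follow from its description as an intersection of closed half-spaces, pointedness of $\COPCP_n$ follows from the non-empty interior of $\PCP_n$, and non-empty interior of $\COPCP_n$ follows from pointedness of $\PCP_n$, both via the bipolar theorem. The only genuine obstacle in the proof is therefore bookkeeping: choosing the pairing so that the asymmetric treatment of $A$ and $B$ (the diagonal of $A$ is retained via $A^\top$, while the diagonal of $B$ is dropped via $\mathring{(\cdot)}$) aligns with the $\COPCP_n$ inequality and respects the matched-diagonal constraint. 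Once the correct pairing is fixed, all remaining steps reduce to the generator-level identity above.
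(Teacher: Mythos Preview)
Your proposal is correct and follows essentially the same route as the paper: it fixes the natural inner product $\langle (A_1,B_1),(A_2,B_2)\rangle = \langle A_1,A_2\rangle + \langle \mathring B_1,\mathring B_2\rangle$ on $\Mreal{n}\underset{\R{n}}{\times}\Msa{n}$, checks that pairing against the extremal generators of $\PCP_n$ reproduces the defining inequality of $\COPCP_n$, and then infers that $\COPCP_n$ is proper from the fact that $\PCP_n$ is proper. Your write-up is simply more explicit about the generator computation, the non-degeneracy of the pairing, and the standard duality facts (dual of proper is proper) than the paper's terse version.
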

\begin{proof}
   The convexity property follows immediately from the definition. For the duality claim, let us first write down explicitly the (natural) scalar product on the vector space $\Mreal{n} \underset{\R{n}}{\times} \Msa{n}$: 

   \begin{equation}
   \label{eqn:inner-product}
    \braket{(A_1, B_1), (A_2, B_2)} = \braket{A_1, A_2} + \braket{\mathring{B_1}, \mathring{B_2}} = \braket{\mathring{A_1}, \mathring{A_2}} + \braket{B_1, B_2},
   \end{equation}
   using the Euclidean (or Hilbert-Schmidt) scalar product on $n \times n$ matrices. The duality follows now directly from the from of the extremal rays of the $\PCP_n$ cone \eqref{eq:ext-PCP} and from the fact that $\braket{\mathring{X},Y} = \braket{\mathring{X},\mathring{Y}}$ for arbitrary matrices $X,Y$. Finally, the fact that $\COPCP_n$ is a proper cone follows from the established duality and the fact that $\PCP_n$ is proper. 
\end{proof}

\begin{proposition}
\label{prop:twirling-formulas}
Given a pair of matrices $(A,B) \in \Mreal{n} \underset{\R{n}}{\times} \Msa{n}$, we have 
\begin{align}
    \quad \la vw|\XLDUI{A}{B}|vw\ra &= \la v\odot \overline{v}|A|w\odot \overline{w}\ra+\la v\odot \overline{w}|\mathring{B}|v\odot \overline{w}\ra, \label{eq:LDUI-duality}\\
    \la vw|\XCLDUI{A}{B}|vw\ra &= \la v\odot \overline{v}|A|w\odot \overline{w}\ra+\la v\odot w|\mathring{B}|v\odot w\ra, \label{eq:CLDUI-duality}
\end{align} 
for $v,w\in \Comp^n$ and $|vw\ra:=|v\otimes w\ra$.
\end{proposition}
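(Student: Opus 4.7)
The plan is a direct expansion using the explicit matrix-unit decompositions of $\XLDUI{A}{B}$ and $\XCLDUI{A}{B}$ given in equations \eqref{eq:LDUI-CDUC} and \eqref{eq:CLDUI-DUC}, combined with the tensor product structure $|vw\ra = \sum_{i,j}v_i w_j|ij\ra$. Since both identities share an identical ``diagonal'' contribution coming from the $A$-part, I would first handle that common term, and then compute the two different off-diagonal contributions from the $B$-part separately.

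First I would observe that $\la vw|ij\ra\la ij|vw\ra = |v_i|^2|w_j|^2$, so that the $A$-contribution in both $\XLDUI{A}{B}$ and $\XCLDUI{A}{B}$ gives
\[
\sum_{i,j} A_{ij}|v_i|^2|w_j|^2 \;=\; \la v\odot \bar v|A|w\odot \bar w\ra,
\]
which is the first summand on the right-hand side of both \eqref{eq:LDUI-duality} and \eqref{eq:CLDUI-duality}. Next, for the LDUI case, the off-diagonal part is $\sum_{i\neq j}B_{ij}|ij\ra\la ji|$, and one computes
\[
\la vw|ij\ra\la ji|vw\ra \;=\; \bar v_i\bar w_j\, v_j w_i \;=\; (\bar v_i w_i)(v_j\bar w_j),
\]
so summing gives $\sum_{i\neq j} B_{ij}(\bar v_iw_i)(v_j\bar w_j) = \la v\odot \bar w|\mathring B|v\odot \bar w\ra$, by definition of the vector $v\odot\bar w$ and the fact that $\mathring B$ kills the diagonal. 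This establishes \eqref{eq:LDUI-duality}.

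For the CLDUI case, the off-diagonal part is instead $\sum_{i\neq j}B_{ij}|ii\ra\la jj|$, giving
\[
\la vw|ii\ra\la jj|vw\ra \;=\; \bar v_i\bar w_i\, v_j w_j \;=\; \overline{(v\odot w)_i}\,(v\odot w)_j,
\]
and summing over $i\neq j$ produces $\la v\odot w|\mathring B|v\odot w\ra$, proving \eqref{eq:CLDUI-duality}.

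The calculation is essentially mechanical, and the only thing to keep track of carefully is the pattern of complex conjugates: the bra $\la vw|$ introduces $\bar v_i\bar w_j$ factors whose pairing with the ket factors $v_k w_\ell$ governs whether one ends up with $v\odot \bar w$ (in the LDUI case, where the indices are swapped between the two systems) or with $v\odot w$ (in the CLDUI case, where both indices collapse onto the diagonal). There is no real obstacle; the identities are precisely the pointwise translation of the two LDUI/CLDUI symmetries into the pairwise-matrix language.
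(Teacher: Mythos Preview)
Your proof is correct. It is a direct coordinate computation using the explicit matrix-unit expansions \eqref{eq:LDUI-CDUC} and \eqref{eq:CLDUI-DUC}, and each step (the $A$-part, the two $B$-parts, and the bookkeeping of complex conjugates) is accurate.

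The paper's proof takes a different, more structural route: it writes $\la vw|\XLDUI{A}{B}|vw\ra = \Tr(|vw\ra\la vw|\,\XLDUI{A}{B})$, inserts the LDUI twirling $\mathcal T_{\mathsf{LDUI}}$ (which fixes $\XLDUI{A}{B}$), moves it across by self-adjointness, and then identifies $\mathcal T_{\mathsf{LDUI}}(|vw\ra\la vw|)$ as $\XLDUI{|v\odot\bar v\ra\la w\odot\bar w|}{|v\odot\bar w\ra\la v\odot\bar w|}$; the result then drops out of the pairwise inner-product formula \eqref{eqn:inner-product}. Your approach is more elementary and self-contained, requiring no twirling machinery; the paper's approach is more conceptual, since it explains \emph{why} the particular Hadamard-product vectors appear (they are the $(A,B)$-pair of the twirled rank-one state) and reuses an inner-product identity that the paper needs elsewhere anyway. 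Either argument is perfectly adequate here.
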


\begin{proof}
We first consider the \textit{LDUI-twirling} 
\begin{equation} \label{eq:LDUITwirl}
    \mathcal{T}_{\LDUI}(X)=\mathcal{T}_{D\mathcal{U}_n^{\otimes 2}}(X):=\int_{D\mathcal{U}_n}(U\otimes U)X(U\otimes U)^*dU,\quad X\in \M{n}^{\otimes 2},
\end{equation}
where $dU$ denotes the normalized Haar measure on $D\mathcal{U}_n$. Then $\mathcal{T}_{\LDUI}$ can be considered as a orthogonal projection of $\M{n}^{\otimes 2}$ onto the linear space $\LDUI_n$ (we refer to \cite{singh2021diagonal,park2024universal} for general properties of twirling maps). Furthermore, we have
\begin{align*}
    \forall\,X\in \M{n}^{\otimes 2},\quad  &X\in \LDUI_n \iff \mathcal{T}_{\LDUI}(X)=X,\\
    \forall\,X,Y\in \M{n}^{\otimes 2},\quad &\Tr(X \mathcal{T}_{\LDUI}(Y))=\Tr(\mathcal{T}_{\LDUI}(X) Y)=\Tr(\mathcal{T}_{\LDUI}(X)\mathcal{T}_{\LDUI}(Y)),\\
    \forall\, v,w\in \Comp^n, \quad &\mathcal{T}_{\LDUI}(|vw\ra\la vw|)=\XLDUI{|v\odot \bar{v}\ra\la w\odot \bar{w}|}{|v\odot \bar{w}\ra\la v\odot \bar{w}|},    
\end{align*}
where the first two facts follows from \cite[Section 2.3]{park2024universal} and the last fact is from \cite[Remark 5.16]{singh2021diagonal}. Now we show the proof of the first equation by the following computations.
\begin{align*}
\langle vw \mid \XLDUI{A}{B} \mid vw \rangle 
&= \operatorname{Tr}\!\big( \, |vw\rangle \langle vw| \, \XLDUI{A}{B} \big) \\[4pt]
&= \operatorname{Tr}\!\big( \, |vw\rangle \langle vw| \, \mathcal{T}_{\mathsf{LDUI}}(\XLDUI{A}{B}) \big) \\[4pt]
&= \operatorname{Tr}\!\big( \, \mathcal{T}_{\mathsf{LDUI}}(|vw\rangle \langle vw|) \, \XLDUI{A}{B} \big) \\[4pt]
&= \operatorname{Tr}\!\Big( 
    \XLDUI{\ketbra{v \odot \overline{v}}{w \odot \overline{w}}}{|v \odot \overline{w} \rangle \langle v \odot \overline{w}|} \, \XLDUI{A}{B} 
   \Big) \\
&= \la v\odot \overline{v}|A|w\odot \overline{w}\ra+\la v\odot \overline{w}|\mathring{B}|v\odot \overline{w}\ra.
\end{align*}

The second formula can be shown similarly via the \textit{CLDUI-twirling}
\begin{equation} \label{eq:CLDUITwirl}
    \mathcal{T}_{\mathsf{CLDUI}}(X):=\int_{D\mathcal{U}_n} (U\otimes \overline{U})X(U\otimes \overline{U})^*dU, \quad X\in \M{n}^{\otimes 2},
\end{equation}
which is left to the reader.
\end{proof}

In \cite[Theorem 6.6]{singh2021diagonal} it was shown that elements in $\COPCP$ corresponds to \emph{positive} conjugate diagonal unitary covariant maps, a dual result to Proposition \ref{prop:PCP-EB}.
\begin{proposition}\label{prop:COPCP-postivie}
    Given a pair of matrices $(A,B) \in \Mreal{n} \underset{\R{n}}{\times} \Msa{n}$, we have 
    $$(A,B) \in \COPCP_n \iff \PhiDUCCDUC{A}{B} \text{ is positive} \iff \XLDUICLDUI{A}{B} \text{ is block-positive}.$$
\end{proposition}

\begin{proof}
Assume that $(A,B) \in \COPCP_n$. Then, by Proposition \ref{prop:twirling-formulas}, we get $$\forall v,w \in \C{n} \quad \la vw|\XLDUI{A}{B}|vw\ra = \la v\odot \overline{v}|A|w\odot \overline{w}\ra+\la v\odot \overline{w}|\mathring{B}|v\odot \overline{w}\ra \geq 0$$
This is equivalent to the condition that that $\XLDUI{A}{B}$ is block-positive $\iff \PhiCDUC{A}{B}$ is positive \cite{eom2000duality}. 
\end{proof}

We further collect in the following proposition some simple results about the convex cone $\COPCP$. 

\begin{proposition}\label{prop:properties-COPCP}
    For every pair $(A,B) \in \COPCP_n$, the following implications hold:
    \begin{enumerate}
        \item $A \in \mathsf{EWP}_n$.\label{itm:A-EWP}

        \item  $\forall i \neq j \in [n] \quad \sqrt{A_{ii}A_{jj}} + \sqrt{A_{ij}A_{ji}} - |B_{ij}| \geq 0$.\label{itm:A-B-ineq}
        
        \item $(A + \mathring B) + (A + \mathring B)^\top = A + A^\top + 2\operatorname{Re}(\mathring B) \in \mathsf{COP}_n$.\label{itm:A+B-COP}

        \item $(D_1^*D_1AD_2^*D_2,D_1^*D_2^*BD_1D_2)\in \COPCP_n$ for any complex diagonal matrices $D_1$ and $D_2$.\label{itm:diagPCOP}
    \end{enumerate}
    Regarding diagonal elements of the pair and the connection to the $\PCP$ cone, we have:  
    \begin{enumerate}[resume]
        \item $(A, \diag(A)) \in \COPCP_n \iff A\in \EWP_n \iff (A, \diag(A)) \in \PCP_n$.\label{itm:B-diagonal}
        
        \item $(\diag(B),B) \in \COPCP_n \iff B \in \PSDC_n$.\label{itm:A-diaogonal-COPCP}
        
        \item $(\diag(B),B) \in \PCP_n \iff B = \diag B$.\label{itm:A-diaogonal-PCP}
    \end{enumerate}
    Regarding identical elements in the pair, we have: 
    \begin{enumerate}[resume]
        \item $(A,A) \in \COPCP_n \iff A \in \EWP_n^{\sa} \iff {\XLDUI{A}{A}\geq 0}$.\label{itm:prop-COPCP-equal}
    \end{enumerate}
    
\end{proposition}

\begin{proof}
From the definition of $\COPCP_n$ we have
\begin{equation}
\label{eq:pairwise-cop}
    \braket{v \odot \bar v | A | w \odot \bar w} + \braket{v \odot w | \mathring{B} | v \odot w}  \geq 0, \, \forall v,w \in \mathbb C^n.
\end{equation} 

For \cref{itm:A-EWP,itm:A-B-ineq}, we refer to \cite[Proposition 6.7]{singh2021diagonal}. 

To show \cref{itm:A+B-COP}, we replace in \cref{eq:pairwise-cop}, $v = z, w = \bar z \in \C{n}$.We get 
\begin{align}
    &\la z\odot \bar{z}|A+\mathring{B}|z\odot \bar{z}\ra\geq 0, \quad \forall z\in \mathbb{C}^n \iff \la p|(A+\mathring{B})+(A+\mathring{B})^{\top}|p\ra\geq 0, \quad \forall p\in \mathbb{R}_+^n.
\end{align}
This implies $(A+\mathring{B})+(A+\mathring{B})^{\top}=A+A^{\top}+2{\rm Re}(\mathring{B})\in \mathsf{COP}_n$. 

\cref{itm:diagPCOP} follows from the substitution $v\mapsto D_1v= \diag[D_1]\odot v$ and $w\mapsto D_2w= \diag[D_2]\odot w$ in \cref{eq:pairwise-cop}, in which case $D_1v\odot \overline{D_1v}=D_1^*D_1(v\odot \overline{v})$ and $D_1v\odot D_2w=D_1D_2(v\odot w)$.

\cref{itm:B-diagonal} follows immediately since $\mathring{B} = 0$, while \cref{itm:A-diaogonal-COPCP} and \cref{itm:A-diaogonal-PCP} follow from $\mathring{A} = 0$.

Finally, \cref{itm:prop-COPCP-equal} follows from \cref{itm:A-EWP} and Proposition \ref{prop:(C)DUC-CP-CLDUI+}.
\end{proof}

\begin{remark}
The results regarding pairs of matrices $(A,B)$ where either $A$ or $B$ are diagonal can be understood in terms of the corresponding linear maps as follows. 
\begin{itemize}
    \item  \cref{itm:B-diagonal}: A ``classical'' map $\PhiDUCCDUC{A}{\diag(A)}(X) = \diag\{A \diag[X]\}$
is positive $\iff$ it is completely positive $\iff$ it is entanglement breaking.
    \item \cref{itm:A-diaogonal-COPCP}: A (transpose-)Schur product map $\PhiDUC{\diag(B)}{B}(X) = B \odot X$ (resp.~$\PhiCDUC{\diag(B)}{B}(X) = B \odot X^\top$) is positive $\iff$ it is completely positive (resp.~completely co-positive).
    \item \cref{itm:A-diaogonal-PCP}: A (transpose-)Schur product map $\PhiCDUC{\diag(B)}{B}(X) = B \odot X^{\top}$ is entanglement breaking $\iff$ it is ``classical'' (i.e.~only takes into account the diagonal of the input).
\end{itemize}
\end{remark}

\begin{remark}
    We have seen in Proposition \ref{prop:PCP-equal} that $(A,A)\in \PCP_n \iff A\in \CP_n$. From \cref{itm:prop-COPCP-equal} above, it follows that a similar property does not hold for $\COPCP_n$ and $\COP_n$. In the language of linear maps, the result from \cref{itm:prop-COPCP-equal} translates to the fact that the linear $\mathsf{DUC}$ map
    \[\PhiDUC{A}{A}(X) = \operatorname{diag}\{A \operatorname{diag}[X]\}  + \mathring{A} \odot X\] 
    is positive if and only if it is completely positive.
\end{remark}

Recall that $\PhiDUC{A}{B}$ (resp. $\PhiDUC{A}{B}$) is completely positive (resp.~completely copositive), if $(A,B) \in \CLDUI^+_n$. An important example of element in the $\COPCP$ cone, different that the ones one might construct using Proposition \ref{prop:(C)DUC-CP-CLDUI+}, is given below. 

\begin{example}\label{ex:J-J}
We have $(\mathring{N},-\mathring{N})\in \COPCP_n$ for any $N\in \EWP_n^{\sa}$. Indeed, for $v,w\in \Comp^n$, one has
\begin{align*}
    \braket{v \odot \bar v | \mathring N | w \odot \bar w} - \braket{v \odot w | \mathring N | v \odot w} &= \sum_{i \neq j} N_{ij} |v_i|^2|w_j|^2 - \sum_{i \neq j} N_{ij} \bar v_i \bar w_i v_jw_j\\
    &=\sum_{i < j} N_{ij}(|v_i|^2|w_j|^2+|v_j|^2|w_i|^2-\bar v_i \bar w_i v_jw_j -\bar{v}_j\bar{w}_j v_iw_i)\\
    &= \sum_{i<j}N_{ij}|v_i\bar{w}_j-v_j\bar{w}_i|^2 \geq 0.
\end{align*}
In particular, $(\mathring J_n, - \mathring J_n) \in \COPCP_n$. Note that since $-\mathring N \notin \PSDC_n$ unless $\mathring{N}=0$, we have 
$$(\mathring N, - \mathring N) \in \COPCP_n \setminus \CLDUI^+_n.$$
\end{example}

This example is the main motivation for the rest of the paper: we shall consider perturbations of the $\mathring N$ matrix appearing the first and in the second slot of the pair, yielding interesting examples of $\COPCP$ pairs, and in effect, positive maps.

\medskip

We complete this section by providing the characterizations of pairwise copositivity in terms of copositivity, which generalize \cref{itm:A+B-COP} of Proposition \ref{prop:properties-COPCP}. First, for $A\in \EWP_n$, let us denote by
    $$\widetilde{A}:=(A\odot A^{\top})^{\odot 1/2}=(\sqrt{A_{ij}A_{ji}})_{1\leq i,j\leq n}.$$
\begin{theorem} \label{thm:PCOP-from-COP1}
If a pair $(A,B) \in \Mreal{n} \underset{\R{n}}{\times} \Msa{n}$ satisfies $A\in \EWP_n$ and $\widetilde{A}+{\rm Re}(U^* \mathring{B}U)\in \COP_n$ for every $U\in D\mathcal{U}_n$, then $(A,B)\in \COPCP_n$. In particular, for $A\in \Mrealsa{n}$ \emph{real symmetric},
    $$(A,B)\in \COPCP_n \iff A\in \EWP_n \text{ and }  A+ {\rm Re}(U^* \mathring{B}U)\in \COP_n\;\; \forall\, U\in D\mathcal{U}_n$$
\end{theorem}
\begin{proof}
For $v,w\in \Comp^n$ , let us write $v\odot w = Up$ where $p\in \Real_+^n$ and $U\in D\mathcal{U}_n$. Then we have
\begin{align*}
    \la v \odot \bar v| A\big|w \odot \bar w\ra+\la v\odot w|\mathring{B}|v\odot w\ra 
    &= \sum_{i,j}A_{ij} |v_i|^2|w_j|^2+\la p|U^*\mathring{B}U|p\ra\\
    &= \sum_{i<j}(A_{ij}p_j^2\frac{|v_i|^2}{|v_j|^2}+A_{ji}p_i^2\frac{|v_j|^2}{|v_i|^2}) + \la p|U^*BU|p\ra\\
    & \geq \sum_{i<j}2\sqrt{A_{ij}A_{ji}}p_ip_j + \la p|U^*BU|p\ra \\
    &= \la p \big| \widetilde{A} + {\rm Re}(U^* \mathring{B}U) \big|p\ra \geq 0,
\end{align*}
where in the second inequality, we use that $\diag(A)=\diag(B)=U^*\diag(B)U$ and in the next step the AM-GM inequality. This shows the first assertion. 

The direction ($\Longleftarrow$) of the second assertion now follows from the first by noting that $\widetilde{A} = A$ since $A$ is a real symmetric matrix. For the other direction ($\Longrightarrow$), note that $(A,U^*BU)\in \COPCP_n$ for all $U\in D\mathcal{U}_n$ from Proposition \ref{prop:properties-COPCP}, \cref{itm:diagPCOP}, with $D_1=U$ and $D_2=I_n$. Therefore, we conclude by applying Proposition \ref{prop:properties-COPCP}, \cref{itm:A+B-COP}. 
\end{proof}

The corollary below would play a key role in the next section. Let us first denote by $D\mathcal{O}_n:=\big\{O\in D\mathcal{U}_n: \diag[O]\in \{\pm 1\}^n\big\}$ the finite group of \emph{diagonal orthogonal matrices}.

\begin{corollary} \label{cor:PCOP-from-COP2}
If $(A,B)$ satisfies $A\in \EWP_n$, $B\in \Mrealsa{n}$, and $\widetilde{A}+ O \mathring{B}O\in \COP_n$ for every $O\in D\mathcal{O}_n$, then $(A,B)\in \COPCP_n$. In particular, for $A,B\in \Mrealsa{n}$ \emph{real symmetric},
    $$(A,B)\in \COPCP_n \iff A\in \EWP_n \text{ and } A+ O \mathring{B} O\in \COP_n\;\; \forall\, O\in D\mathcal{O}_n.$$
\end{corollary}
\begin{proof}
We only need to show the first assertion. First of all, for all $D$ real diagonal, we can write $D=OP=PO$ for some $P$ a nonnegative diagonal matrix and $O\in D\mathcal{O}_n$. Then $|D|:=\sqrt{D^*D}=P$ and hence the assumption implies that

\begin{equation}
\label{eqn:DAD-COP}
       |D|\widetilde{A}|D|+D\mathring{B}D=P(\widetilde{A}+O\mathring{B}O)P\in \COP_n.
\end{equation}
 
Now for $U\in D\mathcal{U}_n$, let us write $U=D_1+iD_2\in D\mathcal{U}_n$ with $D_1, D_2$ real diagonal (hence $D_1^2+D_2^2=I_n$). We claim that for all $A \in \mathsf{EWP}_n$, we have $\widetilde{A}-|D_1|\widetilde{A}|D_1|-|D_2|\widetilde{A}|D_2|\in \EWP_n^{\sa}$. 
To show this, we use the AM-GM inequality and the fact that $D_1^2+D_2^2 = I_n$:
\begin{equation}\label{eq:DiiDjj}
    \forall i,j \in [n], \qquad |D_1|_{ii} |D_1|_{jj} + |D_2|_{ii} |D_2|_{jj} \leq \frac{|D_1|_{ii}^2 + |D_1|_{jj}^2}{2} + \frac{|D_2|_{ii}^2 + |D_2|_{jj}^2}{2} = 1.
\end{equation}
Then, since $B$ is real, we have
\begin{align*}
    \widetilde{A}+{\rm Re}(U^*\mathring{B}U) &= \widetilde{A} +D_1\mathring{B}D_1+D_2\mathring{B}D_2 \\ 
    &= \underbrace{\big(|D_1|\widetilde{A}|D_1|+D_1 \mathring{B} D_1\big)}_{\mathsf{COP}_n \text{ by \cref{eqn:DAD-COP}}} + \underbrace{\big(|D_2|\widetilde{A}|D_2|+D_2 \mathring{B}D_2\big)}_{\mathsf{COP}_n \text{ by \cref{eqn:DAD-COP}}} + \underbrace{\big(\widetilde{A}-|D_1|\widetilde{A}|D_1|-|D_2|\widetilde{A}|D_2|)}_{\EWP_n^{\sa} \text{ by \cref{eq:DiiDjj}}}
\end{align*}
This shows that $\widetilde{A}+{\rm Re}(U^*\mathring{B}U)$ is copositive, and then the first claim follows now from \cref{thm:PCOP-from-COP1}.  
\end{proof}

\section{Lifting copositivity to pairwise copositivity}\label{sec:positivity}

In this section, we study the membership problem for the cone $\COPCP$ in the case of several classes of pairs of matrices. Many of the results in this section are focused on establishing an equivalence with the membership problem for the cone of copositive matrices $\COP$. Another motivation is the generalization of Example \ref{ex:J-J} to pairs where we perturb the second component by small amount of matrix, which turns out to be copositive.

More precisely, we apply Corollary \ref{cor:PCOP-from-COP2} to obtain the following main result of this section, an efficient method to generate a wide family of pairs in $\COPCP_n$ lifted from a \emph{single} copositive matrix.

\begin{theorem} \label{thm:COPCP-from-COP}
Suppose $M,N\in \Mrealsa{n}$ satisfy $\diag(N)=\diag(M)$, $N\in \EWP_n^{\sa}$, and $N_{ij}\geq \frac{1}{2}M_{ij}$ for all $i\neq j\in [n]$. Then 
$$M\in \COP_n \iff (N,M-\mathring{N})\in \COPCP_n.$$
In particular, denoting $(M_+)_{ij}:=\max(M_{ij},0)$ and $(M_-)_{ij}:=-\min(M_{ij},0)$, we have 
\begin{align*}
    M\in \COP_n &\iff \frac{1}{2} (M_+ +\diag(M), M_+-2M_-+\diag(M))\in \COPCP_n \\
    &\iff (M_+,\diag(M)-M_-) \in \COPCP_n.
\end{align*}
In a similar vein, for all $\lambda \geq \frac{1}{2}\max(0,\max(\mathring M))$, we have 
$$M \in \COP_n \iff (\diag(M) + \lambda\mathring J_n, M - \lambda\mathring J_n) \in \COPCP_n.$$
\end{theorem}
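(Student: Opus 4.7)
My plan is to treat the two directions separately. The reverse implication $(\Leftarrow)$ is immediate from \cref{prop:properties-COPCP}\,(item~\ref{itm:A+B-COP}): assuming $(N, A - \mathring N) \in \COPCP_n$ yields $N + N^\top + 2\,\mathrm{Re}(\mathring{(A - \mathring N)}) \in \COP_n$. Since $N$ and $A$ are real symmetric and $\diag(N) = \diag(A)$, one has $\mathring{(A - \mathring N)} = \mathring A - \mathring N$, so the expression collapses to $2N - 2\mathring N + 2\mathring A = 2\diag(A) + 2\mathring A = 2A$, and $A \in \COP_n$ follows.

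For the harder $(\Rightarrow)$ direction, I would fix $v, w \in \Comp^n$ and introduce $p_i := |v_i|^2$, $q_i := |w_i|^2$, $z_i := v_i w_i$, and $x_i := |z_i| = \sqrt{p_i q_i}$. Since $\mathring{(A - \mathring N)}_{ij} = A_{ij} - N_{ij}$ off the diagonal and $\diag(N)=\diag(A)$, the defining inequality of $\COPCP_n$ in \cref{def:COPCP} reads
\[
\mathrm{LHS} \;:=\; \sum_{ij} N_{ij}\, p_i q_j \;-\; \sum_{i \neq j} (N - A)_{ij}\, \bar z_i z_j \;\geq\; 0.
\]
The idea is to bound $\mathrm{LHS}$ below by $\langle x, A x\rangle$, which is non-negative because $A \in \COP_n$ and $x \in \R{n}_+$. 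This comes from a two-step estimate. First, the hypothesis $N - A \in \EWP_n^{\sa}$ gives $(N - A)_{ij} \geq 0$ for $i \neq j$, so pairing $(i,j)$ with $(j,i)$ and using $\mathrm{Re}(\bar z_i z_j) \leq |\bar z_i z_j| = x_i x_j$ yields $\sum_{i \neq j}(N - A)_{ij}\bar z_i z_j \leq \sum_{i \neq j}(N - A)_{ij}\, x_i x_j$. Second, by AM-GM $p_i q_j + p_j q_i \geq 2 x_i x_j$, which combined with $N \in \EWP_n^{\sa}$ gives $\sum_{i \neq j} N_{ij}\, p_i q_j \geq \sum_{i \neq j} N_{ij}\, x_i x_j$. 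Using $\diag(N) = \diag(A)$ to rewrite $\sum_i N_{ii} p_i q_i = \sum_i A_{ii} x_i^2$, the two bounds telescope into $\mathrm{LHS} \geq \sum_i A_{ii} x_i^2 + \sum_{i \neq j} A_{ij} x_i x_j = \langle x, A x \rangle \geq 0$.

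The two stated specializations then follow by verifying the general hypotheses. For $N = A_+$, the copositivity of $A$ forces $\diag(A) \geq 0$, hence $\diag(A_+) = \diag(A)$; also $N - A = A_- \in \EWP_n^{\sa}$ and $A - \mathring N = A - A_+ + \diag(A) = \diag(A) - A_-$. For $N = \diag(A) + \lambda \mathring J_n$ with $\lambda \geq \max(0, \max(\mathring A))$, the diagonals match trivially, $N \in \EWP_n^{\sa}$ holds, $(N - A)_{ij} = \lambda - A_{ij} \geq 0$ for $i \neq j$, and $A - \mathring N = A - \lambda \mathring J_n$. The delicate point of the argument is the interplay between the two estimates: the $\mathrm{Re}$-to-modulus bound (enabled by $N - A \geq 0$ off the diagonal) and the AM-GM tightening (enabled by $N \geq 0$ together with $\diag(N) = \diag(A)$) work in concert to compress the complex block-positivity inequality for the pair $(N, A - \mathring N)$ into the real copositivity of $A$ applied to the single non-negative vector $x = |v \odot w|$.
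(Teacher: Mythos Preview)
Your proof is correct and follows essentially the same approach as the paper's: for $(\Leftarrow)$ both invoke \cref{prop:properties-COPCP}\,\cref{itm:A+B-COP}, and for $(\Rightarrow)$ your Re-to-modulus bound (via $N-A\in\EWP_n^{\sa}$) and AM-GM bound (via $N\in\EWP_n^{\sa}$) are exactly the paper's two steps of reducing to $v,w\in\R{n}_+$ by phase rotation and then using the identity $\sum_{i<j}N_{ij}(v_iw_j-v_jw_i)^2\geq 0$, just written out coordinate-wise rather than as a reduction followed by a sum of squares. The specializations are verified identically.
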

\begin{proof}
The direction ($\Longleftarrow$) of the first statement follows from \cref{itm:A+B-COP} from Proposition \ref{prop:properties-COPCP} as 
$$N + N^{\top} + 2 \operatorname{Re}(\mathring M- \mathring N) = 2 M \in \mathsf{COP}_n.$$
For the direction ($\Longrightarrow$), let $O\in D\mathcal{O}_n$ and let us decompose $O=O_+-O_-$ such that both $O_{\pm}$ are diagonal matrices with entries either $0$ or $1$ and $O_+ O_-=0$. Also, note that $O_+ + O_{-} = I_n$. Then since $2N-M\in \EWP_n^{\sa}$, we have the following decomposition into sum of copositive matrices
\begin{align*}
    N+O(\mathring{M}-\mathring{N})O&=(N-ONO)+OMO = 2(O_+NO_-+O_-NO_+)+OMO \\
    &= \underbrace{\big(O_+(2N-M)O_-+O_-(2N-M)O_+\big)}_{\EWP_n^{\sa}\subseteq \COP_n} + \underbrace{O_+MO_+}_{\COP_n} + \underbrace{O_-MO_-}_{\COP_n} \in \COP_n.
\end{align*}
Since this is true for all $O \in D\mathcal{O}_n$, we have by Corollary \ref{cor:PCOP-from-COP2} that $(N,M-\mathring{N})\in \COPCP_n$.

Now the second and third assertions simply follows from the first statement with $N=\frac{1}{2}(M_++\diag(M))$, $N=M_+$ (note that $\diag(M_+)=\diag(M)$, $\diag(M_-)=0$, and $M=M_+-M_-$), and $N=\diag(M) + \lambda\mathring J$, respectively.
\end{proof}

\begin{remark}
The first statement of \cref{thm:COPCP-from-COP} can be interpreted as follows: for any fixed copositive matrix $M$, one can generate the family
    $$\mathcal{F}_M=\big\{(N,M-\mathring{N}): N\in \EWP_n^{\sa},\;\diag(N)=\diag(M), \text{ and } N_{ij}\geq \frac{1}{2}M_{ij}\; \forall\, i\neq j\big\}$$
of pairs in $\COPCP_n$. In other words, any pair $(N,M-\mathring{N})\in \mathcal{F}_M$ defines positive linear maps $\PhiDUCCDUC{N}{M-\mathring{N}}$. Note that this family is both simple and rich, in the sense that it allows for \emph{$(n^2-n)/2$ degrees of freedom} in the choice of the upper off-diagonal entries $(N_{ij})_{1\leq i<j\leq n}$. In the remainder of the paper, we will show that this construction indeed generates a broad class of \emph{positive indecomposable} linear maps and \emph{entanglement witnesses} capable of detecting entanglement in PPT states that are arbitrarily close to separable ones.
\end{remark}

\begin{example} \label{ex:HornPCOP1}
For the Horn matrix $H$ defined in \cref{eq:Horn}, the last assertion in \cref{thm:COPCP-from-COP} implies that 
\begin{center}
    $(I_5+\lambda \mathring{J}_5,H-\lambda \mathring{J}_5)\in \COPCP_5$ whenever $\lambda\geq \frac{1}{2}$.
\end{center}
We show that the condition $\lambda\geq \frac{1}{2}$ is also necessary (see also Example \ref{ex:HornPCOP2}), which implies that the lower bound $\lambda\geq \frac{1}{2}\max(0,\max(\mathring M))$ in the previous theorem is \emph{tight} as a multiple of $\max(0,\max(\mathring M))$. Indeed, for $t\geq 0$ we may choose $v=(1,0,t,t,0)^{\top}$ and $w=(-1,0,t,t,0)^{\top}$ to compute that
    $$\la v^{\odot 2}\big| I_5+\lambda \mathring{J}_5\big| w^{\odot 2}\ra+\la v\odot w|\mathring{H}-\lambda\mathring{J}_5|v\odot w\ra=1+4t^2(2\lambda-1).$$
Thus, the condition $(I_5+\lambda \mathring{J}_5,H-\lambda \mathring{J}_5)\in \COPCP_5$ implies that $2\lambda-1\geq -\frac{1}{4t^2}$ for all $t>0$. By taking $t\to \infty$, we obtain $\lambda\geq 1/2$.
\end{example}

On the other hand, the bound $\lambda \geq \frac{1}{2}\max(0,\max(\mathring M))$ from \cref{thm:COPCP-from-COP} is \emph{not tight} in another perspective: if $M\in \PSDR_n$, then $(\operatorname{diag}(M) + \lambda\mathring J, M - \lambda \mathring J) \in \COPCP_n$ for all $\lambda\geq 0$ (see \cref{itm:A-diaogonal-COPCP} in Proposition \ref{prop:properties-COPCP}) while $\max{(\mathring M)}>0$ whenever $M$ has positive off-diagonal elements.

\medskip

The simple corollary below will be used later in \cref{sec:graphs} to establish the positivity of linear maps defined by graphs.  
\begin{corollary}\label{cor:diagA-A}
Let $A \in \Mrealsa{n}$ such that $I-A\in \EWP_n$. Then 
    $$(\diag(A)+\mathring J_n , A)\in \COPCP_n \quad \iff \quad A+\mathring J_n \in \COP_n.$$
\end{corollary}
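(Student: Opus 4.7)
The plan is to reduce the corollary to a direct application of Theorem~\ref{thm:COPCP-from-COP} with the matrix $A'=A+\mathring J_n$ in the role of the ``copositive witness'' and the choice $N=\diag(A)+\mathring J_n$ in the role of the auxiliary non-negative matrix. The key observation is that with this choice $\mathring N=\mathring J_n$, so that $A'-\mathring N = A+\mathring J_n-\mathring J_n = A$; hence the pair produced by the lifting theorem, namely $(N,A'-\mathring N)$, is exactly $(\diag(A)+\mathring J_n,A)$, which is the pair whose membership in $\COPCP_n$ we wish to characterize.

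For the implication $(\diag(A)+\mathring J_n,A)\in\COPCP_n\ \Longrightarrow\ A+\mathring J_n\in\COP_n$ I would simply invoke item~\ref{itm:A+B-COP} of Proposition~\ref{prop:properties-COPCP}. Because $N$ is symmetric and $A$ is real, this general necessary condition collapses to $2N+2\mathring A=2(\diag(A)+\mathring J_n)+2\mathring A=2(A+\mathring J_n)\in\COP_n$, which gives the desired conclusion. This direction uses nothing from the hypothesis $I-A\in\EWP_n$, which is reassuring since the implication should be the ``free'' one.

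For the converse $A+\mathring J_n\in\COP_n\ \Longrightarrow\ (\diag(A)+\mathring J_n,A)\in\COPCP_n$ I need to verify the three hypotheses of Theorem~\ref{thm:COPCP-from-COP} applied to the matrices $A'=A+\mathring J_n$ and $N=\diag(A)+\mathring J_n$. The diagonal compatibility $\diag(N)=\diag(A')=\diag(A)$ is immediate. The condition $N\in\EWP_n^{\sa}$ needs $A_{ii}\geq0$ for all $i$: since copositive matrices have non-negative diagonal entries, $A'\in\COP_n$ forces $A_{ii}=(A')_{ii}\geq0$, so this is automatic from the hypothesis of this direction. Finally, $N-A'=\diag(A)+\mathring J_n-A-\mathring J_n=-\mathring A$, so $N-A'\in\EWP_n^{\sa}$ is equivalent to $A_{ij}\leq0$ for $i\neq j$, which is exactly the off-diagonal content of the hypothesis $I-A\in\EWP_n$. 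Once all three conditions are in place, Theorem~\ref{thm:COPCP-from-COP} produces $(N,A'-\mathring N)=(\diag(A)+\mathring J_n,A)\in\COPCP_n$, completing the proof.

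The main obstacle is not conceptual but notational: making sure that the identifications $\mathring N=\mathring J_n$, $A'-\mathring N=A$, and $\diag(N)=\diag(A')$ are kept straight, and checking that the hypothesis $I-A\in\EWP_n$ packages together precisely the off-diagonal inequalities needed. There is one mild subtlety worth explicitly flagging: the sign condition $A_{ii}\geq0$ is not part of $I-A\in\EWP_n$, but it is forced on both sides of the biconditional (from copositivity on one side and from item~\ref{itm:A-EWP} of Proposition~\ref{prop:properties-COPCP} on the other), so no extra assumption is required to make the application of the lifting theorem legal.
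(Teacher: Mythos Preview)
Your proof is correct and follows exactly the approach the paper intends: the corollary is stated immediately after Theorem~\ref{thm:COPCP-from-COP} with no proof, as a direct consequence of that theorem with the substitution $A'=A+\mathring J_n$, $N=\diag(A)+\mathring J_n$. Your handling of the subtlety about $A_{ii}\geq 0$ (obtained from copositivity on one side and from Proposition~\ref{prop:properties-COPCP}\ref{itm:A-EWP} on the other) is careful and correct, and indeed more explicit than anything the paper spells out.
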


\section{Pairwise decomposability} \label{sec:decomposability}
Between the class of positive linear maps and the one of completely positive maps lies the important intermediate class of decomposable maps. We study in this section the corresponding cone of pairs of matrices and connect it to decomposable maps in operator algebra. Recall that a linear map \(\mathcal{E} \in \mathcal{T}_n(\mathbb{C})\) is called decomposable if we can write $\mathcal{E} = \mathcal{E}_1 + \mathcal{E}_2\circ {\top}$, for some CP maps $\mathcal{E}_1$ and $\mathcal{E}_2$. To study decomposable diagonal unitary covariant linear maps, we introduce the following set:

\begin{definition}\label{def:PDEC}
    Define the set of \emph{pairwise decomposable matrices} $\PDEC_n$ as the set of pairs $(A,B) \in \Mreal{n} \underset{\R{n}}{\times} \Msa{n}$ having the following properties: 
    \begin{itemize}
        \item $A \in \EWP_n$
        \item $B = B^{(1)} + B^{(2)}$ with $B^{(1)} \in \PSDC_n$ and $B^{(2)} \in \Msa{n}$ such that $B^{(2)}_{ii} \geq 0$ for all $i \in [n]$ and $|B^{(2)}_{ij}|^2 \leq A_{ij}A_{ji}$ for all $i \neq j \in [n]$.\label{item:conditions-dec}
    \end{itemize}
\end{definition}

Note from the definition that
\begin{equation} \label{eq:SymmPDEC}
    (A,B)\in \PDEC_n \iff (\widetilde{A},B)=\big((A\odot A^{\top})^{\odot 1/2},B\big)\in \PDEC_n,
\end{equation}
which is not true in general for $(A,B)\in \COPCP_n$ (we refer to Section \ref{sec:XJ}).

We first characterize the decomposability property of (conjugate) diagonal unitary covariant maps. 

\begin{proposition} \label{prop:PairDEC}
The following statements are equivalent:
\begin{enumerate}
    \item $\PhiDUCCDUC{A}{B}$ is decomposable.
    \item $(A,B) \in \PDEC_n$.
\end{enumerate}

\end{proposition}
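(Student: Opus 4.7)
The plan is to move to the Choi picture and combine it with a twirling argument. By \cref{lemma:CJiso}, $\PhiCDUC{A}{B}$ is decomposable iff its Choi matrix $\XLDUI{A}{B}$ can be written as $Y_1 + Y_2^{\Gamma}$ with $Y_1,Y_2 \in \PSDC_{n^2}$. Moreover, since $\PhiDUC{A}{B} \circ \top = \PhiCDUC{A}{B}$ and composing with $\top$ swaps CP with CCP, decomposability of $\PhiDUC{A}{B}$ is equivalent to that of $\PhiCDUC{A}{B}$, so I only need to prove the equivalence in the CDUC case.

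For (2) $\Rightarrow$ (1), given a splitting $B = B^{(1)} + B^{(2)}$ as in \cref{def:PDEC}, set $A_2 := \diag(B^{(1)})$ and $A_1 := A - A_2$, and decompose $\PhiCDUC{A}{B} = \PhiCDUC{A_1}{B^{(2)}} + \PhiCDUC{A_2}{B^{(1)}}$. The second summand is CCP since $\PhiCDUC{A_2}{B^{(1)}} \circ \top = \PhiDUC{A_2}{B^{(1)}}$ is CP by \cref{prop:(C)DUC-CP-CLDUI+}, using $A_2 \in \EWP_n$ (as $B^{(1)}_{ii} \geq 0$) and $B^{(1)} \in \PSDC_n$. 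For the first summand, I would analyze the Choi matrix $\XLDUI{A_1}{B^{(2)}}$ directly: expressed in the basis $\{|ij\rangle\}$ it is block diagonal, with $1\times 1$ blocks $[A_{1,ii}] = [B^{(2)}_{ii}]$ along the symmetric diagonal and $2 \times 2$ blocks $\left(\begin{smallmatrix} A_{1,ij} & \overline{B^{(2)}_{ij}} \\ B^{(2)}_{ij} & A_{1,ji} \end{smallmatrix}\right)$ on each off-diagonal pair. Positivity of all these blocks follows from $B^{(2)}_{ii} \geq 0$, $A_{1,ij} = A_{ij} \geq 0$ for $i \neq j$, and $|B^{(2)}_{ij}|^2 \leq A_{ij} A_{ji} = A_{1,ij} A_{1,ji}$. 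Hence $\PhiCDUC{A_1}{B^{(2)}}$ is CP, completing this direction.

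For (1) $\Rightarrow$ (2), I twirl any decomposition $\XLDUI{A}{B} = Y_1 + Y_2^{\Gamma}$ with the LDUI group action. A short index computation shows that $\mathcal{T}_{\LDUI}(Y_2^{\Gamma}) = (\mathcal{T}_{\CLDUI}(Y_2))^{\Gamma}$, because partial transpose on the second factor intertwines the actions of $U \otimes U$ and $U \otimes \overline{U}$ for diagonal $U$. Since $\XLDUI{A}{B}$ is LDUI-invariant and twirling preserves positivity, we may assume $Y_1 = \XLDUI{A_1}{B_1}$ is LDUI PSD and $Y_2 = \XCLDUI{A_2}{B_2}$ is CLDUI PSD, with $(A,B) = (A_1+A_2, B_1+B_2)$. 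From \cref{prop:(C)DUC-CP-CLDUI+} I read off $A_2 \in \EWP_n$ and $B_2 \in \PSDC_n$, while the block-diagonal analysis of $\XLDUI{A_1}{B_1}$ yields $A_1 \in \EWP_n$ and $|B_{1,ij}|^2 \leq A_{1,ij} A_{1,ji}$. Setting $B^{(1)} := B_2$ and $B^{(2)} := B_1$, all conditions of \cref{def:PDEC} hold: the bound $|B^{(2)}_{ij}|^2 \leq A_{ij} A_{ji}$ follows from $A_1 \leq A$ entrywise, which in turn uses $A_2 \in \EWP_n$. The only delicate step I foresee is checking the twirling identity $\mathcal{T}_{\LDUI} \circ \Gamma = \Gamma \circ \mathcal{T}_{\CLDUI}$; once it is in hand, everything else reduces to the block-PSD characterization of LDUI/CLDUI matrices already established in \cref{sec:preliminaries}.
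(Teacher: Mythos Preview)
Your proof is correct and follows essentially the same strategy as the paper's: a twirling argument for (1)$\Rightarrow$(2) and the explicit decomposition $(A,B)=(\diag B^{(1)},B^{(1)})+(A-\diag B^{(1)},B^{(2)})$ for (2)$\Rightarrow$(1). The only cosmetic difference is that the paper twirls at the level of DUC maps while you twirl at the level of LDUI Choi matrices; your intertwining identity $\mathcal T_{\LDUI}\circ\Gamma=\Gamma\circ\mathcal T_{\CLDUI}$ is exactly the Choi-picture translation of the paper's observation that DUC-twirling preserves complete (co)positivity, and your explicit block-PSD check for $\XLDUI{A_1}{B^{(2)}}$ just unpacks what the paper cites from \cite{singh2021diagonal}.
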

\begin{proof}
Since $\PhiDUC{A}{B}\circ \top=\PhiCDUC{A}{B}$ and since decomposability is invariant under the composition with transposition, we may only consider $\PhiDUC{A}{B}$.

\textbf{((1)$\Rightarrow$(2))} Assume $\PhiDUC{A}{B}$ is a decomposable linear map, 
$\PhiDUC{A}{B} = \mathcal{E}_1  + \mathcal{E}_2$ where $\mathcal{E}_1, \mathcal{E}_2$ are (resp.) completely positive and completely copositive maps. We use the DUC-twirling map 
    $$\mathcal{T}_{\mathsf{DUC}}(\mathcal{E})(\cdot) := \int_{\mathcal{DU}_n} U^* \mathcal{E}(U \cdot U^*) U\, dU
    $$ 
on both sides, and we have by \cite[Proposition 2.1]{park2024universal},
\begin{align}
    \mathcal{T}_{\mathsf{DUC}}(\PhiDUC{A}{B}) = \mathcal{T}_{\mathsf{DUC}}(\mathcal{E}_1)  + \mathcal{T}_{\mathsf{DUC}}(\mathcal{E}_2) \implies 
    \PhiDUC{A}{B} =\PhiDUC{A_1}{B_1}+\PhiDUC{A_2}{B_2}
\end{align}
for CP map $\PhiDUC{A_1}{B_1}$ and a co-CP map $\PhiDUC{A_2}{B_2}$. Now it is straightforward to verify that $B^{(i)}=B_i$, $i=1,2$, satisfy the conditions above.
For the reverse direction \textbf{((2)$\Rightarrow$(1))}, we construct a decomposition $$(A,B)=(\diag{B^{(1)}},B^{(1)})+(A-\diag{B^{(1)}},B^{(2)})$$ which gives us a decomposition into completely positive and a completely copositive map.
\end{proof}

Below, the implications in the first row can be either verified by hand from the definitions, or obtained from Proposition \ref{prop:COPCP-postivie}, \ref{prop:(C)DUC-CP-CLDUI+}, and \ref{prop:PairDEC} from the obvious implications in the second row. 

\begin{align*}
    (A,B) \in \PCP_n & \implies & \begin{cases} (A,B) \in \LDUI^+_n \text{ or} \\ (A,B) \in \CLDUI^+_n \end{cases} & \implies & (A,B) \in \PDEC_n & \implies & (A,B) \in \COPCP_n \\
    \Updownarrow\qquad &  & \Updownarrow\qquad &  & \Updownarrow\qquad &  & \Updownarrow\qquad \\
    \PhiDUCCDUC{A}{B} \text{ is EB} & \implies & \begin{cases} \PhiCDUC{A}{B} \text{ is CP or} \\ \PhiDUC{A}{B} \text{ is CP} \end{cases} & \implies & \PhiDUCCDUC{A}{B} \text{ is dec.} & \implies & \PhiDUCCDUC{A}{B} \text{ is positive.}
\end{align*}
\medskip

We start with an example directly strengthening Example \ref{ex:J-J}, obtained by taking $A = \mathring N$, $B^{(1)} = 0$, and $B^{(2)} = \pm \mathring N$ in Definition \ref{def:PDEC}.

\begin{example}\label{ex:N-N}
    For any matrix $N \in \EWP^\sa_n$, we have
    $(\mathring N, \pm \mathring N) \in \PDEC_n$.
\end{example}

We rephrase next \cite[Proposition 6.8]{singh2021diagonal} to provide a useful necessary condition for membership in the cone $\PDEC_n$.  

\begin{proposition} \label{prop-SN21COPCP} Let $A \in \Mreal{n}$ and $B \in \Msa{n}.$ Then, we have
    $$\forall i \neq j \in [n] \qquad \frac{1}{n-1}\sqrt{A_{ii}A_{jj}} + \sqrt{A_{ij}A_{ji}} - |B_{ij}| \geq 0 \implies (A,B) \in \PDEC_n$$
\end{proposition}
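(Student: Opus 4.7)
The plan is to construct an explicit decomposition $B = B^{(1)} + B^{(2)}$ meeting the requirements of \cref{def:PDEC}. The idea is to let $B^{(2)}$ absorb whatever portion of each off-diagonal entry $B_{ij}$ already fits inside the closed disc of radius $\sqrt{A_{ij}A_{ji}}$, so that the remainder $B^{(1)} = B - B^{(2)}$ has controlled off-diagonal entries bounded by $\tfrac{1}{n-1}\sqrt{A_{ii}A_{jj}}$.

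Concretely, for $i \neq j$ I will set $B^{(2)}_{ij}$ to be the radial projection of $B_{ij}$ onto that disc, namely $B^{(2)}_{ij} := (B_{ij}/|B_{ij}|)\min(|B_{ij}|,\sqrt{A_{ij}A_{ji}})$ (with the convention $B^{(2)}_{ij} := 0$ if $B_{ij} = 0$), and $B^{(2)}_{ii} := 0$; then $B^{(1)} := B - B^{(2)}$. Using the common-diagonal constraint $A_{ii} = B_{ii}$, these choices immediately yield $B^{(2)}_{ii} = 0 \geq 0$, $|B^{(2)}_{ij}|^2 \leq A_{ij}A_{ji}$, $B^{(1)}_{ii} = A_{ii}$, and
$$|B^{(1)}_{ij}| = \max\bigl(|B_{ij}| - \sqrt{A_{ij}A_{ji}},\,0\bigr) \leq \tfrac{1}{n-1}\sqrt{A_{ii}A_{jj}}$$
by hypothesis. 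I would also record in passing that the hypothesis forces $A_{ii} = B_{ii}\in \mathbb{R}_{\geq 0}$ (combining reality of the diagonal with $A_{ii}A_{jj}\geq 0$) and, under the natural convention that $\sqrt{A_{ij}A_{ji}} \geq 0$, that $A \in \EWP_n$ as required for \cref{def:PDEC}.

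The core step is verifying $B^{(1)} \in \PSDC_n$, and here the factor $1/(n-1)$ is tailor-made for a diagonal-dominance argument. Assuming first that $A_{ii} > 0$ for every $i$, I factor $B^{(1)} = D C D$ with $D := \diag\{\sqrt{A_{ii}}\}$ and $C$ the Hermitian matrix defined by $C_{ii} := 1$ and $C_{ij} := B^{(1)}_{ij}/\sqrt{A_{ii}A_{jj}}$ for $i \neq j$, so that $|C_{ij}| \leq 1/(n-1)$. Then $\sum_{j\neq i}|C_{ij}| \leq 1 = C_{ii}$ for each $i$, so $C$ is diagonally dominant with non-negative real diagonal, hence PSD, and therefore $B^{(1)} = D C D \in \PSDC_n$.

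The main obstacle I anticipate is the degenerate case where some $A_{ii}=0$. Here the common-diagonal constraint gives $B_{ii} = 0$, and the hypothesis collapses to $|B_{ij}| \leq \sqrt{A_{ij}A_{ji}}$ for every $j\neq i$ (since the $\sqrt{A_{ii}A_{jj}}$ term vanishes), which forces $B^{(1)}_{ij} = 0$ throughout row and column $i$. The PSD check then reduces to the principal submatrix of $B^{(1)}$ indexed by $\{i : A_{ii}>0\}$, to which the factorization above applies, completing the proof.
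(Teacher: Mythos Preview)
Your core construction is correct and is the natural approach: take $B^{(2)}$ to be the radial truncation of the off-diagonal of $B$ onto the discs of radius $\sqrt{A_{ij}A_{ji}}$, put the entire diagonal into $B^{(1)}$, and certify $B^{(1)}\in\PSDC_n$ via the scaling $B^{(1)}=DCD$ followed by Gershgorin diagonal dominance (the factor $1/(n-1)$ is exactly what makes the row sums equal $1$). The handling of the degenerate case $A_{ii}=0$ is also fine. The paper does not supply its own proof here---it merely rephrases and cites \cite[Proposition~6.8]{singh2021diagonal}---and your argument is essentially the standard one underlying that reference.

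One small correction to a passing remark: your claim that the displayed hypothesis alone forces $A_{ii}\geq 0$ and $A\in\EWP_n$ is not right. The condition $A_{ii}A_{jj}\geq 0$ for all $i\neq j$ only forces a common sign for the nonzero diagonal entries (they could all be negative), and $A_{ij}A_{ji}\geq 0$ says nothing about the sign of an individual off-diagonal entry (e.g.\ $A_{ij}=A_{ji}=-1$). Concretely, $A=-I_n$ with $B=-I_n$ satisfies the hypothesis but fails $A\in\EWP_n$. These conditions should be taken as implicit in the ambient setting of the paper (the common-diagonal space $\Mreal{n}\underset{\R{n}}{\times}\Msa{n}$ together with $A\in\EWP_n$, which is part of \cref{def:PDEC}) rather than as consequences of the inequality. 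With those standing assumptions in place, your proof goes through cleanly.
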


Recall now the definition of the $\SPN$ cone: these are real matrices that can be written as a sum of a real positive semidefinite matrix and a symmetric entrywise non-negative matrix: 
\begin{equation*}
    \SPN_n= \PSDR_n + \EWP_n^\sa,
\end{equation*}

We have the following analogue of \cref{itm:A+B-COP} of Proposition \ref{prop:properties-COPCP}.

\begin{proposition} \label{prop:SPN-from-PairDEC}
If $(A,B)\in \PDEC_n$, then $\widetilde{A} + {\rm Re}(\mathring{B}) \in \SPN_n$.
\end{proposition}
\begin{proof}
We can write $B=B^{(1)}+B^{(2)}$ satisfying the conditions listed in Definition \ref{def:PDEC}. Then, since $|\Re(B_{ij}^{(2)})|\leq \sqrt{A_{ij}A_{ji}}=\widetilde{A}_{ij}$ for all $i\neq j\in [n]$, we can show the following inclusion: 

    $$\widetilde{A} + {\rm Re}(\mathring{B})= \underbrace{(\widetilde{A}+\Re(\mathring{B}^{(2)}))}_{\EWP_n^\sa} + \underbrace{\Re(\mathring{B}^{(1)})}_{\PSDR_n}\in \SPN_n.$$
\end{proof}
Note that the condition $\widetilde{A} + {\rm Re}(\mathring{B}) \in \SPN_n$ above also implies the direct analogue of Proposition \ref{prop:properties-COPCP}, \cref{itm:A+B-COP}:
    $$(A+\mathring{B})+(A+\mathring{B})^{\top}=A+A^{\top}+2\Re(\mathring{B})\in \SPN_n,$$
since $(A+A^{\top})/2-\widetilde{A}\in \EWP_n^{\sa}$.

We now establish the main result of this section, a theorem that should be compared with the ``lifting" theorem for copositivity, \cref{thm:COPCP-from-COP} from the previous section. 

\begin{theorem}\label{thm:PDEC-from-SPN}
    Let $M\in \Mrealsa{n}$ and let $N\in \EWP_n^\sa$ with $\diag(M)=\diag(N)$, and suppose $N_{ij}\geq \frac{1}{2}M_{ij}+\frac{1}{4}(M_{ii}+M_{jj})$ for all $i\neq j \in [n]$. Then, we have 
    $$M\in \SPN_n \quad \iff \quad (N,M-\mathring{N})\in \PDEC_n.$$
    Furthermore, if $M$ is positive semidefinite (resp.~entrywise non-negative), then the condition $N\in \EWP_n^\sa$ (resp.~$N\in \EWP_n^\sa$ such that $N-M/2\in \EWP_n$) is enough to obtain $(N,M-\mathring{N})\in \PDEC_n$.
\end{theorem}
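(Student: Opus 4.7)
The plan is to prove the two implications separately; the nontrivial direction ($\Rightarrow$) hinges on choosing a $\PDEC$ decomposition that mirrors the $\SPN$ decomposition of $A$.

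For the direction ($\Leftarrow$), I would invoke \cref{prop:SPN-from-PairDEC} applied to the pair $(N, A-\mathring N)$, which yields $N + N^\top + 2\operatorname{Re}\bigl(\mathring{(A-\mathring N)}\bigr) \in \SPN_n$. Since $N$ is real symmetric and $\mathring{(A-\mathring N)} = \mathring A - \mathring N$ is real, this simplifies to $2N + 2(\mathring A - \mathring N) = 2\bigl(\diag(N) + \mathring A\bigr) = 2A$, where the last equality uses the hypothesis $\diag(N) = \diag(A)$. As $\SPN_n$ is a cone, we conclude $A \in \SPN_n$.

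For the direction ($\Rightarrow$), I would write an $\SPN$ decomposition $A = P + E$ with $P \in \PSDR_n$ and $E \in \EWP_n^{\sa}$, and propose the candidate
$$B^{(1)} := P, \qquad B^{(2)} := E - \mathring N$$
of $A - \mathring N$. Three of the conditions of \cref{def:PDEC} are immediate: $N \in \EWP_n$ is given, $B^{(1)} \in \PSDR_n \subseteq \PSDC_n$, and $B^{(2)}_{ii} = E_{ii} \geq 0$. The remaining off-diagonal condition $|B^{(2)}_{ij}|^2 \leq N_{ij} N_{ji}$ reduces, using $N_{ji} = N_{ij}$ and $E_{ij}, N_{ij} \geq 0$, to the single inequality $E_{ij} \leq 2 N_{ij}$ for all $i \neq j$. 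The hypothesis enters here: positivity of the $2\times 2$ principal submatrix of $P$ indexed by $\{i,j\}$, tested on $(1,1)^\top$, gives $P_{ii} + P_{jj} + 2 P_{ij} \geq 0$, whence
$$2 A_{ij} + A_{ii} + A_{jj} = (P_{ii} + P_{jj} + 2P_{ij}) + (E_{ii} + E_{jj} + 2 E_{ij}) \geq 2 E_{ij},$$
and combining with $4 N_{ij} \geq 2 A_{ij} + A_{ii} + A_{jj}$ yields $E_{ij} \leq 2 N_{ij}$. For the ``furthermore'' assertions, the same recipe works with extremal $\SPN$ decompositions: if $A \in \PSDR_n$, take $P = A$ and $E = 0$, so that $B^{(2)} = -\mathring N$ trivially satisfies $|B^{(2)}_{ij}| = N_{ij}$ and $B^{(2)}_{ii} = 0$; if $A \in \EWP_n^{\sa}$, take $P = 0$ and $E = A$, so that $E_{ij} \leq 2 N_{ij}$ becomes precisely the condition $N - A/2 \in \EWP_n$ (the diagonal entry $N_{ii} - A_{ii}/2 = A_{ii}/2 \geq 0$ being automatic).

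The main conceptual obstacle is identifying the correct $\PDEC$ decomposition: naive splittings of $A - \mathring N$ (e.g.\ putting all of the off-diagonal of $A$ on one side) fail to exploit the $\SPN$ structure and produce a $B^{(2)}$ that violates the off-diagonal bound. The insight is that the $2 \times 2$ PSD minor inequality on $P$, combined with the hypothesis relating $N_{ij}$ to $A_{ij}$ and $A_{ii}, A_{jj}$, provides exactly the slack needed to control $E_{ij}$ by $2 N_{ij}$.
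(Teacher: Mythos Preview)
Your proof is correct and follows essentially the same strategy as the paper's. Both directions match: the backward implication via \cref{prop:SPN-from-PairDEC}, and the forward implication by letting the $\SPN$ decomposition $A = P + E$ induce the $\PDEC$ decomposition with $B^{(1)} = P$, $B^{(2)} = E - \mathring N$, and then bounding $E_{ij} \le 2N_{ij}$ via the $2\times 2$ PSD constraint on $P$; the paper organizes this as a sum of two $\PDEC$ pairs $(\diag P, P) + (\diag E + \mathring N, E - \mathring N)$ and bounds $E_{ij}$ via $|P_{ij}| \le \sqrt{P_{ii}P_{jj}}$ and AM--GM instead, but the content is the same.
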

\begin{proof}
The condition $(N,M-\mathring{N})\in \PDEC_n$ implies $M\in \SPN_n$ from Proposition \ref{prop:SPN-from-PairDEC}. For the converse, let us assume $M\in \SPN_n$ and write $M=A^{(1)}+A^{(2)}$ where $A^{(1)}\in \PSDR_n$ and $A^{(2)}\in \EWP_n^\sa$. Then
\begin{itemize}
    \item $(\diag (A^{(1)}),A^{(1)})\in \PDEC_n$ and $(\mathring{N},-\mathring{N}) \in \PDEC_n$ whenever $N\in \EWP_n^\sa$;
    \item $(\diag(A^{(2)})+\mathring{N},A^{(2)}-\mathring{N}) \in \PDEC_n$ whenever $N-A^{(2)}/2\in \EWP_n$. Indeed, $N_{ij}^2\geq (A_{ij}^{(2)}-N_{ij})^2$ holds whenever $N_{ij}\geq A_{ij}^{(2)}/2$, and hence the decomposition $A^{(2)}-\mathring{N}=B^{(1)}+B^{(2)}$ with $B^{(1)}=0$ satisfies the conditions in Definition \ref{def:PDEC}.
\end{itemize}
Consequently, we have 
    $$(N,M-\mathring{N})=(\diag(A^{(1)}),A^{(1)})+(\diag(A^{(2)})+\mathring{N},A^{(2)}-\mathring{N}),\in \PDEC_n$$
if $N- A^{(2)}/2\in \EWP_n$. On the other hand, if $i\neq j\in [n]$, then
\begin{align*}
    A_{ij}^{(2)} \leq M_{ij}+|A_{ij}^{(1)}|\leq M_{ij}+\sqrt{A_{ii}^{(1)}A_{jj}^{(1)}} \leq M_{ij}+\frac{A_{ii}^{(1)}+A_{jj}^{(1)}}{2}\leq M_{ij}+\frac{M_{ii}+M_{jj}}{2}.
\end{align*}
Therefore, $N- A^{(2)}/2\in \EWP_n$ is satisfied whenever $N_{ij}\geq \frac{1}{2}M_{ij}+\frac{1}{4}(M_{ii}+M_{jj})$ for all $i\neq j$.
\end{proof}

\begin{remark} \label{rmk:MATS21}
The proof above actually shows that, if $M=A^{(1)}+A^{(2)}\in \SPN_n$ with $A^{(1)}\in \PSDR_n$ and $A^{(2)}\in \EWP_n^{\sa}$, then
    $$\big(\diag(A^{(1)}),A^{(1)}\big)+\frac{1}{2}\big(A^{(2)}+\diag(A^{(2)}),A^{(2)}+\diag(A^{(2)})\big)\in \PDEC_n,$$
by choosing $N=\diag(M)+\frac{1}{2}\mathring{A}^{(2)}$. This precisely recovers \cite[Theorem 2.2]{marconi2021entangled}:
    $$W=\XLDUI{\diag(A^{(1)})}{A^{(1)}}+\Pi_S\XLDUI{\diag(A^{(2)})}{A^{(2)}}\Pi_S \in \PPT(\Comp^n\otimes \Comp^n)^{\circ}$$
(i.e. $W$ is \emph{decomposable}), where $\Pi_S:=\frac{1}{2}(I_{d^2}+F_d)=\frac{1}{2}(I_{d^2}+\sum_{i,j=1}^d |ij\ra\la ji|)$ is the projection onto the symmetric subspace $\Comp^n\vee \Comp^n\subset (\Comp^n)^{\otimes 2}$. Indeed, the relations from \cite[Proposition 4.3]{singh2021diagonal} imply that
    $$\Pi_S \XLDUI{A}{B} \Pi_S = \XLDUI{(A+B)/2}{(A+B)/2}$$
whenever $A,B \in \Mrealsa{n}$ with $\diag(A)=\diag(B)$.
\end{remark}

\begin{example} \label{ex:HornPCOP2}
For the matrix
    $$\widetilde{H}:= \begin{bmatrix}
    1 & -1 & 1 & 1 \\
    -1 & 1 & -1 & 1 \\
    1 & -1 & 1 & -1 \\
    1 & 1 & -1 & 1 \\
    \end{bmatrix}= |v\ra \la v|+\begin{bmatrix}
    0 & 0 & 0 & 2 \\
    0 & 0 & 0 & 0 \\
    0 & 0 & 0 & 0 \\
    2 & 0 & 0 & 0 \\
    \end{bmatrix}\in \SPN_4, \quad v=(1,-1,1,-1)^{\top},$$
\cref{thm:PDEC-from-SPN} implies that $(I_4+\lambda \mathring{J}_4,\widetilde{H}-\lambda\mathring{J}_4)\in \PDEC_4$ if $\lambda\geq 1$. We claim that this condition is also necessary. Note that $\widetilde{H}$ is a principal $4\times 4$ submatrix of the Horn matrix $H$, and thus the argument in Example \ref{ex:HornPCOP1} implies
\begin{align*}
    (I_5+\lambda \mathring{J}_5,{H}-\lambda\mathring{J}_5)\in \COPCP_5\;\; \forall\,\lambda\geq \frac{1}{2} \implies  (I_4+\lambda \mathring{J}_4,\widetilde{H}-\lambda\mathring{J}_4) \in \COPCP_4  \;\; \forall\,\lambda\geq \frac{1}{2}.
\end{align*}
In particular, we have $(I_4+\lambda \mathring{J}_4,\widetilde{H}-\lambda\mathring{J}_4) \in \COPCP_4 \setminus \PDEC_4$ in the interval $\lambda\in [1/2,1)$, which leads to the \emph{positive indecomposable linear maps} $\PhiDUCCDUC{I_4+\lambda\mathring{J}_4}{\widetilde{H}-\lambda \mathring{J}_4}$. 

To show the claim, suppose $\widetilde{H}-\lambda\mathring{J}_4 = B^{(1)}+B^{(2)}$ where $B^{(1,2)}$ satisfy the conditions in Definition \ref{def:PDEC}, i.e., $B^{(2)}\in \Msa{4}$, $B_{ii}^{(2)}\geq 0$, $|B^{(2)}_{ij}|\leq \lambda$, for $i\neq j$, and 
    $$B^{(1)}=\begin{bmatrix}
    1 & -1-\lambda & 1-\lambda & 1-\lambda \\
    -1-\lambda & 1 & -1-\lambda & 1-\lambda \\
    1-\lambda & -1-\lambda & 1 & -1-\lambda \\
    1-\lambda & 1-\lambda & -1-\lambda & 1 \\
    \end{bmatrix} - B^{(2)}\in \PSDC_4.$$
First, the psd condition for each of $2\times 2$ principal submatrices of $B^{(1)}$ implies that $\diag(B^{(2)})=0$ and $B^{(2)}_{ij}=-\lambda$ for all $i,j\in [4]$ with $|i-j|=1$. Next, the condition for $3\times 3$ principal submatrix 
    $$\begin{bmatrix}
    1 & -1 & 1-\lambda-B^{(2)}_{13} \\
    -1 & 1 & -1 \\
    1-\lambda-\overline{B^{(2)}_{13}} & -1 & 1
    \end{bmatrix} \geq 0 \implies \begin{vmatrix}
    1 & -1 & 1-\lambda-B^{(2)}_{13} \\
    -1 & 1 & -1 \\
    1-\lambda-\overline{B^{(2)}_{13}} & -1 & 1
    \end{vmatrix}=-|\lambda+ B^{(2)}_{13}|^2\geq 0$$
forces that $B^{(2)}_{13}=B^{(2)}_{31}=-\lambda$, and similarly $B^{(2)}_{24}=B^{(2)}_{42}=-\lambda$. Finally, the last condition $B^{(1)}\geq 0$ would force that $1-\lambda-B^{(2)}_{14}=-1$, i.e., $B^{(2)}_{14}=2-\lambda$. Since $|B^{(2)}_{14}|\leq \lambda$, this implies $\lambda \geq 1$.
\end{example}

Finally, the following corollary allows us to provide a \emph{$(n^2-n)/2$-dimensional region} of positive indecomposable linear maps within the classes of DUC and CDUC linear maps.

\begin{corollary} \label{cor:PosIndecomp}
If $M\in \COP_n\setminus \SPN_n$, then both the maps $\PhiDUCCDUC{N}{M - \mathring{N}}$ are \emph{positive and non-decomposable} for all $N\in \EWP_n^\sa$ such that $\diag(N)=\diag(M)$ and $N_{ij}\geq \frac{1}{2}M_{ij}$ for all $i > j$. Moreover, if $P\in \DNN_n$ with $\la P,M\ra < 0$, then the Dicke state $\XLDUI{P}{P}$ is PPT entangled with 
    $$\PhiCDUC{N}{M - \mathring{N}}(\XLDUI{P}{P}) \not\geq 0.$$
In other words, the entanglement in the state $\XLDUI{P}{P}$ can be detected by the witness map $\PhiDUC{N}{M - \mathring{N}}$.
\end{corollary}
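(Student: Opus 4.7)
The plan is to verify $(N, A - \mathring{N}) \in \COPCP_n \setminus \PDEC_n$ via the lifting results of \cref{sec:positivity,sec:decomposability}, and then to expose the PPT-entanglement of $\XLDUI{P}{P}$ through the $\COPCP_n$–$\PCP_n$ duality pairing.

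First, for positivity, I would invoke \cref{thm:COPCP-from-COP} with the given $N$. The required hypotheses are $N \in \EWP_n^{\sa}$ (given), $\diag(N) = \diag(A)$ (given), and $N - A \in \EWP_n$; the last holds because $(N - A)_{ii} = 0$ while for $i \neq j$ we have $N_{ij} \geq \max(A) \geq A_{ij}$, using the symmetry of $N$ for the lower triangle. Since $A \in \COP_n$, the theorem yields $(N, A - \mathring{N}) \in \COPCP_n$, and \cref{prop:COPCP-postivie} immediately gives positivity of both $\PhiDUC{N}{A - \mathring{N}}$ and $\PhiCDUC{N}{A - \mathring{N}}$.

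For non-decomposability, I would argue by contrapositive: by \cref{prop:PairDEC} it suffices to show $(N, A - \mathring{N}) \notin \PDEC_n$. Suppose for contradiction that $(N, A - \mathring{N}) \in \PDEC_n$. Using $\mathring{(A - \mathring{N})} = \mathring{A} - \mathring{N}$, one computes
\[
N + \mathring{(A - \mathring{N})} \;=\; (N - \mathring{N}) + \mathring{A} \;=\; \diag(A) + \mathring{A} \;=\; A,
\]
so \cref{prop:SPN-from-PairDEC} would force $2A \in \SPN_n$, i.e.\ $A \in \SPN_n$, contradicting $A \in \COP_n \setminus \SPN_n$.

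Finally, $P \in \DNN_n$ immediately implies, via the formulas recalled at the end of \cref{sec:preliminaries}, that $\XLDUI{P}{P}$ is positive semidefinite and PPT. To see that it is entangled I would evaluate the duality pairing using \eqref{eqn:inner-product}:
\[
\la (N, A - \mathring{N}),\, (P, P) \ra \;=\; \la N, P \ra + \la \mathring{A} - \mathring{N},\, \mathring{P} \ra \;=\; \la A, P \ra \;<\; 0,
\]
where the $N$-terms cancel because $\diag(N) = \diag(A)$. Since $(N, A - \mathring{N}) \in \COPCP_n = \PCP_n^{\circ}$, this forces $(P, P) \notin \PCP_n$, hence $P \notin \CP_n$ by \cref{prop:PCP-equal}, so $\XLDUI{P}{P}$ is not separable. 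The operator-level statement $\PhiDUC{N}{A - \mathring{N}}(\XLDUI{P}{P}) \not\geq 0$, understood as $(\PhiDUC{N}{A - \mathring{N}} \otimes \id_n)(\XLDUI{P}{P})$ being non-positive, is the Horodecki criterion incarnation of the same negativity: via the Choi–Jamio{\l}kowski duality (\cref{lemma:CJiso}) together with \eqref{eq:CLDUI-DUC}, the trace of the Choi matrix $\XCLDUI{N}{A - \mathring{N}}$ against $\XLDUI{P}{P}$ reproduces precisely $\la A, P \ra < 0$, which rules out positivity of $(\PhiDUC{N}{A - \mathring{N}} \otimes \id_n)(\XLDUI{P}{P})$. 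The only delicate point is the bookkeeping in the pairing identity; once the $N$-terms cancel, everything else is a direct invocation of the cited propositions.
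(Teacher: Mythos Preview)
Your arguments for positivity and non-decomposability are correct and match the intended route: invoke \cref{thm:COPCP-from-COP} (the hypothesis $N_{ij}\geq \max(A)$ indeed guarantees $N-A\in\EWP_n^{\sa}$), and then use the easy direction \cref{prop:SPN-from-PairDEC} to derive the contradiction $A\in\SPN_n$. Your computation $\la (N,A-\mathring N),(P,P)\ra=\la A,P\ra<0$ is also correct and cleanly establishes that $(P,P)\notin\PCP_n$, hence that $\XLDUI{P}{P}$ is PPT entangled.

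The gap is in the very last step, the operator-level detection. You claim that the trace of the Choi matrix $\XCLDUI{N}{A-\mathring N}$ against $\XLDUI{P}{P}$ equals $\la A,P\ra$. It does not: a direct computation using \eqref{eq:CLDUI-DUC} and \eqref{eq:LDUI-CDUC} shows that CLDUI and LDUI matrices overlap only on the diagonal part $\sum_{ij}|ij\ra\la ij|$, so
\[
\Tr\big[\XCLDUI{N}{A-\mathring N}\cdot \XLDUI{P}{P}\big]=\sum_{i,j}N_{ij}P_{ij}=\la N,P\ra\geq 0,
\]
since $N,P\in\EWP_n$. The pairing that actually equals $\la A,P\ra$ is $\Tr\big[\XLDUI{N}{A-\mathring N}\cdot \XLDUI{P}{P}\big]$, and by \eqref{eq:LDUI-CDUC} the matrix $\XLDUI{N}{A-\mathring N}$ is the Choi matrix of the \emph{CDUC} map $\PhiCDUC{N}{A-\mathring N}$, not the DUC map. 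Using this, together with the self-adjointness $\big(\PhiCDUC{N}{A-\mathring N}\big)^*=\PhiCDUC{N}{A-\mathring N}$ (valid since $N$ and $A-\mathring N$ are real symmetric), one gets
\[
\la\Omega|\big(\PhiCDUC{N}{A-\mathring N}\otimes\id_n\big)(\XLDUI{P}{P})|\Omega\ra=\la A,P\ra<0,
\]
so the CDUC map detects the entanglement. Your Choi-matrix argument as written does not establish detection by $\PhiDUC{N}{A-\mathring N}$; for that map the $|\Omega\ra$ test yields $\la N,P\ra\geq 0$, and a different argument (or a switch to $\PhiCDUC$) is needed.
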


In \cref{sec:DPS}, we strengthen Corollary \ref{cor:PosIndecomp} by providing a broad family of \emph{entanglement witnesses} capable of detecting entanglement in quantum states with arbitrarily high levels of \emph{PPT extendibility}.

As a final remark, it is natural to ask which is the most efficient among these large classes of entanglement witnesses. In this context, \cite{lewenstein2000optimization} introduced the notion of \emph{optimality}. A nonzero entanglement witness $W\in {\rm Sep}(\Comp^n\otimes \Comp^n)^{\circ}$ is called \emph{optimal} if it detects the maximal set of entangled states, i.e., if another nonzero entanglement witness $W'\in \mathsf{Sep}(\Comp^n\otimes \Comp^n)^{\circ}$ satisfies
    $$\Tr(W \rho)<0 \implies \Tr(W'\rho)<0$$
for every (entangled) state $\rho$, then $W'=cW$ for some $c>0$ (we refer to \cite[Lemma 1 and Corollary 1]{lewenstein2000optimization}). It is shown \cite[Theorem 1]{lewenstein2000optimization} that the optimality of $W$ is equivalent to, for arbitrary nonzero positive semidefinite matrix $X\in (M_n\otimes M_n)^+$, we have $W- X\notin \mathsf{Sep}(\Comp^n\otimes \Comp^n)^{\circ}$. In other words, any decomposition $W=W'+X$ where $W'\in \mathsf{Sep}(\Comp^n\otimes \Comp^n)^{\circ}$ and $X\geq 0$ implies $X=0$. Therefore, optimal witnesses can be regarded as \emph{edge objects} as discussed in \cite{bruss2002reflections}. A necessary condition for optimality of $W$ is that, there exists a separable state $\rho_{\rm sep}$ such that $\Tr(W\rho_{\rm sep})=0$, meaning that $W$ is on the \textit{boundary} of $\mathsf{Sep}(\Comp^n\otimes \Comp^n)^{\circ}$ (every such $W$ is on an \emph{exposed face} of $\mathsf{Sep}(\Comp^n\otimes \Comp^n)^{\circ}$ according to \cite{kye2013facial}). However, the converse is not true: using the flip operator $F$, define the witness $W:=F+|11\ra\la 11|$. Clearly, since $F$ is also an entanglement witness, $W$ is not optimal, but it satisfies
    $$\Tr(W|12\ra\la 12|)=\la 12|F_d|12\ra=0,$$
proving that it is on the boundary of $\mathsf{Sep}(\Comp^2\otimes \Comp^2)^{\circ}$.

In the following statement, we show that the previous construction again provides a wide ($(n^2-n)/2$-dimensional) class of optimal entanglement witnesses.

\begin{corollary} \label{cor:OptimalEW}
Let $M$ be an extremal copositive matrix which is also exceptional, i.e., $M\notin \SPN_n$. Then for any $N\in \EWP_n^{\sa}$ with $\diag(N)=\diag(M)$ and $N_{ij}\geq \frac{1}{2} M_{ij}$ for all $i\neq j$, the CLDUI matrix $\XCLDUI{N}{M-\mathring{N}}$ is optimal and indecomposable.
\begin{proof}
Suppose we have a decomposition $\XCLDUI{N}{M-\mathring{N}}=X_1+X_2$ where $X_1$ is block-positive and $X_2$ is a positive semidefinite matrix. By taking the \textit{CLDUI-twirling} as in \cref{eq:CLDUITwirl},
we have $\XCLDUI{N}{M-\mathring{N}}=X_1'+X_2'$ where $X_i'=\mathcal{T}_{\mathsf{CLDUI}}X_i=:\XCLDUI{A_i}{B_i}$. Then \cite[Proposition 2.1]{park2024universal} implies that $X_1'$ is block-positive and $X_2\geq 0$, and hence $(A_1,B_1)\in \COPCP_n$, $A_2\in \EWP_n$, and $B_2\in \PSDC_n$. Now let us focus on the decomposition
    $$(N,M-\mathring{N})=(A_1,B_1)+(A_2,B_2)=(A_1,B_1)+(\mathring{A}_2,0)+(\diag(B_2),B_2),$$
which yields the following decomposition
    $$M=((A_1+A_1^{\top})/2+{\rm Re}(\mathring{B}_1))+ ((\mathring{A}_2+\mathring{A}_2^{\top})/2 + {\rm Re}(B_2))$$
of the copositive matrix $M$. Since $(\mathring{A}_2+\mathring{A}_2^{\top})/2 + {\rm Re}(B_2)\in \SPN_n$, the extremality and exceptionality of $M$ forces $(\mathring{A}_2+\mathring{A}_2^{\top})/2 + {\rm Re}(B_2)=0$. By comparing both sides, we further have $(\mathring{A}_2+\mathring{A}_2^{\top})/2={\rm Re}(B_2)=0$, and hence $A_2=B_2=0$. Finally, since the CLDUI-twirling $\mathcal{T}_{\mathsf{CLDUI}}$ is trace-preserving, we have $X_2=0$, which concludes the optimality of $\XLDUI{N}{M-\mathring{N}}$.
\end{proof}
\end{corollary}

\begin{remark}
\begin{enumerate}
    \item The condition $M\notin \SPN_n$ is essential above. Indeed, $M=3(|i\ra\la j|+|j\ra\la i|)$ is an extremal copositive matrix \cite[Theorem 2.29]{shaked2021copositive}, and the choice $N=2M/3$ gives a copositive pair $\frac{1}{3}(2M,M)\in \COPCP_n\setminus \CLDUI_n^+$. However, we have a decomposition
        $$(2M,M)=(M,M)+(M,0),$$
    where $(M,M)\in \COPCP_n$ and $(M,0)\in \CLDUI_n^+$.

    \item If $N_{ij}\geq \frac{1}{2}\max(0,M_{ij})$ for some $i\neq j$, then the LDUI matrix $\XLDUI{N}{M-\mathring{N}}$ is not optimal. Indeed, We can decompose the pair into
    \begin{align*}
        (N,M-\mathring{N})&=(N-N',M-\mathring{N}+N') + (N',-N'),
    \end{align*}
    where $N'=(N_{ij}-\frac{1}{2}\max(M_{ij},0))(|i\ra\la j|+|j\ra\la i|)$ and $(N',-N')\in \LDUI^+_n$.
\end{enumerate}
\end{remark}

\bigskip

\section{Markov-Choi maps}\label{sec:XJ}

In this section, we consider another variant of Example \ref{ex:J-J} obtained by perturbing the first element of the pair $(\mathring{J}, -\mathring{J})$ with an arbitrary real matrix $A-\mathring{J}$ (with $A$ nonnegative entries), while keeping the second element of the form $\diag(A) - \mathring{J}$. This class of examples yields $\DUC$ maps that are very important in the literature, such as the Choi map \cite{choi1975} and many of its generalizations \cite{choi1977,tanahashi1988,kye1992,cho1992,ha2003class,chruscinski2018generalizing,bera2025generalizing,scala2024optimality}. These maps are of the form 
$$\PhiDUC{A}{\diag(A) - \mathring J}(Z) = \diag\{A \diag[Z]\} - \mathring Z=\sum_{i,j=1}^n (A_{ij}+\mathds{1}_{i=j})Z_{jj}|i\ra\la i|-Z.$$
We will call such maps \emph{Markov-Choi} because of their precise form: on the diagonal, the map acts like a classical Markov matrix $A$, while on the off-diagonal elements, the map flips the sign of the entry, exactly like the celebrated Choi map \cite{choi1975}.

Let us first gather several positivity properties of such maps in the following proposition. 

\begin{proposition}\label{prop:equivalent-MC-positive}
Let $A \in \EWP_n$. Then:
\begin{enumerate}
    \item $(A,\diag(A)- \mathring J) \in \COPCP_n$ if and only if $\displaystyle{\sum_{i=1}^n \frac{p_i}{p_i + (Ap)_i} \leq 1}\, \quad \forall\, p \in \mathbb R^n_{>0}$.

    \item $(A,\diag(A)-\mathring{J})\in \PDEC_n$ if and only if $\diag(A)-\mathring{J}+S\in \PSDR_n$ for some $S\in \Mrealsa{n}$ with $\diag(S)=0$ and $|S_{ij}|^2\leq A_{ij}A_{ji}$ $\forall\, i\neq j\in [n]$.
    
    \item $(A,\diag(A)- \mathring J) \in \CLDUI^+_n$ if and only if $\displaystyle{\sum_{i=1}^n \frac{1}{1+A_{ii}} \leq 1}$.

    \item $(A,\diag(A)- \mathring J) \in \LDUI^+_n$ if and only if $A_{ij}A_{ji}\geq 1$ $\forall\, i\neq j\in [n]$.
    
    \item $(A,\diag(A)- \mathring J) \in \PCP_n$ if and only if $(A,\diag(A)- \mathring J) \in \PDNN_n = \CLDUI^+_n \cap \LDUI^+_n$ if and only if $\displaystyle{\sum_{i=1}^n \frac{1}{1+A_{ii}} \leq 1}$ and $A_{ij}A_{ji}\geq 1$ $\forall\, i\neq j\in [n]$.
\end{enumerate}
\end{proposition}
\begin{proof}
For the first point, $(A,\diag(A)- \mathring J) \in \COPCP_n$ if and only if the Markov-Choi map 
$$\PhiDUC{A}{\diag A - \mathring J}(Z) =  \diag\{A \diag[Z]\} - \mathring{Z} = \diag\{(I+A) \diag[Z]\} - Z$$
is positive. For an arbitrary vector $v \in \C{n}$, where $v_i\neq 0$ for all $i\in [n]$, we have $$\Phi(\ketbra{v}{v}) \geq 0 \iff \diag\{(I+A) \diag[\ketbra{v}{v}]\} \geq \ketbra{v}{v}$$ We define $D_v := \diag\{(I+A) \diag[\ketbra{v}{v}]$, ane hence it follows that, 
\begin{align*}
    \Phi(\ketbra{v}{v}) \geq 0 &\iff D_v^{-1/2}\ketbra{v}{v} D_v^{-1/2} \leq I \\ &\iff \left\| D_v^{-1/2} \ket v \right\| \leq 1 \\ &\iff \left\| \left(\frac{v_i}{\sqrt{\sum_{j=1}^n A_{ij} + \mathds{1}_{i=j})|v_j|^2}}\right)_{i \in [n]}\right\|^2 \leq 1\\ & \iff \sum_{i=1}^n \frac{|v_i|^2}{|v_i|^2+(A|v_\cdot|^2)_i}  \leq 1,
\end{align*}
which is the condition in the statement for $p_i := |v_i|^2>0$. Conversely, the condition implies that $\PhiDUC{A}{\diag A - \mathring J}(|v\ra\la v|)\geq 0$ for all $v\in \C{n}$ with non-zero entries, and thus the positivity of $\PhiDUC{A}{\diag A - \mathring J}$ by a density argument.

For the second point, $(A,\diag(A)-\mathring{J})\in \PDEC_n$ if and only if $\diag(A)-\mathring{J}=B^{(1)}+B^{(2)}$ where $B^{(1,2)}$ satisfies the conditions in Definition \ref{def:PDEC}. Note that we may assume that $B^{(1,2)}$ are real symmetric, by taking (entrywise) real part, and $\diag(B^{(2)})=0$ by choosing $\widetilde{B}^{(1)}=B^{(1)}+\diag(B^{(2)})$ and $\widetilde{B}^{(2)}=\mathring{B}^{(2)}$ if necessary. Therefore, we have the advertised condition by setting $S=-\widetilde{B}^{(2)}$.

For the third point, by Proposition \ref{prop:(C)DUC-CP-CLDUI+}, the condition on the LHS is equivalent to $\diag(A)- \mathring J \in \PSDC_n$. Therefore, similarly as in the proof of the first point,
\begin{align*}
    \diag(A) \geq \mathring J &\iff \diag(I+A) \geq J = \ketbra{\mathbf 1}{\mathbf 1}\iff \left\|(\diag(I+A))^{-1/2} \ket{\mathbf 1} \right\| \leq 1\\
    &\iff \left\| \left(\frac{1}{\sqrt{1+A_{ii}}}\right)_{i \in [n]}\right\|^2= \sum_{i=1}^n \frac{1}{1+A_{ii}}\leq 1.
\end{align*}
The fourth point is straightforward from the condition in Proposition \ref{prop:(C)DUC-CP-CLDUI+}. Finally, last fifth point follows from the \emph{comparison matrix} argument from \cite[Lemma 2.11]{singh2021diagonal} since $M(\diag(A)-\mathring{J})=\diag(A)-\mathring{J}$.
\end{proof}

Note that the last statement of Proposition \ref{prop:equivalent-MC-positive} implies that
\begin{center}
    $\XLDUICLDUI{A}{\diag(A)-\mathring{J}}$ is separable if and only if PPT,
\end{center}
by Propositions \ref{prop:PCP-EB} and \ref{prop:(C)DUC-PPT-PDNN}. In other words, it is impossible to find PPT entanglement directly from pairs of the form $(A, \diag(A)-\mathring{J})$.

It is in general challenging to solve analytically the conditions in Proposition~\ref{prop:equivalent-MC-positive}~(1) and~(2) (we refer to \cite{cho1992,yamagami1993,ha2003class,bera2025generalizing} for exact solutions in several special cases). Since our main interest lies in finding \emph{exceptional} pairs $(A,\diag(A)-\mathring{J})$ in $\COPCP_n\setminus \PDEC_n$ (equivalently, those for which the corresponding Markov–Choi map $\PhiDUC{A}{\diag A - \mathring J}$ is positive and indecomposable), we shall further restrict our attention to several special classes of matrices $A$ for which membership in the cones $\COPCP_n$ or $\PDEC_n$ can be explicitly obtained, and accordingly, examples of positive indecomposable maps can be readily constructed.

\subsection{Markov-Choi maps from exceptional copositive matrices} \label{sec:MC1}

In relation with copositivity, we first record some simple necessary and, respectively, sufficient conditions for $(A,\diag(A)-\mathring{J})\in \COPCP_n$.

\begin{proposition}
Let $A\in \EWP_n$ and let $\widetilde{A}:=(A\odot A^{\top})^{\odot 1/2}=(\sqrt{A_{ij}A_{ji}})_{i,j\in [n]}$. Then we have the following implications:
    $$\widetilde{A}-\mathring{J}\in \COP_n \implies (A,\diag(A)-\mathring{J})\in \COPCP_n \implies (A+A^{\top})/2-\mathring{J}\in \COP_n.$$
\end{proposition}
\begin{proof}
{
If $\widetilde{A}-\mathring{J} \in \COP_n$, then the first statement of \cref{thm:COPCP-from-COP} with $N=\widetilde{A}$ and $M=\widetilde{A} - \mathring{J}$ implies that $(\widetilde{A},\diag(\widetilde{A})-\mathring{J})\in \COPCP_n$. By Proposition \ref{prop:properties-COPCP} (4) , this implies that $\widetilde{A}-O\mathring{J}O\in \COP_n$ for all $O\in D\mathcal{O}_n$ and hence, by Corollary \ref{cor:PCOP-from-COP2}, we have that $(A,\diag(A)-\mathring{J})\in \COPCP_n$. This shows the first implication.}
The second implication is straightforward from Proposition \ref{prop:properties-COPCP} (3).
\end{proof} Note that if $A$ is further assumed to be \emph{symmetric}, i.e., $A\in \EWP_n^{\sa}$, then we have
\begin{equation} \label{eq:Symm-Markov-Choi}
    (A,\diag(A)-\mathring{J})\in \COPCP_n \iff A-\mathring{J}\in \COP_n.
\end{equation}
which is a special case of \cref{thm:COPCP-from-COP} by choosing $N=A, M = A - \mathring J$.

On the other hand, a necessary condition for $(A,\diag(A)-\mathring{J})\in \PDEC_n$ in Proposition \ref{prop:SPN-from-PairDEC} is $\widetilde{A}-\mathring{J}\in \SPN_n$. Therefore, one can construct the matrix pair as follows: for any \emph{exceptional} copositive matrix $M\in \COP_n\setminus \SPN_n$, we have $(A,\diag(A)-\mathring{J})\in \COPCP_n\setminus \PDEC_n$ whenever
    $$\widetilde{A}=(A\odot A^{\top})^{\odot 1/2}=\lambda M+\mathring{J}$$
for $0<\lambda<(\max(M_-))^{-1}$ (note that $M_-\neq 0$ since $M\notin \EWP_n$). Note that when $A=\lambda M+\mathring{J} = \tilde A$, the corresponding positive indecomposable Markov-Choi maps
    $$\PhiDUC{A}{\diag(A)-\mathring{J}}(Z)=\lambda \sum_{i,j}M_{ij}Z_{jj}|i\ra\la i|+\Tr(Z)I_n-Z, \quad Z\in \M{n},$$
already provides us with a new class of constructions. {To our knowledge, all the previous results in the literature assumed $A$ to be \emph{asymmetric} to construct examples of positive-indecomposable maps. Note that in the cases they consider, $\widetilde{A}-\mathring{J}$ is also not necessarily copositive. This shows that, this is not a necessary condition for $\COPCP_n$, and shift our focus on this case in the remainder of the section.}

\subsection{Markov-Choi maps from non-copositive matrices}

Here we discuss two cases where the exceptional pair $(A,\diag(A)-\mathring{J})\in \COPCP_n\setminus \PDEC_n$ is obtained from a matrix $A$ such that $\widetilde{A}-\mathring{J}\notin \COP_n$. Both cases would result in generalizing previous examples of positive indecomposable Markov-Choi maps from \cite{kye1992,cho1992,ha2003class}.

The first case is when $\widetilde{A}=(A\odot A^{\top})^{\odot 1/2}$ is diagonal. In this case, it is shown that
    $$\widetilde{A}-\mathring{J}\in \COP_n \iff \widetilde{A}-\mathring{J}\in \SPN_n \iff \widetilde{A}-\mathring{J}\in \PSDR_n \iff \sum_{i=1}^n \frac{1}{1+A_{ii}}\leq 1,$$
from \cite[Theorem 2.102]{shaked2021copositive}. 

We show below that this is also equivalent to the membership in $\PDEC_n$.

\begin{proposition} \label{prop:MC-PDEC1}
Suppose $A\in \EWP_n$ satisfies $A_{ij}A_{ji}=0$ for all $i\neq j\in [n]$. Then we have
    $$(A,\diag(A)-\mathring{J})\in \PDEC_n \iff \sum_{i=1}^n \frac{1}{1+A_{ii}}\leq 1 \iff (A,\diag(A)-\mathring{J}) \in \CLDUI_n^+.$$
In particular, we have $(A,\diag(A)-\mathring{J})\in \COPCP_n\setminus \PDEC_n$ whenever $A$ satisfies $\sum_{i=1}^n \frac{1}{1+A_{ii}} > 1$ and $\sum_{i=1}^n \frac{p_i}{p_i+(Ap)_i}\leq 1$ for all $p\in \Real_{>0}^n$.
\end{proposition}
\begin{proof}
Since $A_{ij}A_{ji}\equiv 0$, the condition in Proposition \ref{prop:equivalent-MC-positive} (2) forces that $S=0$, and hence $\diag(A)-\mathring{J}\in \PSDR_n$, which is equivalent to $(A,\diag(A)-\mathring{J})\in \CLDUI_n^+$ as in Proposition \ref{prop:equivalent-MC-positive} (3).
\end{proof}

\begin{example} \label{ex:MC-PosIndecomp1}
For a $3\times 3$ matrix $A=\begin{bmatrix}
    a_1 & 0 & c_1 \\ c_2 & a_2 & 0 \\ 0 & c_3 & a_3
\end{bmatrix}\in \EWP_3$, it is shown \cite{kye1992,chruscinski2018generalizing} that $(A,\diag(A)-\mathring{J})\in \COPCP_3$ whenever $a_1,a_2,a_3\geq a$ and $(c_1c_2c_3)^{1/3} \geq 2-a$ for some $a
\geq 1$. More generally, for $A\in \EWP_n$, the result of \cite[Theorem 2.4]{ha2003class} implies that $(A,\diag(A)-\mathring{J})\in \COPCP_n$ if, for some $k=1,\ldots, n-1$ and $a\geq n-2$, we have $A_{ii}\geq a$ for all $i\in [n]$ and $\big(\prod_{i=1}^nA_{i,i+k}\big)^{1/n}\geq n-a-1$ where the sum in the index is in modulo $n$. {Indeed, we may add an entrywise positive matrix to setting in \cite[Theorem 2.4]{ha2003class}, namely, $A=N+A^{\rm Ha}$ with $N\in \EWP_n$ and 
    $$A^{\rm Ha}_{ii}\equiv a \text{ and } A_{ij}^{\rm Ha}=c_i\delta_{j,i+k}, \quad a\geq n-2, \;\; (c_1\cdots c_n)^{1/n}\geq n-a-1,$$
and using the cone property of $\COPCP_n$}. Therefore, one can construct the exceptional pair $(A,\diag(A)-\mathring{J})\in \COPCP_n\setminus \PDEC_n$ under the conditions
    $$A_{ij}A_{ji}=0,\;\; a\leq A_{ii}<n-1 \text{ for all $i\neq j\in [n]$, and } \Big(\prod_{i=1}^nA_{i,i+k}\Big)^{1/n}\geq n-a-1$$
for some $k=1,\ldots, n-1$ and $n-2\leq a<n-1$. 

\end{example}

\begin{remark}
In \cite{ha2003class}, the author showed a stronger form of indecomposability in the previous map: when $A_{ii}\equiv a$ and $A_{ij}=c_i\, \delta_{j,i+k}$ with $(c_1\cdots c_n)^{1/n}\geq n-a-1$, the corresponding map $\PhiDUC{A}{\diag(A)-\mathring{J}}$ is \emph{atomic}, i.e., it cannot be a sum of a 2-positive and a 2-copositive linear map.
\end{remark}

\medskip

Next, let us consider the case where $\widetilde{A} = (A \odot A^T)^{\odot 1/2}=aI_n+\lambda\mathring{J}_n$ for some $a,\lambda \geq 0$.
As in the previous case, we can show $\widetilde{A}-\mathring{J}\in \COP_n$ if and only if $\widetilde{A}-\mathring{J}\in \SPN_n$, which also characterizes pairwise decomposability of $(A,\diag(A)-\mathring{J})$.

\begin{proposition} \label{prop:MC-PDEC2}
Suppose $A\in \EWP_n$ satisfies $A_{ii}\equiv a$ and  $\sqrt{A_{ij}A_{ji}}\equiv \lambda$ for all $i\neq j\in [n]$. Then we have
    $$(A,\diag(A)-\mathring{J})\in \PDEC_n \iff \widetilde{A}-\mathring{J}\in \COP_n \iff \widetilde{A}-\mathring{J}\in \SPN_n \iff \lambda\geq 1-\frac{a}{n-1}.$$
In particular, we have $(A,\diag(A)-\mathring{J})\in \COPCP_n\setminus \PDEC_n$ whenever $0\leq a<n-1$, $0\leq \lambda<1-\frac{a}{n-1}$, and  $A$ satisfies $\sum_{i=1}^n \frac{p_i}{p_i+(Ap)_i}\leq 1$ for all $p\in \Real_{>0}^n$.
\end{proposition}
\begin{proof}
The condition $(A,\diag(A)-\mathring{J})\in \PDEC_n$ is equivalent to $(aI_n+\lambda\mathring{J},aI_n-\mathring{J}_n)\in \PDEC_n$, thanks to \cref{eq:SymmPDEC}. The decomposability of the corresponding Markov-Choi map
    $$\PhiDUC{aI_n+\lambda\mathring{J}}{aI_n-\mathring{J}_n}(Z) = \lambda \Tr(Z)I_n+(a+1-\lambda)\diag(Z)-Z, \quad Z\in \M{n},$$
has been characterized in \cite[Section 4]{park2024universal}, where it is shown that it is equivalent to its positivity. This shows that
    $$(A,\diag(A)-\mathring{J})\in \PDEC_n \iff (\widetilde{A},\diag(A)-\mathring{J})\in \COPCP_n \iff \widetilde{A}-\mathring{J}\in \COP_n,$$
where the last equivalence follows from \cref{thm:COPCP-from-COP}.
Now if $\lambda\geq 1$, then clearly $\widetilde{A}-\mathring{J}_n=aI_n+(\lambda-1)\mathring{J}_n\in \EWP_n^{\sa}$. If $\lambda<1$, then by \cite[Theorem 2.102]{berman2023completely},
\begin{align*}
    \widetilde{A}-\mathring{J}=aI_n+(\lambda-1)\mathring{J}_n\in \COP_n & \iff aI_n+(\lambda-1)\mathring{J}_n\in \SPN_n\\
    &\iff aI_n+(\lambda-1)\mathring{J}_n\in \PSDR_n\\
    &\iff a+(n-1) 
    (\lambda-1)\geq 0,
\end{align*}
which shows the remaining equivalences in the statement.
\end{proof}

\begin{example} \label{ex:MC-PosIndecomp2}
Consider a $3\times 3$ matrix $A=\begin{bmatrix}
    a & b_1 & c_1 \\ c_2 & a & b_2 \\ b_3 & c_3 & a
\end{bmatrix}\in \EWP_3$ satisfying
    $$\sqrt{b_1c_2}=\sqrt{b_2c_3}=\sqrt{c_1b_3}=\lambda, \quad a+(b_1b_2b_3)^{1/3}+(c_1c_2c_3)^{1/3}\geq 2, \quad \text{and} \quad a+\lambda\geq 1.$$
If $b_*:=(b_1b_2b_3)^{1/3}$ and $c_*:=(c_1c_2c_3)^{1/3}$, the last two conditions are equivalent to $a+b_*+c_*\geq 2$ and $a+\sqrt{b_*c_*}=a+\lambda\geq 1$. Therefore, we may apply the result from \cite[Corollary 2.4]{chruscinski2018generalizing}, which is mainly based on \cite[Theorem 2.1]{cho1992}, to obtain that $(A,\diag(A)-\mathring{J})\in \COPCP_3$. 
On the other hand, Proposition \ref{prop:MC-PDEC2} implies that $(A,\diag(A)-\mathring{J})\in \PDEC_n$ if and only if $a+2\lambda\geq 2$, which depends only on $a$ and $\lambda$. Consequently, we can obtain positive indecomposable Markov-Choi maps $\PhiDUC{A}{\diag(A)-\mathring{J}}$ under the conditions $0\leq a <2$ and $1-a\leq \lambda<1-\frac{a}{2}$, thereby recovering and generalizing the construction in \cite{cho1992}.
\end{example}

{Finally, we provide another family of positive indecomposable maps in all dimensions.}

\begin{example} \label{ex:MC-PosIndecomp3}
Let us consider a \emph{circulant matrix} $A={\rm circ}(a,c_1,\ldots, c_{n-1})\in \EWP_n$, i.e.,
\begin{equation} \label{eq:circulant}
    A_{ii}\equiv a \text{ and } A_{i,i+k}\equiv c_k; \quad i=1,\ldots, n,\;\;k=1,\ldots, n-1,
\end{equation}
with $a,c_j\geq 0$ and $\sqrt{c_jc_{n-j}}=\lambda$ for all $j=1,\ldots, \lfloor\frac{n}{2}\rfloor$. Then by taking $c:=\max_{1\leq j\leq n-1} c_j$, the argument in Example \ref{ex:MC-PosIndecomp1} based on \cite{ha2003class} implies that $(A,\diag(A)-\mathring{J})\in \COPCP_n$ whenever $a\geq n-2$ and $c\geq n-a-1$. Therefore, Proposition \ref{prop:MC-PDEC2} implies that $(A,\diag(A)-\mathring{J})\in \COPCP_n\setminus \PDEC_n$ under the conditions
    $$n-2\leq a < n-1, \quad c\geq n-a-1, \quad \text{and} \quad 0\leq \lambda < 1-\frac{a}{n-1}.$$
\end{example}

\section{Maps associated to graphs}\label{sec:graphs}

In this section, we define a one-parameter family of maps using adjacency matrix of a graph. Recall that for every graph $G$ we can define its \emph{adjacency matrix} 
$$A_G(i,j) = \begin{cases}
    1 &\quad \text{ if $(i,j)$ is an edge in $G$}\\
    0 &\quad \text{ otherwise}.
\end{cases}$$

Throughout this and the next section, we assume that $G$ is a simple, undirected, and loop-less graph with $n$ vertices, in which case we always have $A_G\in \Mrealsa{n}$ and $A_G=\mathring{A}_G$. We introduce the following quantities:
\begin{itemize}
    \item $\lambda(G)$ the \emph{maximal eigenvalue} of its adjacency matrix $A_G$. For a non-empty graph, $\lambda(G) \geq 1$;
    \item $\omega(G)$ its \emph{clique number}, i.e.~the size of the largest clique in $G$;
    \item $\alpha(G)$ the \emph{stable or independence number} of $G$, i.e.~the size of its largest stable set; clearly $\alpha(G) = \omega(\bar G)$, where $\bar G$ is the complementary graph of $G$.
\end{itemize}
We refer the reader to \cite{godsil2013algebraic} for more details about algebraic graph theory and graph parameters. 

For a graph $G$ on $n$ vertices, we also introduce the quantity 
\begin{equation}\label{eq:def-sigma}
    \sigma(G) := \max\{t \in \R \, : \,  J - t A_G \in \SPN_n\},
\end{equation}
where we recall that $\SPN_n$ is the set of real symmetric matrices that can be written as a sum of a positive semidefinite real matrix and an entrywise non-negative matrix. In particular, $\sigma(G) \geq 1+\frac{1}{n-1}$ for all graphs $G$ since
    $$J_n-\Big(1+\frac{1}{n-1}\Big)A_G = \frac{1}{n-1}(nI_n-J_n)+\frac{n}{n-1}(J_n-I_n-A_G)\in \PSDR_n+\EWP_n^{\sa}=\SPN_n.$$
Note that the value $\sigma(G)$ as defined is a property of the isomorphism class of the graph $G$ because the set $\SPN_n$ is invariant under permutations. Moreover, given a graph $G$, the value $\sigma(G)$ can be computed using semidefinite programming \cite{boyd2004convex}. Moreover, it turns out that $\sigma(G)$ is related to a \emph{hierarchy of semidefinite programs} that approximates the independence number of $G$ \cite{deklerk2002approximation, schrijver2003comparison} that we review next; this hierarchy will also play an important role in \cref{sec:sos-hierarchies}. First, we recall the following result casting the independence number as a copositive program. 

\begin{theorem}[{{\cite[Corollary 2.4]{deklerk2002approximation}}}]\label{thm:stable-COP}
    For any graph $G$ on $n$ vertices,
    $$\alpha(G) = \min\{ t \, : \, t(I+A_G) - J \in \COP_n\}.$$
\end{theorem}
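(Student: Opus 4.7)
The plan is to translate the copositivity condition into an inequality on the standard simplex $\Delta_n := \{y \in \R^n_+ : \sum_i y_i = 1\}$, and then invoke (a graph-theoretic form of) the Motzkin--Straus identity. Explicitly, $t(I + A_G) - J \in \COP_n$ means that for all $x \in \R^n_+$,
$$t \cdot \la x, (I+A_G) x\ra \geq \la x, J x\ra = \Big(\sum_i x_i\Big)^2.$$
Since both sides are homogeneous of degree $2$ in $x$ and trivially satisfied at $x=0$, we may rescale $y := x/\sum_i x_i \in \Delta_n$ to see that the condition is equivalent to
$$t \geq \frac{1}{\min_{y \in \Delta_n} \la y, (I+A_G) y \ra}.$$
Thus the theorem reduces to the identity $\min_{y \in \Delta_n} \la y, (I+A_G) y\ra = 1/\alpha(G)$, which is essentially Motzkin--Straus applied to the complement graph $\bar G$ (whose clique number is $\alpha(G)$).

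For the upper bound on the minimum, I would take $y$ to be the uniform distribution on a maximum independent set $S \subseteq [n]$ with $|S| = \alpha(G)$: then $\la y, A_G y\ra = 0$ (no edges within $S$) and $\la y, y\ra = 1/\alpha(G)$, yielding $\la y, (I+A_G) y\ra = 1/\alpha(G)$. This corresponds, on the copositive side, to the necessary condition $t \geq \alpha(G)$.

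For the matching lower bound, the key step is to show that the minimum is attained on some $y^\ast \in \Delta_n$ whose support is an independent set. This is a classical perturbation argument: if the support of $y$ contains an edge $\{i,j\}$, then the restriction of $y \mapsto \la y, (I+A_G) y\ra$ to the segment $\{y_i + y_j = c,\ y_k\ \text{fixed for } k \ne i,j\}$ is \emph{linear} in the perturbation parameter (the quadratic contributions from the diagonal and from the edge $\{i,j\}$ cancel exactly). Therefore one can slide $y_i, y_j$ to zero out one of them without increasing the objective, iterating until the support is independent. Once supported on an independent set $S$, we have $\la y, A_G y\ra = 0$ and $\la y, y\ra \geq 1/|S| \geq 1/\alpha(G)$ by Cauchy--Schwarz, completing the lower bound.

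The main obstacle is the Motzkin--Straus style lower bound, since the easy direction (plugging in the indicator of a maximum independent set) and the simplex-reduction of copositivity are straightforward. The elimination-of-edges argument is standard but must be carried out carefully to ensure one never leaves $\Delta_n$; it is precisely this step which uses the self-loopless, simple structure of $G$.
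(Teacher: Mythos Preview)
The paper does not actually prove this theorem: it is quoted verbatim as \cite[Corollary 2.4]{deklerk2002approximation} and used as a black box in the proof of \cref{thm:graph-map-properties}. So there is no ``paper's own proof'' to compare against.

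Your argument is correct and is essentially the standard derivation. The reduction from copositivity to the simplex minimum is immediate by homogeneity, and the identity $\min_{y\in\Delta_n}\langle y,(I+A_G)y\rangle=1/\alpha(G)$ is precisely the Motzkin--Straus theorem applied to $\bar G$ (since $\langle y,A_{\bar G}y\rangle=1-\langle y,(I+A_G)y\rangle$ on $\Delta_n$). Your edge-elimination step is sound: for an edge $\{i,j\}$ the quadratic coefficient of the perturbation $y\mapsto y+\epsilon(e_i-e_j)$ is $(I+A_G)_{ii}-2(I+A_G)_{ij}+(I+A_G)_{jj}=1-2+1=0$, so the objective is affine in $\epsilon$ on the segment and one endpoint (which removes $i$ or $j$ from the support) is at least as good; termination is guaranteed because the support size strictly decreases. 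The only small point worth adding is that when the linear term also vanishes you may move in either direction, but this is harmless.
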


Since membership in the copositive cone is a computationally hard problem \cite{murty1987some}, the following hierarchy of semidefinite program has been introduced. We recall that a polynomial $P$ in $n$ variables $x_1, \ldots, x_n$ is a sum of squares (SOS) if it can be decomposed as a
$$P(x) = \sum_{i=1}^d Q_i(x)^2$$
for some polynomials $Q_1, \ldots, Q_d$; SOS polynomials are clearly non-negative, but the converse is false in general \cite{blekherman2012nonnegative, blekherman2012semidefinite}.

\begin{definition}[{{\cite[Definition 3.1 and Eq.~(20)]{deklerk2002approximation}}}]\label{def:graph-parameters-K}
    For any graph $G$ on $n$ vertices, define
    $$\vartheta^{(r)}(G) := \min\{ t \, : \, t(I+A_G) - J \in \K_n^{(r)}\},$$
    where
    $$\K_n^{(r)}:=\{ A \in \Mrealsa{n} \, : \, x\in \Real^n \mapsto \la x^{\odot 2}| A | x^{\odot 2} \ra \|x\|_2^{2r} \text{ is SOS}\}.$$
\end{definition}
We gather next some important and relevant results regarding the parameters $\vartheta^{(r)}(G)$ and the corresponding sets $\K_n^{(r)}$ \cite{deklerk2002approximation, gvozdenovic2007semidefinite, laurent2023exactness, schweighofer2024sum}. First, for $r=0$ the sum of squares decomposition for the polynomial $\la x^{\odot 2}| A | x^{\odot 2} \ra$ corresponds precisely to decomposing $A = P + E$ with a real positive semidefinite matrix $P$ and a matrix $E$ having non-negative entries \cite{parrilo2000structured}; we have thus
$$\K_n^{(0)} = \SPN_n.$$
The fundamental result here is that the numbers $\vartheta^{(r)}(G)$ provide increasingly good approximations for the independence number \cite{deklerk2002approximation}: 
$$\alpha(G) \xleftarrow{r \to \infty} \vartheta^{(r)}(G)\leq \cdots \leq \vartheta^{(1)} \leq  \vartheta^{(0)}(G).$$
It has been recently shown \cite{schweighofer2024sum} that the sequence $(\vartheta^{(r)}(G))_{r\geq 0}$ is actually finite: for every graph $G$ there exists a finite $r_0$ such that $\alpha(G)= \vartheta^{(r_0)}(G)$, while the conjecture \cite{deklerk2002approximation} that $\vartheta^{(\alpha(G)-1)}(G)=\alpha(G)$, proven for $\alpha(G) \leq 8$, remains open in the general case. 

Since $J-tA_G = J-t(J-I-A_{\bar{G}})= (t-1)(\frac{t}{t-1}(I+A_{\bar{G}})-J)$ and since $\sigma(G)>1$, it is now clear that the quantity $\sigma$ from \cref{eq:def-sigma} is related to the value $\vartheta^{(0)}$ from Definition \ref{def:graph-parameters-K}: 
$$\sigma(G) = \frac{\vartheta^{(0)}(\bar G)}{\vartheta^{(0)}(\bar G)-1}.$$

Since $\vartheta^{(0)}(\bar G) \geq \alpha(\bar G) = \omega(G)$, we have the following upper bound:
\begin{equation}\label{eq:UB-sigma}
    \sigma(G) \leq \frac{\omega(G)}{\omega(G) - 1} = 1+\frac{1}{\omega(G)-1}.  
\end{equation}

\begin{lemma}
    For every graph $G$, we have the following lower bound:
    $$\sigma(G) \geq 1 + \frac{1}{\lambda(G)}.$$
\end{lemma}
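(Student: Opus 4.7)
The plan is to exhibit an explicit decomposition of $J_n - tA_G$ as the sum of a real PSD matrix and an entrywise non-negative symmetric matrix for $t = 1 + 1/\lambda(G)$, which will directly establish the lower bound on $\sigma(G)$ via the definition \eqref{eq:def-sigma}.

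The key observation is the elementary identity $J_n = I_n + A_G + A_{\bar G}$, since the off-diagonal $(i,j)$ entry of $J_n$ equals $1$ and is accounted for by exactly one of $A_G$ or $A_{\bar G}$. Using this, for $t = 1 + 1/\lambda(G)$, I would write
\begin{equation*}
J_n - tA_G \;=\; J_n - A_G - \tfrac{1}{\lambda(G)}A_G \;=\; \Bigl(I_n - \tfrac{1}{\lambda(G)} A_G\Bigr) + A_{\bar G}.
\end{equation*}
The second summand $A_{\bar G}$ lies in $\EWP_n^{\sa}$ since it is a $0/1$ adjacency matrix. For the first summand, every eigenvalue $\mu$ of $A_G$ satisfies $\mu \leq \lambda(G)$, so every eigenvalue of $I_n - \frac{1}{\lambda(G)}A_G$ is of the form $1 - \mu/\lambda(G) \geq 0$, showing $I_n - \tfrac{1}{\lambda(G)} A_G \in \PSDR_n$. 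Thus $J_n - tA_G \in \PSDR_n + \EWP_n^{\sa} = \SPN_n$, giving $\sigma(G) \geq t = 1 + 1/\lambda(G)$.

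This argument is essentially complete with no real obstacle; the only minor point to check is that the statement implicitly assumes a non-empty graph so that $\lambda(G) > 0$ (indeed $\lambda(G) \geq 1$ as noted just before the definition of $\sigma(G)$), so that dividing by $\lambda(G)$ is legitimate. If desired, I could also remark that the bound is tight for graphs where $\lambda(G) = \omega(G) - 1$ (for instance, complete graphs), since then the lower bound $1 + 1/\lambda(G)$ matches the upper bound $1 + 1/(\omega(G)-1)$ from \eqref{eq:UB-sigma}.
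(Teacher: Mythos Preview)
Your proof is correct and is essentially identical to the paper's own proof: both exhibit the decomposition $J_n - (1+1/\lambda(G))A_G = (I_n - \tfrac{1}{\lambda(G)}A_G) + A_{\bar G}$ with the first summand in $\PSDR_n$ and the second in $\EWP_n^{\sa}$. Your additional remarks about the non-empty graph assumption and about tightness for complete graphs are accurate but not in the paper's version.
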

\begin{proof}
    The claim follows from the decomposition
    $$J-\Big(1 +\frac{1}{\lambda(G)}\Big) A_G=
    \underbrace{I - \frac{1}{\lambda(G)}A_G}_{\in \PSDR}+
    \underbrace{J-I-A_G}_{ = A_{\bar G} \in \EWP}\in \SPN_n.$$
\end{proof}

\medskip

We come now to the main result of this section, characterizing the different positivity properties of a linear map defined by  a graph. More precisely, given a graph $G$ and $t \geq 0$, let us consider the pair $(J, I-t A_G) \in \Mreal{n} \underset{\R{n}}{\times} \Msa{n}$.

\begin{theorem}\label{thm:graph-map-properties}
    Let $G$ be a graph with at least one edge, and consider, for a non-negative parameter $t \geq 0$, the linear map 
    \begin{equation}\label{eq:def-Phi-G-t}
        \Phi^G_t(Z) := (\Tr Z) I - t A_G \odot Z, \quad Z\in \M{n}
    \end{equation}
    Then, the map $\Phi^G_t(Z)$ is
    \begin{enumerate}
        \item completely positive $\iff$ PPT $\iff$ EB $\iff$ $t \leq 1/\lambda(G)$;
        \item completely copositive $\iff$ $t \leq 1$;
        \item decomposable $\iff$ $t \leq \sigma(G)$;
        \item positive $\iff$ $t \leq 1+1/(\omega(G)-1)$;
    \end{enumerate}
\end{theorem}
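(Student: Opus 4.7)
The core observation is that the map in \cref{eq:def-Phi-G-t} is DUC and takes the form $\Phi^G_t = \PhiDUC{J_n}{I_n-tA_G}$, with $(J_n, I_n-tA_G)\in \Mreal{n}\underset{\R{n}}{\times}\Msa{n}$. Each of the four properties then translates, via the dictionaries of Propositions~\ref{prop:(C)DUC-CP-CLDUI+}, \ref{prop:(C)DUC-PPT-PDNN}, \ref{prop:PCP-EB}, \ref{prop:PairDEC}, and Theorem~\ref{prop:COPCP-postivie}, into membership of the pair $(J_n, I_n-tA_G)$ in one of the cones $\CLDUI_n^+$, $\PDNN_n$, $\PCP_n$, $\PDEC_n$, or $\COPCP_n$. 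The rest of the proof is a cone-by-cone threshold computation.

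For (1), complete positivity is equivalent to $I_n-tA_G\in\PSDC_n$, i.e.\ $t\leq 1/\lambda(G)$ (since $J_n\in \EWP_n$ is automatic). The PPT condition adds $J_{ij}J_{ji}\geq t^2(A_G)_{ij}^2$ on every edge, i.e.~$t\leq 1$, which is subsumed by the CP bound. The genuinely non-trivial direction is EB from CP, and this is the step I expect to be the main obstacle since EB is a priori strictly stronger than CP. To handle it I would exhibit $(J_n, I_n-tA_G)\in \PCP_n$ whenever $I_n-tA_G\geq 0$ by taking the spectral decomposition $I_n-tA_G=\sum_k c_k|w_k\rangle\langle w_k|$ with $c_k\geq 0$ and fixing $v=\mathbf{1}_n$ in every extremal generator of $\PCP_n$ from \cref{eq:ext-PCP}; the resulting sum $\sum_k c_k(|\mathbf{1}_n\rangle\langle w_k\odot\bar w_k|,|w_k\rangle\langle w_k|)$ reproduces $(J_n, I_n-tA_G)$ precisely because $\sum_k c_k(w_k\odot\bar w_k)=\diag[I_n-tA_G]=\mathbf{1}_n$. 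For (2), a direct computation yields $\XCLDUI{A}{B}^{\Gamma}=\XLDUI{A}{B}$, so complete co-positivity of $\Phi^G_t$ is equivalent to positivity of $\XLDUI{J_n}{I_n-tA_G}$; inspecting the $2\times 2$ blocks supported on each $\operatorname{span}\{|ij\rangle,|ji\rangle\}$ gives the single constraint $1\geq t^2(A_G)_{ij}^2$, i.e.~$t\leq 1$.

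For (3), I apply Theorem~\ref{thm:PDEC-from-SPN} with $N=J_n$ and $A=J_n-tA_G$: the three hypotheses (common diagonal $I_n$, $N\in \EWP_n^\sa$, and the pointwise bound $N_{ij}\geq \tfrac12 A_{ij}+\tfrac14(A_{ii}+A_{jj})$) all reduce to $t\geq 0$, so the theorem yields $(J_n, I_n-tA_G)\in\PDEC_n$ iff $J_n-tA_G\in\SPN_n$, which is the very definition of $t\leq \sigma(G)$. For (4), Corollary~\ref{cor:diagA-A} applied with $A=I_n-tA_G$ (whose hypothesis $I_n-A=tA_G\in \EWP_n$ requires only $t\geq 0$) gives $(J_n, I_n-tA_G)\in \COPCP_n$ iff $J_n-tA_G\in\COP_n$. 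I would then pass to the complementary graph via $A_G=J_n-I_n-A_{\bar G}$, rewriting $J_n-tA_G=t(I_n+A_{\bar G})-(t-1)J_n$. For $t\leq 1$ this is already in $\SPN_n\subseteq \COP_n$; for $t>1$, dividing by $t-1$ reduces the question to $\tfrac{t}{t-1}(I_n+A_{\bar G})-J_n\in\COP_n$, which by Theorem~\ref{thm:stable-COP} holds iff $\tfrac{t}{t-1}\geq \alpha(\bar G)=\omega(G)$, equivalent to $t\leq 1+1/(\omega(G)-1)$. Combining the four thresholds gives the claimed equivalences, and consistency is guaranteed by the chain $1/\lambda(G)\leq 1\leq \sigma(G)\leq 1+1/(\omega(G)-1)$ recorded earlier.
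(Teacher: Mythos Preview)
Your proposal is correct and follows essentially the same route as the paper's proof: identify $\Phi^G_t=\PhiDUC{J_n}{I_n-tA_G}$, then reduce each item to membership of $(J_n,I_n-tA_G)$ in the appropriate pairwise cone and compute the threshold via \cref{thm:PDEC-from-SPN}, \cref{cor:diagA-A}, and \cref{thm:stable-COP}. The one place where you add something is the EB direction in (1): the paper simply invokes \cite[Proposition~3.7]{singh2021diagonal} (correlation matrices with an all-ones row give $\PCP$ pairs), whereas you give a self-contained argument by fixing $v=\mathbf{1}_n$ in the extremal generators of $\PCP_n$ and summing against the spectral decomposition of $I_n-tA_G$; this is precisely the content of that cited proposition, made explicit.
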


\begin{proof}
The map $\Phi^G_t$ is a $\DUC$ map corresponding to the pair $(J, I - t A_G)$. Hence, many of the properties in the statement can be addressed using the theory developed in this work and in \cite{singh2021diagonal}. 

\medskip

\noindent \underline{\textit{Complete positivity}}. We have 
$$(J, I - t A_G) \in \CLDUI_n^+ \iff \begin{cases}
    J \in \EWP_n \\
    I-t A_G \in \PSDR_n \iff t \lambda(G) \leq 1.
\end{cases}$$
Note that $I-t A_G\in \mathsf{PSD}_n$ if and only if $I-t A_G$ is a \textit{correlation matrix}. By \cite[Proposition 3.7]{singh2021diagonal}, we have 
\begin{center}
    $\Phi_t^{G}$ is EB $\iff (J,I-t A_G)\in \PCP_n \iff$ $\Phi_t^{G}$ is PPT $\iff$ $I-t A_G\in \PSDR_n \iff $ $\Phi_t^{G}$ is CP.
\end{center}

\medskip

\noindent\underline{\textit{Complete copositivity}}. This is very similar to the previous case: 
$$(J, I - t A_G) \in \LDUI_n^+ \iff \begin{cases}
    J \in \EWP_n &\\
    \forall i \neq j, \  (I-t A_G)_{ij}^2 \leq 1 \iff t \leq 1.
\end{cases}$$

\medskip

\noindent\underline{\textit{Decomposability}}. 
Since $\max(J-tA_G)=1$, the map $\Phi_t^G$ is decomposable, i.e., $(J, I-t A_G) \in \PDEC_n$, if and only if $A=J - t A_G \in \SPN_n$ by \cref{thm:PDEC-from-SPN} with $N=J$, and thus $t \leq \sigma(G)$ by the definition of the $\sigma$.

\medskip

\noindent\underline{\textit{Positivity}}. With $Y := I - tA_G$, we are in the setting of Corollary \ref{cor:diagA-A}: 
$$\Phi^G_t \text{ is positive} \iff Y + \mathring J \in \COP_n \iff J - t A_G \in \COP_n.$$

We are are now going to switch to the complement $\bar G$ of the graph G and use the fact that $A_{\bar G} = \mathring J - A_G$. The condition becomes now 
$$J - t A_G \in \COP_n \iff t(A_{\bar G} + I) + (1-t)J\in \COP_n.$$
If $t \leq 1$, then the matrix above is doubly non-negative, as the sum of a entrywise positive matrix ($t(A_{\bar G} + I)$) and a positive semidefinite matrix ($(1-t)J$), and thus copositive. This settles the case $t \leq 1$. If $t>1$, let us rewrite 
$$t(A_{\bar G} + I) + (1-t)J = (t-1)\left[\frac{t}{t-1}(A_{\bar G} + I) - J\right] $$
and then
use \cref{thm:stable-COP}:
$$ \frac{t}{t-1}(A_{\bar G} + I) - J \in \COP_n \iff \frac{t}{t-1} \geq \alpha(\bar G) = \omega(G) \iff t \leq \frac{\omega(G)}{\omega(G)-1} = 1 + \frac{1}{\omega(G)-1},$$
proving the claim in the statement.

\end{proof}

We summarize all the bounds in the theorem above in \cref{fig:parameter-t-range}. Importantly, the construction above yields:
\begin{enumerate}
    \item \emph{positive but not completely positive maps}, for all values of the parameter 
    $$t \in \Big( \frac{1}{\lambda(G)}, 1 + \frac{1}{\omega(G)-1} \Big] \neq \emptyset.$$
    For example, the value $t=1$ yields such a map, for all non-empty graphs $G$.
    \item \emph{positive indecomposable maps}, when $G \neq \mathcal K_n$,  for all values of the parameter 
    $$t \in \Big( \sigma(G), 1 + \frac{1}{\omega(G)-1} \Big], \quad \text{whenever } \sigma(G) < 1 + \frac{1}{\omega(G)-1}.$$
    We shall discuss several examples of graphs for which the interval above is non-empty in the following subsections.
\end{enumerate}

\begin{figure}[htb]
    \centering
    \includegraphics[scale=1.25]{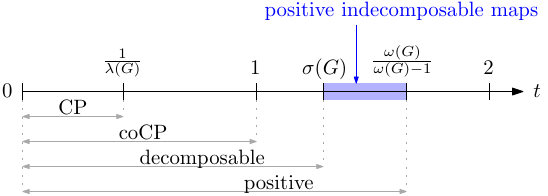}
    \caption{The positivity properties of the linear maps $\Phi^G_t$ from \eqref{eq:def-Phi-G-t} as the parameter $t$ varies. The blue region corresponds to the important case of positive indecomposable maps.}
    \label{fig:parameter-t-range}
\end{figure}

\begin{remark} 
    Linear maps having a similar structure as the ones in \eqref{eq:def-Phi-G-t} been studied in \cite[Section 4]{kennedy2018composition}: 
    $$\Psi_t^G(Z) = \frac t n ( \Tr Z)I_n + A_G \odot Z, \quad Z\in \M{n}.$$
    The main difference between the maps $\Phi_t^G$ and $\Psi_t^G$ is the sign in front of the Hadamard product term $A_G \odot Z$. This leads to very different structural properties of the maps, the two regimes being complementary. Note that the fact that the PPT$^2$ conjecture holds for the maps $\Psi_t^G$, see  \cite[Corollary 4.4]{kennedy2018composition}, is a special case of the more general proof for $\mathsf{DUC}$ maps from \cite[Theorem 4.5]{singh2022ppt}. Note that the PPT$^2$ conjecture for the maps $\Phi_t^G$ trivially holds, since these maps are automatically PPT and EB as soon as they are completely positive, by the first point of \cref{thm:graph-map-properties}.
\end{remark}

\begin{remark}
    While $\sigma(G)$ can be computed with a SDP, computing the parameter $\omega(G)$ is NP-complete. This makes it challenging to characterize whether the map $\Phi^G_t$ is indecomposable for an arbitrary graph.    
\end{remark}
In the next sections we focus on the parameter $\sigma(G)$ and $\omega(G)$ with the aim of exhibiting large families of graph-induced linear maps $\Phi^G_t$ that are positive indecomposable. 
To this end, we first gather some basic general properties of the quantity $\sigma(\cdot)$.
{
\begin{proposition}\label{prop:sigma-properties}
    The quantity $\sigma(\cdot)$ has the following properties:
    \begin{enumerate}
        \item We have the following dual formulation for the quantity $\sigma$:
        \begin{equation}\label{eq:sigma-dual}
            \sigma(G) = \min \{ \Tr(JX) \, : \, X \in \SPN_n^{\circ} \, \text{ s.t. } \Tr(A_GX) = 1\}.
        \end{equation}
        \item $\sigma(G \sqcup H) = \min(\sigma(G), \sigma(H))$.
    \end{enumerate}
\end{proposition}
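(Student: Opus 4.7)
The defining program for $\sigma(G)$,
\[
\sigma(G) = \max\{t \in \Real \,:\, J - tA_G \in \SPN_n\},
\]
is a linear program over the convex cone $\SPN_n$, and my plan is to derive the asserted formula via standard conic duality. Introducing a dual variable $X$ in the dual cone $\SPN_n^\circ$ and forming the Lagrangian
\[
t + \la X, J - tA_G\ra = t\,(1 - \Tr(A_G X)) + \Tr(JX),
\]
the inner supremum over $t \in \Real$ is finite if and only if $\Tr(A_G X) = 1$, in which case it equals $\Tr(JX)$. This yields exactly the stated dual program. To upgrade weak duality to strong duality, I verify Slater's condition by exhibiting an interior-feasible primal point: at $t = 0$ the matrix $J$ lies in the interior of $\SPN_n$, since the splitting $J = \epsilon I_n + (J - \epsilon I_n)$ for small $\epsilon > 0$ places the two summands in the interiors of $\PSDR_n$ and $\EWP_n^\sa$, respectively. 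Weak duality alone suffices for $\sigma(G)\leq \min\{\Tr(JX)\}$; Slater gives the matching reverse inequality and attainment.

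\textbf{Part (2): Disjoint unions.} Writing $G$ on $n$ vertices and $H$ on $m$ vertices, the adjacency matrix of $G \sqcup H$ is block-diagonal, so
\[
J_{n+m} - t\,A_{G\sqcup H} = \begin{pmatrix} J_n - tA_G & J_{n,m} \\ J_{m,n} & J_m - tA_H \end{pmatrix}.
\]
For $\sigma(G \sqcup H) \geq \min(\sigma(G), \sigma(H))$, I take decompositions $J_n - tA_G = P_1 + E_1$ and $J_m - tA_H = P_2 + E_2$ with $P_i \in \PSDR_{\bullet}$ and $E_i \in \EWP_\bullet^\sa$, and assemble a block-diagonal PSD part $\operatorname{diag}(P_1, P_2)$ together with an entrywise non-negative part whose off-diagonal blocks are the all-ones rectangles $J_{n,m}$ and $J_{m,n}$. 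For the reverse inequality, I rely on the elementary fact that any principal submatrix of a matrix in $\SPN_{n+m}$ is itself in $\SPN_\bullet$: principal submatrices of real PSD matrices are PSD, and principal submatrices of symmetric entrywise non-negative matrices are of the same type, and the $\SPN$ decomposition restricts term by term. Applying this to the two diagonal blocks yields $J_n - tA_G \in \SPN_n$ and $J_m - tA_H \in \SPN_m$, hence $t \leq \min(\sigma(G), \sigma(H))$.

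I do not foresee a genuinely hard step. The only point requiring some care is the verification of Slater's condition in part (1) to license strong duality, while the block-structure arguments in part (2) are essentially immediate from the definition $\SPN_n = \PSDR_n + \EWP_n^\sa$.
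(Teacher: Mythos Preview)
Your proposal is correct and follows essentially the same approach as the paper. The paper treats the conic duality in Part~(1) as ``straightforward'' without spelling out Slater's condition, and for Part~(2) it uses the same induced-subgraph (principal-submatrix) argument for one inequality and the same block assembly for the other.
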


\begin{proof}
\noindent\underline{\textit{Dual formulation}}. 
The computation of the dual semidefinite program of \cref{eq:def-sigma} is straightforward. Note that the dual set of $\SPN_n$ can be readily computed using \cref{eq:def-SPN}:
$$\SPN_n^{\circ} = \PSDR_n \cap \EWP^{\sa}_n = \DNN_n$$
The reader should also refer to \cite[Lemma 5.2]{deklerk2002approximation}, where several formulations for the dual of the $\vartheta^{(0)}$ quantity are discussed and a connection to Schrijver's $\vartheta'$ function \cite{schrijver2003comparison} is put forward. 

\medskip

\noindent\underline{\textit{Disjoint union of graphs}}. 
First, note that if $G' \subseteq G$ is an induced subgraph of $G$, by taking submatrices we obviously have 
$$\sigma(G') \geq \sigma(G).$$
This fact establishes the inequality $\sigma(G \sqcup H) \leq \min(\sigma(G), \sigma(H))$. For the reverse inequality, let $t:= \min(\sigma(G), \sigma(H))$. There exist positive semidefinite matrices $R_{1,2}$ and entrywise positive matrices $E_{1,2}$ such that
$$J - tA_G = R_1 + E_1 \quad \text{ and } \quad J - t A_H = R_2 + E_2.$$
We write then 
$$J - t(A_G \oplus A_H) = (R_1 \oplus R_2) + (E_1 \oplus E_2 + \underbrace{J-J \oplus J}_{\in \EWP}),$$
finishing the proof. 
\end{proof}
}

We now study a way to compute the value $\sigma(G)$ for graphs with symmetries. To simplify this problem, we consider the set of $n \times n$ matrices commuting with the automorphisms of a graph $G$: 
\begin{equation}
\label{eq:CG-def}\mathsf C_G := \{X \in \Mrealsa{n} \, : \, [X, P_\pi] = 0 \quad \forall \pi \in \Aut(G)\}.
\end{equation}
Above, $P_\pi$ denotes the permutation matrix associated to a permutation $\pi\in S_n$, and elements in $\Aut(G):=\{\pi\in S_n:[A_G,P_\pi]=0\}$ are called \emph{automorphisms} of $G$. 

Now consider the \emph{twirling map} $\mathcal T_G : \M{n} \to \mathsf C_G$ defined by
$$ \mathcal T_G(X) := \frac{1}{|\Aut(G)|} \sum_{\pi \in \Aut(G)} P_\pi X P_\pi^{-1}.$$
The fact that the range of this map lies inside $\mathsf C_G$ is standard. Introduce 
    $$\SPN_G := \{R + E \, : \, R \in \PSDR_n \cap \mathsf C_G  \text{ and } E \in \EWP_n^\sa \cap \mathsf C_G\}$$
and 
    $$\hat \sigma(G):= \max\{ t \in \R{} \, : \, J - t A_G \in \SPN_G\}.$$

\begin{proposition}\label{prop:hat-sigma}
    For any graph $G$, we have $\sigma(G) = \hat \sigma(G)$.
\end{proposition}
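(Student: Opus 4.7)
The plan is to use a standard symmetrization/twirling argument. The inclusion $\SPN_G \subseteq \SPN_n$ is immediate from the definitions, hence $\hat\sigma(G) \leq \sigma(G)$ follows trivially. For the reverse inequality $\sigma(G) \leq \hat\sigma(G)$, I would take any $t$ with $J - tA_G \in \SPN_n$, write a decomposition $J - tA_G = R + E$ with $R \in \PSDR_n$ and $E \in \EWP_n^\sa$, and apply the twirling map $\mathcal{T}_G$ to both sides.

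The key observations I would collect in a short lemma-style paragraph are that (i) $\mathcal{T}_G$ preserves both $\PSDR_n$ and $\EWP_n^\sa$, since each summand $P_\pi \,\cdot\, P_\pi^{-1}$ sends PSD matrices to PSD matrices and entrywise non-negative symmetric matrices to entrywise non-negative symmetric matrices, and convex combinations preserve both cones; (ii) the image of $\mathcal{T}_G$ lies in $\mathsf{C}_G$ by construction; and (iii) $\mathcal{T}_G$ fixes both $J$ (which commutes with every permutation) and $A_G$ (which commutes with every $P_\pi$ for $\pi\in \Aut(G)$, by the very definition of $\Aut(G)$).

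Combining (i)--(iii), applying $\mathcal{T}_G$ to $J - tA_G = R + E$ yields
\[
J - tA_G \;=\; \mathcal{T}_G(R) + \mathcal{T}_G(E),
\]
with $\mathcal{T}_G(R) \in \PSDR_n \cap \mathsf{C}_G$ and $\mathcal{T}_G(E) \in \EWP_n^\sa \cap \mathsf{C}_G$. Hence $J - tA_G \in \SPN_G$, and therefore $t \leq \hat\sigma(G)$. Taking the supremum gives $\sigma(G) \leq \hat\sigma(G)$, concluding the proof.

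There is really no main obstacle here: the argument is a textbook symmetrization, and all the nontrivial content has already been packed into the setup (namely the fact that both summand cones are closed under conjugation by permutations, and that $A_G$ and $J$ are fixed by the twirling). The only point that requires a moment of care is noting that entrywise non-negativity is preserved by permutation conjugation---this is immediate since $(P_\pi E P_\pi^{-1})_{ij} = E_{\pi^{-1}(i), \pi^{-1}(j)}$---and that the twirl of a symmetric matrix remains symmetric.
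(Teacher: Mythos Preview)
Your proposal is correct and follows essentially the same approach as the paper: the trivial inclusion $\SPN_G \subseteq \SPN_n$ for one direction, and applying the twirl $\mathcal{T}_G$ to a decomposition $J - tA_G = R + E$ for the other, using that $\mathcal{T}_G$ fixes $J - tA_G$ and preserves both $\PSDR_n$ and $\EWP_n^\sa$. The only cosmetic difference is that the paper works directly with $t = \sigma(G)$ (the maximum being attained since $\SPN_n$ is closed), whereas you argue for arbitrary feasible $t$ and pass to the supremum.
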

\begin{proof}
    Since $\SPN_G \subseteq \SPN_n$, we have $\hat \sigma(G) \leq \sigma(G)$. For the reverse inequality, let $\sigma(G) = t$ and $X = J-tA_G$. Since $X \in \SPN_n$, we have $X = R+E$ with $R \in \PSDR_n$ and $E \in \EWP_n^\sa$. Write 
    $$X = \mathcal T_G(X) = \underbrace{\mathcal T_G(R)}_{=:\hat R} + \underbrace{\mathcal T_G(E)}_{=:\hat E}.$$
    We have $\hat R, \hat E \in \mathsf C_G$. Moreover, it is clear that the twirling operation preserves positive semidefiniteness and entrywise positivity, hence $\hat R \in \PSDR_n$ and $\hat E \in \EWP_n^\sa$, showing that $X \in \SPN_G$ and concluding the proof.
\end{proof}

\subsection{Triangle-free and cyclic graphs}

A graph $G$ is called \emph{triangle-free} if it does not contain any three-cycles, i.e., there exists no sequence of 3 edges $(a,b), (b,c), (c,a)$ of $G$ with $a\neq b\neq c\in [n]$. The triangle-free property of a graph $G$ with at least one edge is alternatively characterized by $\omega(G)=2$.
In the first result of this section,  we completely characterize the set of triangle-free graphs $G$ such that  $\sigma(G) = \omega(G)/(\omega(G) - 1)$, i.e.~the upper bound in \eqref{eq:UB-sigma} is saturated. 
\begin{proposition}\label{prop:triangle-free-bipartite}
    If a graph $G$ is triangle free, then $\sigma(G) = 2 = \omega (G) /(\omega(G)-1)$ if and only if $G$ is bipartite.  
\end{proposition}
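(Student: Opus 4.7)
Since $G$ is triangle-free with at least one edge, $\omega(G) = 2$, and the upper bound $\sigma(G) \leq \omega(G)/(\omega(G)-1) = 2$ already follows from \cref{eq:UB-sigma}. The plan is therefore to show that $\sigma(G) \geq 2$ if and only if $G$ is bipartite; by the definition \cref{eq:def-sigma}, this reduces to characterizing when $J - 2A_G \in \SPN_n$.

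For the direction ``$G$ bipartite implies $\sigma(G) \geq 2$'', I would produce an explicit SPN decomposition. Let $v \in \{\pm 1\}^n$ encode the bipartition (say $v_i = +1$ on $V_0$ and $v_i = -1$ on $V_1$). I would then verify, by a short four-case analysis on the pair $(i,j)$ (diagonal; edge; non-edge within a part; non-edge across parts), that every entry of $E := J - 2A_G - vv^\top$ lies in $\{0, 2\}$. The key use of bipartiteness is that every edge crosses the cut, so $v_iv_j = -1$ on edges, matching the $-1$ entry in $J - 2A_G$. This exhibits $J - 2A_G \in \PSDR_n + \EWP_n^{\sa} = \SPN_n$.

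For the converse, I would invoke duality: $\SPN_n^{\circ} = \DNN_n$ (as used in the proof of \cref{prop:sigma-properties}), so $\sigma(G) < 2$ amounts to producing some $X \in \DNN_n$ with $\Tr((J - 2A_G)X) < 0$. A non-bipartite $G$ contains a shortest odd cycle $C_{2m+1}$ (with $m \geq 2$ by triangle-freeness), which must be an induced subgraph by the standard chord argument. The induced-subgraph monotonicity $\sigma(G) \leq \sigma(G')$ noted in the proof of \cref{prop:sigma-properties} then reduces the task to constructing the witness on the cycle itself.

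The crux of the argument, and the main obstacle, is the choice of this witness. I would take $X$ supported on the cycle vertices as the symmetric tridiagonal circulant with $X_{ii} = a$ and $X_{i,\,i\pm 1 \bmod (2m+1)} = b$ (zero elsewhere, and zero on the rest of $V(G)$). Its eigenvalues are $a + 2b\cos(2\pi j/(2m+1))$, with minimum $a - 2b\cos(\pi/(2m+1))$, so PSD forces $a \geq 2b\cos(\pi/(2m+1))$. A direct computation gives $\Tr((J - 2A_{C_{2m+1}})X) = (2m+1)(a - 2b)$, so negativity forces $a < 2b$. The two inequalities $2b\cos(\pi/(2m+1)) \leq a < 2b$ are simultaneously solvable precisely because $\cos(\pi/(2m+1)) < 1$ for $m \geq 1$; this slack is the geometric content of the argument, and once the tridiagonal circulant witness is identified, the rest is bookkeeping.
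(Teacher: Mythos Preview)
Your proof is correct. The bipartite $\Rightarrow \sigma(G)=2$ direction is essentially the same argument as the paper's: your rank-one PSD part $vv^\top$ with $v\in\{\pm 1\}^n$ is precisely the block matrix $\bigl[\begin{smallmatrix} J & -J\\ -J & J\end{smallmatrix}\bigr]$ that the paper uses, and the EWP remainder matches up entrywise.

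The converse, however, is genuinely different. The paper invokes \cite[Theorem 2.140]{berman2003completely}, a structural result characterizing membership of certain $\{-1,1\}$-bounded matrices in $\SPN_n$ in terms of bipartiteness of the associated $(-1)$-graph, and then handles disconnected $G$ separately via the disjoint-union identity $\sigma(G\sqcup H)=\min(\sigma(G),\sigma(H))$ from \cref{prop:sigma-properties}. Your route avoids both the external citation and the connectivity case split: you pass to a shortest odd cycle (induced, though as your computation shows, inducedness is not even strictly needed since $X$ vanishes off the cycle edges), and exhibit an explicit $\DNN$ witness via the tridiagonal circulant. This is more self-contained and more constructive; it also anticipates the exact value $\sigma(C_{2m+1})=1+\cos(\pi/(2m+1))$ computed in \cref{prop:cycle-graph}, since your witness is precisely the dual-optimal point there. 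The paper's approach, on the other hand, situates the result within an existing body of $\SPN$/$\CP$ theory and gives a cleaner one-line appeal once that literature is available.
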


\begin{proof} 
    If the graph $G$ is bipartite on $n+m$ vertices, then its adjacency matrix is of the form $$A_G = \begin{bmatrix}
        0 & B \\ B^\top &0
    \end{bmatrix}$$
    for some $n \times m$ matrix $B$. We have now
    $${J} - 2A_G = \begin{bmatrix}
        {J}_n & {J}_{n \times m} - 2B \\ {J}_{m \times n} - 2B^\top& {J}_m \end{bmatrix}
        = 
        \underbrace{\begin{bmatrix}
        {J} & -{J}  \\ -{J} & {J}
    \end{bmatrix}}_{\in \PSDR_{n+m}}
    + 
    2\underbrace{\begin{bmatrix}
        0 & + {J} - B \\
       {J} - B^\top &0 
        \end{bmatrix}}_{\in \EWP_{n+m}} \in \SPN_{n+m}.
    $$ 
    
    For the reverse implication, we first consider the case of \emph{connected} triangle-free graphs. Now assume that for a connected triangle-free graph $G$ on $n$ vertices, we have $\sigma(G) = 2$, i.e., ${J} - 2 A_G \in \SPN_n$. We conclude using \cite[Theorem 2.140]{shaked2021copositive}, {stating that if $A=(A_{ij})\in \Mrealsa{n}$ satisfies ${\rm diag}(A)=I_n$, $A_{ij}\geq -1$ for all $i,j$, and if $G_{-1}(A)$ is connected, then $A\in \SPN_n$ if and only if $G_{-1}(A)$ is bipartite. (Here $G_{-1}(A)$ denotes the graph with $n$ vertices such that $A_{G_{-1}(A)}(i,j)=1$ if and only if $A_{ij}=-1$).} Indeed, the entries of the matrix $A=J - 2 A_G$ are all larger than or equal to $-1$ and the graph associated to the $-1$ values ${G}_{-1}(J - 2 A_G)=G$ is connected. Hence, the graph $G$ is bipartite. 
    If the graph has more than $m$ connected components $G_i$, by Proposition \ref{prop:sigma-properties}, we have $\operatorname{min}(\sigma(G_i)) = 2$. Also note that $\sigma(G_i) \leq \omega(G_i)/((\omega(G_i)-1) \leq 2$. This implies that $\forall i \, \, (\sigma(G_i)) = 2$ implying that the connected graph $G_i$ is bipartite. Hence $G = \bigsqcup G_i$ is bipartite.
\end{proof}

As a corollary, every triangle-free non-bipartite graph $G$ yields positive indecomposable maps. 
\begin{corollary}
    Let $G$ be a triangle-free non-bipartite graph. Then, for every value of the parameter $t$ in the non-empty interval $(\sigma(G), 2]$, the linear map $\Phi^G_t$ is positive and indecomposable. 
\end{corollary}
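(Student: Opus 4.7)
The claim is an immediate bookkeeping corollary combining \cref{prop:triangle-free-bipartite} with \cref{thm:graph-map-properties}, so the proof plan is essentially to verify two endpoints and then quote the characterizations of positivity and decomposability already established.

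First, I would check that the interval $(\sigma(G),2]$ is non-empty. Since $G$ is triangle-free and contains at least one edge (being non-bipartite, it cannot be edgeless), we have $\omega(G)=2$. Hence the upper bound in the positivity threshold from \cref{thm:graph-map-properties}(4) becomes
\[
1 + \frac{1}{\omega(G)-1} = 2.
\]
Moreover, \cref{prop:triangle-free-bipartite} asserts that, among triangle-free graphs, $\sigma(G)=2$ characterizes exactly the bipartite ones; together with the general upper bound $\sigma(G)\leq \omega(G)/(\omega(G)-1)=2$ from \eqref{eq:UB-sigma}, the hypothesis that $G$ is non-bipartite forces $\sigma(G)<2$. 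Therefore $(\sigma(G),2]\neq \emptyset$.

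Second, for any $t$ in this interval, I would apply \cref{thm:graph-map-properties} directly: part (4) gives positivity of $\Phi^G_t$ because $t\leq 2 = 1+1/(\omega(G)-1)$, while part (3) gives that $\Phi^G_t$ is not decomposable because $t>\sigma(G)$. Combining the two yields exactly the conclusion that $\Phi^G_t$ is positive and indecomposable.

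There is really no serious obstacle here: all the analytical work has been done in the proof of \cref{prop:triangle-free-bipartite} (the characterization of saturation of the bound $\sigma(G)=2$ via the Berman--Shaked-Monderer criterion on matrices with diagonal one and entries $\geq -1$) and in \cref{thm:graph-map-properties} (the positivity/decomposability thresholds via $\COP_n$ and $\SPN_n$). The only thing to double-check is the edge case of an edgeless graph, which is excluded by the non-bipartite hypothesis, so the clique number is indeed $2$ and all formulas apply as stated.
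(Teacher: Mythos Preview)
Your proposal is correct and is exactly the argument the paper intends: the corollary is stated without proof immediately after \cref{prop:triangle-free-bipartite}, and your reasoning—combining $\omega(G)=2$, the strict inequality $\sigma(G)<2$ from \cref{prop:triangle-free-bipartite} and \eqref{eq:UB-sigma}, and the thresholds in \cref{thm:graph-map-properties}(3)--(4)—is precisely the intended justification. The edge-case observation that non-bipartite forces at least one edge (so that \cref{thm:graph-map-properties} applies) is a nice touch.
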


The smallest triangle-free non-bipartite graph is the cycle graph on 5 vertices, denoted as $C_5$ (also known as a pentagon).

Recall that the \emph{cycle graph} on $n$ vertices is the graph $C_n$ having edges 
$$\{(i,j) \, : \,  |i-j| = 1 \, (\text{mod }n) \}.$$
Cycle graphs with $n \geq 4$ are clearly triangle-free, and $C_n$ is bipartite if and only if $n$ is even; in this case, by Proposition \ref{prop:triangle-free-bipartite}, we have $\sigma(C_n) = 2 = \omega(C_n)$, hence there are no positive indecomposable maps associated to even cycles. For odd $n \geq 5$, we compute in what follows $\sigma(C_n)$.

\begin{proposition}\label{prop:cycle-graph}
    For odd $n \geq 5$, $\sigma(C_n) = 1 + \cos(\pi / n)$. Hence, for all 
    $$t \in (1+\cos(\pi/n), 2],$$
    the linear map $\Phi^{C_n}_t$ is positive indecomposable.
\end{proposition}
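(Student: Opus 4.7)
My plan is to establish $\sigma(C_n) = 1 + \cos(\pi/n)$ via matching primal and dual certificates, and then apply \cref{thm:graph-map-properties} together with $\omega(C_n) = 2$ to obtain indecomposability of $\Phi_t^{C_n}$ on the interval $(1+\cos(\pi/n), 2]$.

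For the upper bound $\sigma(C_n) \leq 1+\cos(\pi/n)$, I plan to exhibit an explicit feasible point for the dual program \cref{eq:sigma-dual}. Motivated by the circulant symmetry of the problem (\cref{prop:hat-sigma}), I try the ansatz $X = \frac{\cos(\pi/n)}{n} I_n + \frac{1}{2n} A_{C_n}$. The normalization immediately gives $\Tr(A_{C_n}X) = 1$ and $\Tr(J_n X) = 1 + \cos(\pi/n)$, and entrywise non-negativity is clear. The key verification is $X \in \PSDR_n$: as a circulant matrix $X$ has eigenvalues $\frac{1}{n}(\cos(\pi/n) + \cos(2\pi k/n))$ for $k = 0, 1, \ldots, n-1$, and using the sum-to-product identity
\[
\cos\tfrac{\pi}{n} + \cos\tfrac{2\pi k}{n} = 2\cos\tfrac{\pi(2k+1)}{2n}\cos\tfrac{\pi(2k-1)}{2n}
\]
one checks non-negativity of each factor (both angles lie in $[-\pi/2,\pi/2]$ modulo $\pi$ for $k \in [0,(n-1)/2]$, with the remaining modes handled by conjugate symmetry). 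The oddness of $n$ enters through the fact that the zero eigenvalues occur exactly at the two modes $k = (n\pm1)/2$.

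For the lower bound, complementary slackness suggests searching for a primal decomposition $J_n - tA_{C_n} = R + E$ with $R$ supported on the kernel of the above $X$, namely the two-dimensional Fourier subspace for modes $(n\pm1)/2$. Setting $\omega = e^{2\pi i/n}$ and $v_k = (\omega^{jk})_{j=0}^{n-1}$, I take
\[
R := \tfrac{1}{2}\bigl(v_{(n-1)/2}v_{(n-1)/2}^* + v_{(n+1)/2}v_{(n+1)/2}^*\bigr),
\]
which is manifestly PSD of rank at most $2$. Since $n$ is odd, $v_{(n+1)/2} = \overline{v_{(n-1)/2}}$, so $R$ is real symmetric with entries $R_{ij} = \cos\frac{\pi(n-1)(i-j)}{n} = (-1)^{i-j}\cos\frac{\pi(i-j)}{n}$. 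With $t = 1 + \cos(\pi/n)$ and $E := J_n - tA_{C_n} - R$ I then check: $E_{ii} = 1 - 1 = 0$; at edges of $C_n$, $E_{ij} = (1-t) - (-\cos(\pi/n)) = 0$, which is exactly where the choice of $t$ bites; and at cyclic distance $d \in \{2,\ldots,n-2\}$, $E_{ij} = 1 - (-1)^d\cos(\pi d/n) > 0$, because $\pi d/n \in (0,\pi)$ forces $|\cos(\pi d/n)| < 1$. This proves $J_n - tA_{C_n} \in \SPN_n$ and hence $\sigma(C_n) \geq 1 + \cos(\pi/n)$.

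I anticipate the main obstacle to be the sign bookkeeping in the primal construction: aligning the alternating factor $(-1)^{i-j}$ in $R$ with the identity $\cos(\pi(n-1)/n) = -\cos(\pi/n)$ so that $R$ exactly cancels $J_n - tA_{C_n}$ on the edge set of $C_n$ while leaving a non-negative residue on non-edges. The parity of $n$ is essential both for $R$ to come out real symmetric and for these sign cancellations to align. Once $\sigma(C_n) = 1 + \cos(\pi/n)$ is established, the positivity and indecomposability of $\Phi_t^{C_n}$ for $t \in (1+\cos(\pi/n), 2]$ is immediate from \cref{thm:graph-map-properties}, using $\omega(C_n) = 2$ (which pins the positivity threshold at $2$) and $\sigma(C_n) < 2$ for all odd $n \geq 5$.
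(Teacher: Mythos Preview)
Your proof is correct and takes a genuinely different route from the paper's. The paper argues via \cref{prop:hat-sigma}: it uses the dihedral automorphism group of $C_n$ to reduce the $\SPN$-membership problem to circulant matrices, recasts $\sigma(C_n)$ as a linear program in the Fourier parameters of $R$ and $E$, simplifies by eliminating slack variables ($e_0^*=e_{\pm 1}^*=0$), and then reads off the optimum from the extreme points of the resulting feasible set, namely the Fourier modes, obtaining $\sigma(C_n)=1-\min_k\cos(2\pi k/n)=1+\cos(\pi/n)$ for odd $n$.

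You instead produce matching primal and dual certificates directly: the $\DNN$ matrix $X=\frac{\cos(\pi/n)}{n}I_n+\frac{1}{2n}A_{C_n}$ for the upper bound via \cref{eq:sigma-dual}, and the rank-two real PSD matrix $R$ built from the Fourier vectors $v_{(n\pm1)/2}$ together with the explicit non-negative residual $E$ for the lower bound. Both approaches ultimately exploit the circulant structure and single out the same pair of Fourier modes, but yours is a verification argument that avoids any LP analysis and yields concrete witnesses (which can be independently checked and may be reused elsewhere), whereas the paper's LP reduction is more systematic and explains \emph{why} those modes are optimal rather than merely confirming it. Your handling of the sign bookkeeping at edges (including the $|i-j|=n-1$ case, where the parity of $n-1$ and $\cos(\pi(n-1)/n)=-\cos(\pi/n)$ conspire) and at non-edges is correct.
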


\begin{proof}
For the cyclic graph $C_n$, the group of automorphisms is the dihedral group of permutations, $D_{n}$ \cite{rodriguez2014cycle}. The set of $n \times n$ matrices that commute with the automorphisms, $\mathsf C_G$ are real, symmetric and circulant matrices. Since the circulant matrices are parametrized by the vector $a\in \Real^n$ with $a_i=A_{0i}$, we use the map $\operatorname{circ}(a) = A$ for any circulant matrix $A$ associated to the vector $a$ (we refer to \cref{eq:circulant} for the definition of circulant matrices).
A circulant matrix has eigenvalues given by a multiple of the \emph{discrete Fourier transform} of this vector \cite{davis1979circulant}
    $${\forall k \in [n] \, ; \, \lambda_k  =\sqrt{n}(\mathcal{F}a)_k = \sum_l a_l \exp\Big(\frac{2\pi i}{n}kl\Big)}$$ 
hence the positive-semidefinite of matrix $A$ is equivalent to $\forall i \, :\, (\mathcal{F}a)_i \geq 0$. Then, by using Proposition \ref{prop:hat-sigma}, we have 
$$\sigma(C_n):= \max\{ t \in \R{} \, : \, J - t A_{C_n} \in \SPN_{C_n}\}.$$ 
such that 
$$\SPN_{C_n}= \{R + E \, : \, R \in \PSDR_n \cap \mathsf C_{C_n}  \text{ and } E \in \EWP_n^{sa} \cap \mathsf C_{C_n}\}$$ Using $R = \operatorname{circ}(r)$ and $E = \operatorname{circ}(e)$ and noting that $J-tA_{C_n}=\operatorname{circ}(1,1-t,1,\ldots, 1,1-t)$, we can recast this as a linear program in the entries, 
\begin{align*}
        \sigma(C_n) = \max \quad &t\\
        \text{s.t.}\quad 
        &r_{\pm1} + e_{\pm1} = 1-t \\
        \forall i \notin \{0,\pm 1\}\quad  &r_i + e_i = 1\\
        &\mathcal{F}r \geq 0 \\
        &e \geq 0 \\
        & e_i = e_{-i} \\
        & e,r \in \mathbb{R}^n 
    \end{align*}
    
Assume the maximum is at $t = t^*$ with the values $r = r^*$ and $e=e^*$.

\begin{enumerate}
    \item Define, $r' = r + e_0  \ket{0}$ and $e' = e - e_0 \ket{0}$. Clearly $\mathcal{F}r' =  \mathcal{F}r + e_0\mathcal{F}  \ket{0} \geq 0$ and
    $e = e - e_0 \ket{0} \geq 0$. Also, $r' + e' = r + e$, therefore this change of variables gives the same value for $t$ while having $e'_0 = 0$ and satisfying all the constraints. Therefore, we can assume $e^*_0 = 0$.
    \item Assume that $e^*_{\pm 1} > 0$, then, since we have, $\ket{\mathbf{1}_n} - t^* a_G = r^* + e^*$, it implies that $\ket{\mathbf{1}_n} - (t^*+ e^*_{1}) a_g = r^* + (e^* - e_1\ket{1} - e_{-1}\ket{-1})$, contradicting the fact that $t^*$ is the optimum value. Hence, we also assume $e^*_{\pm 1} = 0$.
\end{enumerate}

The arguments above simplifies the optimization problem to the following one, 
\begin{align*}
        \sigma(C_n) = \max \quad &t\\
        \text{s.t.}\quad &r_0 = 1 \\
        &r_{\pm1} = 1-t \\
        \forall i \notin \{0,\pm 1\}\quad & r_i + e_i = 1 \\
        &\mathcal{F}r \geq 0 \\ 
        \forall i \notin \{0,\pm 1\}\quad & e_i \geq 0\\
        & e_i = e_{-i} \\
        & e,r \in \mathbb{R}^n
\end{align*} 
We show that $e_i \geq 0$ follows from the other constraints. The constraints, $\mathcal{F}r \geq 0$ and $r_0 = 1$ imply that $\forall i,  \,|r_i| \leq 1$. Also, have $\forall i \notin \{0, \pm 1\},  \, \,  r_i + e_i = 1$. Therefore, it holds that $e_i = 1- r_i \geq 0$. Moreover, we can  these $e_i$ freely such that $r_i + e_i = 1$. A quick calculation shows that,
    $$\sigma(C_n) = 1 - \operatorname{min}\{r_1 \in \mathbb{R} :r_0 = 1, \, \mathcal{F}r \geq 0, r \in \mathbb{R}^n\}$$ 
Since the set $\{r\in \Real^n:r_0=1,\,\mathcal{F}r\geq 0\}$ is convex, we may examine its extreme points, namely $\mathcal{F}^{-1}|0\ra=\sqrt n \sum_{j=0}^{n-1}|j\ra$ and $\frac{1}{2}\mathcal{F}^{-1}(|k\ra+|-k\ra)=\sqrt n \sum_{j=0}^{n-1}\cos\big(\frac{2\pi jk}{n}\big)|j\ra$ ($k\neq 0$), as candidates for optimal solutions to the above problem. Consequently, we can compute the value
\[\sigma(C_n) = 1-\min_k \cos(2\pi k/n) =
\begin{cases}
    1 + \cos (\pi/n) &\text{ for $n$ odd},\\
    2 &\text{ for $n$ even}.
\end{cases}
\]
\end{proof} Finally, we remark that as $n \to \infty$, $\sigma(C_n) \to 2$ which is equal to $1+\frac{1}{\omega(C_n)-1}$ for all $n$. 

\subsection{Small graphs}

In this section we list graphs with a small number of vertices $n$ which lead to positive indecomposable maps via the strict inequality 

$$\sigma(G) < 1 + \frac{1}{\omega(G)-1},$$
where $\sigma(G)$ is the quantity defined in \cref{eq:def-sigma} and $\omega(G)$ is the clique number of $G$. 

For $n \leq 4$ there are no graphs for which the inequality above holds since $\SPN_n=\COP_n$. For $n=5$, the $5$-cycle is the only such graph, see Proposition \ref{prop:cycle-graph}. For $n=6$, there are three such graphs, that we list below, see also \cref{fig:gap-6}.

\begin{itemize}
    \item  the \emph{square + path} graph, with $\sigma(G) = (5+\sqrt 5)/4$ and $\omega(G)=2$
    \item  the \emph{tadpole graph} $T_{5,1}$, with $\sigma(G) = (5+\sqrt 5)/4$ and $\omega(G)=2$
    \item  the \emph{wheel graph} $W_6$, with $\sigma(G) = 1+1/\sqrt{5}$ and $\omega(G)=3$.
\end{itemize}

\begin{figure}[htb]
    \centering
    \includegraphics[width=0.25\linewidth]{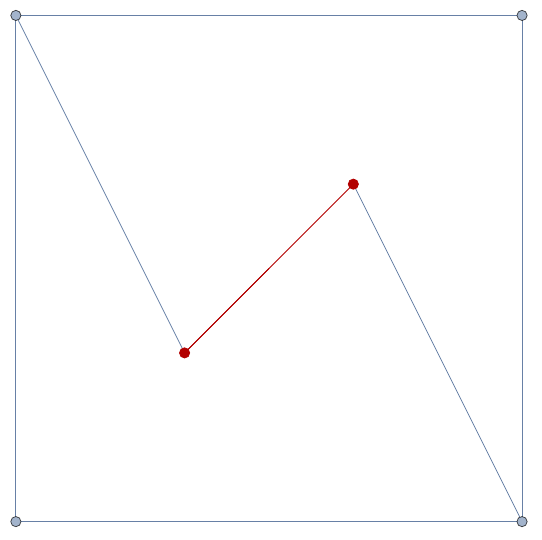}\qquad\qquad
    \includegraphics[width=0.25\linewidth]{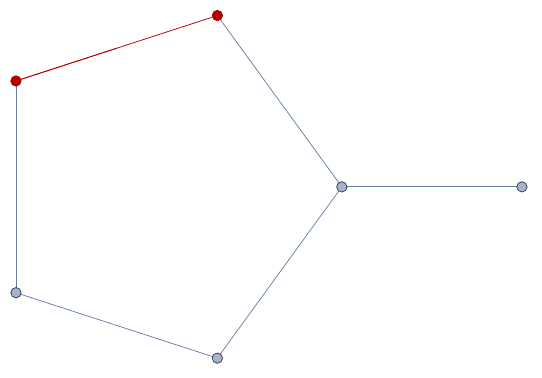}\qquad\qquad
    \includegraphics[width=0.25\linewidth]{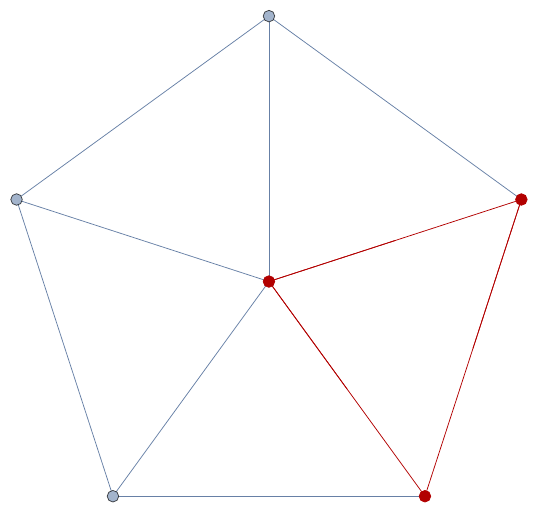}
    \caption{Graphs on 6 vertices leading to positive indecomposable maps. From left to right, the square + path graph, the tadpole graph, and the wheel graph. A maximal clique of each graph is highlighted in red. Note that the wheel graph is the first graph not covered by the Proposition \ref{prop:triangle-free-bipartite} as it is not triangle-free. }
    \label{fig:gap-6}
\end{figure}
Let us prove that $\sigma(G) = 1+1/\sqrt 5$ for the wheel graph $G=W_6$, leaving the other two cases for the reader. We shall exhibit primal and dual feasible points for the semidefinite programs from \cref{eq:def-sigma} and \cref{eq:sigma-dual} that correspond to lower, resp.~upper bounds for $\sigma(G)$. Assigning the last index (6) to the center of the wheel, the adjacency matrix of $G$ is given by 
$$A = 
    \begin{bmatrix}
     0 & 1 & 0 & 0 & 1 & 1 \\
     1 & 0 & 1 & 0 & 0 & 1 \\
     0 & 1 & 0 & 1 & 0 & 1 \\
     0 & 0 & 1 & 0 & 1 & 1 \\
     1 & 0 & 0 & 1 & 0 & 1 \\
     1 & 1 & 1 & 1 & 1 & 0 \\
    \end{bmatrix}.
$$
For the primal feasible point, consider the decomposition $J - (1+1/\sqrt 5) A = P + E$, where
$$P = \frac{1}{\sqrt 5} J + \Big(1 - \frac{1}{\sqrt 5} \Big) I - \frac{2}{\sqrt 5} A \quad \text{ and } E = \Big(1-\frac{1}{\sqrt 5}\Big)(J-I-A).$$
Clearly $E \in \EWP_6$, while a simple computation shows that the eigenvalues of $P$ are $(2,2,2,0,0,0)$, thence $P \in \PSDR_6$, proving that $\sigma(G) \geq 1+1/\sqrt 5$ by using \cref{eq:def-sigma}.

For the upper bound, consider the matrix 
$$X = 
\begin{bmatrix}
 b & a & 0 & 0 & a & b \\
 a & b & a & 0 & 0 & b \\
 0 & a & b & a & 0 & b \\
 0 & 0 & a & b & a & b \\
 a & 0 & 0 & a & b & b \\
 b & b & b & b & b & c \\
\end{bmatrix}, \qquad \text{ with } 
\begin{cases}
    a = \displaystyle{\frac{3-\sqrt 5}{20}}\\
    b = \displaystyle{\frac{\sqrt 5-1}{20}}\\
    c = \displaystyle{\frac{5-\sqrt 5}{20}}.
\end{cases}
$$
One can show that the eigenvalues of $X$ are $\big((5-\sqrt{5})/10,(3 \sqrt{5}-5)/20,(3 \sqrt{5}-5)/20,0,0,0\big)$, proving that $X \in \DNN_6$. We also have $\Tr(AX) = 10(a+b) = 1$ and $\Tr(JX) = 10a+15b+c = 1+1/\sqrt 5$, hence $\sigma(G) \leq 1+1/\sqrt 5$ by \cref{eq:sigma-dual}, finishing the proof. 

\medskip

For $n=7$, we have numerically found 33 such graphs using an SDP solver. The code is available at \cite{PosMapsCOP2025}.

\subsection{Strongly regular graphs}

A \emph{strongly regular graph} (SRG) is a graph that exhibits a specific kind of regularity \cite{brouwer2022strongly},  \cite[Chapter 10]{godsil2013algebraic}. Beyond being $k$-regular, the number of common neighbors between any two distinct vertices depends only on whether they are adjacent or not. In this section we shall use the symmetry properties of SRG to compute the decomposability parameter $\sigma$ from \eqref{eq:def-sigma} in order to put forward large families of positive indecomposable maps $\Phi^G_t$ associated to strongly regular graphs. 

The structural properties of an SRG $G$ are precisely captured by four parameters $(n,k,\lambda,\mu)$:
\begin{itemize}
    \item $n$ is the number of vertices of $G$
    \item every vertex of $G$ is of degree $k$ ($G$ is $k$-regular)
    \item every pair of \emph{adjacent} vertices in $G$ have $\lambda$ common neighbors
    \item every pair of \emph{non-adjacent} vertices in $G$ have $\mu$ common neighbors.
\end{itemize}

Strongly regular graphs are fascinating objects that appear in various branches of mathematics, including combinatorics (e.g.~in the study of combinatorial designs and error-correcting codes), group theory (often arising as orbital graphs of permutation groups), and algebraic graph theory, where their highly symmetric structure leads to interesting spectral properties. The existence and the number of non-isomorphic SRG for given parameters is subject to ongoing research \cite{brouwer2022strongly, brouwerSRGlist}. See \cref{fig:SRG-example} for two famous strongly regular graphs.

\begin{figure}[htb]
    \centering
    \includegraphics[width=0.3\linewidth]{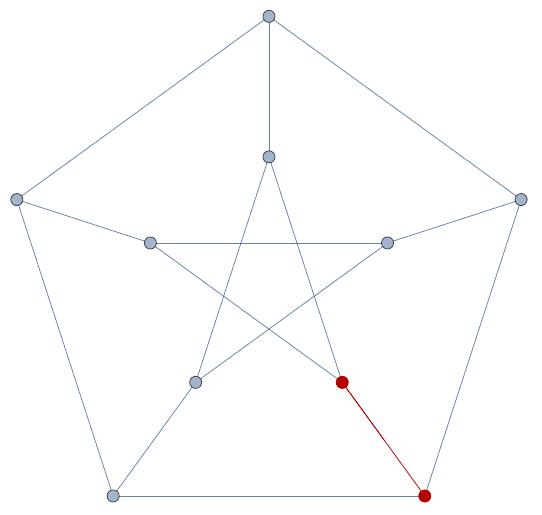}
    \qquad\qquad\qquad\qquad
    \includegraphics[width=0.3\linewidth]{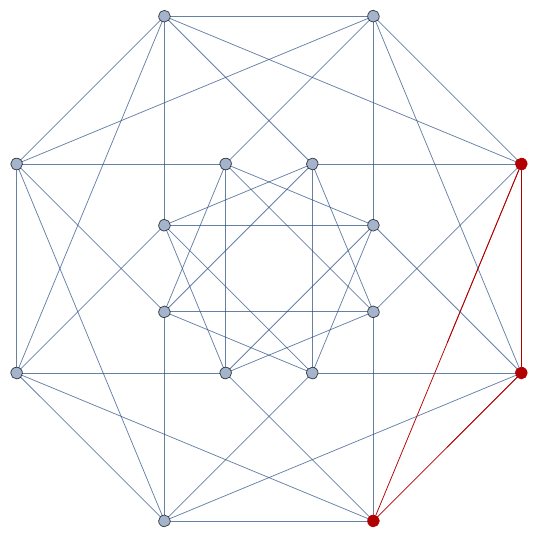}
    \caption{The Petersen (left) and the Shrikhande (right) are SRG with respective parameters $(10, 3, 0, 1)$ and $(16,6,2,2)$. Maximal cliques are highlighted in red.}
    \label{fig:SRG-example}
\end{figure}

By definition, the adjacency matrix $A$ of a SRG satisfies the following properties:
\begin{itemize}
    \item $AJ = JA = kJ$, since each vertex has degree $k$;
    \item $A^2 = kI + \lambda A + \mu (\mathring J - A)$. This equation follows from considering the number of paths of length 2 between any two vertices.
\end{itemize}

If $G$ is connected, its adjacency matrix $A$ typically has three distinct eigenvalues \cite[Chapter 10.2]{godsil2013algebraic}.
The largest eigenvalue is $k$, the degree of each vertex. This eigenvalue has multiplicity $1$ (if $G$ is connected and $n>1$) and its corresponding eigenvector is the all-ones vector $\mathbf{1}$.
The other two eigenvalues, commonly denoted $r$ and $s$, are given by the formula:
$$r,s = \frac{\lambda-\mu \pm \sqrt{(\lambda-\mu)^2+4(k-\mu)}}{2}.$$
By convention, $r$ is taken with the '$+$' sign, so $r \geq 0 \geq  s$. For example, the two graphs in \cref{fig:SRG-example} have the following spectra:
\begin{itemize}
    \item the Petersen graph has eigenvalues $k=3$ with multiplicity 1, $r=1$ with multiplicity 5, and $s=-2$ with multiplicity 4;
    \item the Shrikhande graph has eigenvalues $k=6$ with multiplicity 1, $r=2$ with multiplicity 6, and $s=-2$ with multiplicity 9.
\end{itemize}

Let us also note that if $G$ is a SRG with parameters $(n,k,\lambda,\mu)$, then the complementary graph $\bar G$ is also strongly regular, with parameters 
$$\bar n = n, \quad \bar k = n-k-1, \quad \bar \lambda = n+\mu - 2(k+1), \quad \bar \mu = n+\lambda - 2k.$$
The matrices $A_G$ and $A_{\bar G}$ commute, with $A_{\bar G}$ having non-trivial eigenvalues corresponding to those of $A_G$: 
\begin{equation}\label{eq:ev-bar-G}
    \bar r = - r - 1 \quad \text{ and } \quad \bar s = -s-1;
\end{equation}
we refer to \cite[Chapter 10]{godsil2013algebraic} for the spectral properties of SRGs.
\medskip

We now show how the symmetry properties of certain SRGs enable us to compute exactly the decomposability parameter $\sigma$ defined in \cref{eq:def-sigma}. Recall the matrix space $C_G$ considered previously in \cref{eq:CG-def}

$$\mathsf C_G := \{X \in \Mrealsa{n} \, : \, [X, P_\pi] = 0 \quad \forall \pi \in \Aut(G)\}.$$

Clearly, $I, J, A_G \in \mathsf C_G$ and $A_{\bar{G}}=J-I-A_G\in \mathsf{C}_G$. We are interested in SRGs for which these 3 matrices form a basis of the vector space $\mathsf C_G$.  

\begin{definition}
    A strongly regular graph $G$ is said to be \emph{rank 3} if the automorphism group $\Aut(G)$ acts on the vertex set of $G$ as a rank 3 permutation group. In other words, the action of $\Aut(G)$ on $V(G) \times V(G)$ partitions it into exactly 3 orbits:
    \begin{itemize}
        \item the diagonal $\{(v,v) \, : \, v \in V(G)\}$
        \item adjacent vertices $\{(u,v) \, : \, u\sim v, \, u \neq v\}$
        \item non-adjacent vertices $\{(u,v) \, : \, u\nsim v, \, u \neq v\}$.        
    \end{itemize}
\end{definition}

There has been considerable effort in characterizing rank 3 SRGs, see \cite[Chapter 11]{brouwer2022strongly}. For example, the Petersen graph from \cref{fig:SRG-example} is rank 3, while the Shrikhande graph has rank 4: non-adjacent vertices have 2 common neighbors that can be either adjacent or non-adjacent (see \cite[Chapter 10.6]{brouwer2022strongly}).

For rank 3 SRGs, the vector space $\mathsf C_G$ is isomorphic to $\R{3}$:
\begin{equation}\label{eq:CG-rank-3}
    \mathsf C_G = \{X_{\alpha \beta \gamma}:=\alpha I + \beta A_G + \gamma A_{\bar G} \, : \, \alpha, \beta, \gamma \in \R{}\}.    
\end{equation}

We shall now use the symmetry properties of (rank 3) SRGs to provide a simpler characterization of the parameter $\sigma$. By Proposition \ref{prop:hat-sigma}, we have 
$\sigma(G) = \hat \sigma (G)$ where $$\hat \sigma(G):= \max\{ t \in \R{} \, : \, J - t A_G \in \SPN_G\}.$$
and  $$\SPN_G := \{R + E \, : \, R \in \PSDR_n \cap \mathsf C_G  \text{ and } E \in \EWP_n^\sa \cap \mathsf C_G\}$$

We now have all the ingredients for the main result of this section, an explicit formula for $\sigma(G)$ in the case of rank 3 SRGs. 

\begin{theorem}\label{thm:sigma-rank3-SRG}
    For a rank 3 SRG $G$ with parameters $(n,k,\lambda, \mu)$, we have
    $$\sigma(G) = \frac{n(r+1)}{r(n-1)+k}.$$
\end{theorem}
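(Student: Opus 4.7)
The plan is to combine the symmetry reduction $\sigma(G)=\hat\sigma(G)$ from \cref{prop:hat-sigma} with the rank 3 description of $\mathsf C_G$ in \cref{eq:CG-rank-3}, which reduces the defining semidefinite program to an explicit three-variable linear program. Writing
$$R = \alpha_R I + \beta_R A_G + \gamma_R A_{\bar G}, \qquad E = \alpha_E I + \beta_E A_G + \gamma_E A_{\bar G}$$
and using $J = I + A_G + A_{\bar G}$, the identity $J-tA_G = R+E$ together with $E \in \EWP_n^\sa$ (i.e.\ $\alpha_E,\beta_E,\gamma_E \geq 0$) turns the problem of maximizing $t$ into minimizing $\beta_R$ subject to $\alpha_R \leq 1$, $\gamma_R \leq 1$, and the three PSD conditions one reads off the eigenvalues of $R$ on the $\mathbf 1$-, $r$-, and $s$-eigenspaces of $A_G$:
\begin{align*}
\alpha_R + k\beta_R + (n-k-1)\gamma_R &\geq 0,\\
\alpha_R + r\beta_R - (r+1)\gamma_R &\geq 0,\\
\alpha_R + s\beta_R - (s+1)\gamma_R &\geq 0.
\end{align*}

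The next step is to guess (and then verify) that the optimum is attained by saturating the first two constraints and setting $\alpha_R = 1$, $\beta_E = 0$. Solving the resulting $2\times 2$ linear system yields
$$\beta_R = -\frac{r+n-k}{r(n-1)+k}, \qquad \gamma_R = \frac{k-r}{r(n-1)+k},$$
and a direct computation shows $\gamma_R \leq 1$ (since $r\geq 0$) and that the third eigenvalue becomes $n(r-s)/(r(n-1)+k)\geq 0$. This gives $\sigma(G) \geq 1 - \beta_R = n(r+1)/(r(n-1)+k)$.

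For the matching upper bound I would switch to the dual formulation \cref{eq:sigma-dual} and apply the same twirling argument to restrict to $X \in \DNN_n \cap \mathsf C_G$, so $X = \alpha I + \beta A_G + \gamma A_{\bar G}$ with $\alpha,\beta,\gamma\geq 0$. Since $\Tr(A_G^2)=nk$ and $\Tr(A_G A_{\bar G})=0$, the constraint $\Tr(A_GX)=1$ forces $\beta = 1/(nk)$, and the PSD condition for $X$ on the $s$-eigenspace is the binding one; the feasible point $\alpha=-s/(nk),\, \gamma=0$ then gives $\Tr(JX)=(k-s)/k$, which is an upper bound for $\sigma(G)$.

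The remaining (and main) obstacle is purely algebraic: reconciling the two closed forms $(k-s)/k$ and $n(r+1)/(r(n-1)+k)$. This is done by cross-multiplying and applying the standard SRG identities $rs = \mu - k$, $r+s = \lambda - \mu$, together with the counting identity $k(k-\lambda-1) = (n-k-1)\mu$; these collapse the difference of the two expressions to zero. Putting the primal lower bound and the dual upper bound together yields the claimed formula.
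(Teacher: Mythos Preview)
Your argument is correct and takes a genuinely different route from the paper. The paper stays entirely on the primal side: after the same symmetry reduction via \cref{prop:hat-sigma} and \cref{eq:CG-rank-3}, it argues (via a ``simple analysis'') that one may take $\alpha_E=\beta_E=0$ and $\alpha_R=1$ without loss of generality, then identifies the feasible region as a triangle and compares the three vertices to find the minimizing $\beta_R$ directly. You instead exhibit a feasible primal point for the lower bound and a feasible dual point $X=(-s/(nk))I+(1/(nk))A_G\in\DNN_n\cap\mathsf C_G$ for the upper bound, then close the gap by the SRG identity $k(k-\lambda-1)=(n-k-1)\mu$. The trade-off is clean: the paper's approach never leaves the primal and avoids the algebraic identity, but has to justify the WLOG reductions; your primal--dual certificate sidesteps those reductions entirely at the cost of one extra combinatorial relation. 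Two minor remarks: for the upper bound you do not actually need the twirling step, since any dual-feasible $X$ already bounds $\sigma(G)$ from above; and your phrase ``guess and then verify that the optimum is attained'' is slightly misleading, since for the lower bound you only need (and only use) feasibility, not optimality.
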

\begin{proof}
    Using Proposition \ref{prop:hat-sigma} and \cref{eq:CG-rank-3}, we have 
    \begin{align*}
        \sigma(G) = \max \{t \in \R{} \, : \, &J-tA_G = X_{\alpha' \beta' \gamma'} + X_{\alpha'' \beta'' \gamma''} \text{ with } \\
        &X_{\alpha' \beta' \gamma'} \in \PSDR_n \text{ and } X_{\alpha'' \beta'' \gamma''} \in \EWP_n^\sa \}.
    \end{align*}
    Note that $J-tA_G = X_{1, 1-t, 1}$ and, using the eigenvalue formulas for the complementary graph \eqref{eq:ev-bar-G} 
    \begin{align*}
        X_{\alpha' \beta' \gamma'} \in \PSDR_n &\iff
        \begin{cases}
            \alpha' + \beta' k + \gamma'(n-k-1) \geq 0\\
            \alpha' + \beta' r - \gamma'(r+1) \geq 0\\
            \alpha' + \beta' s - \gamma'(s+1) \geq 0
        \end{cases}
        \\
        X_{\alpha'' \beta'' \gamma''} \in \EWP_n^{sa} & \iff \alpha'', \beta'', \gamma'' \geq 0.
    \end{align*}
    We can now recast the semidefinite program definition of $\sigma(G)$ as a much simpler linear program:
    \begin{align*}
        \sigma(G) = \max \quad &t\\
        \text{s.t.}\quad  & \alpha' + \alpha'' = 1\\
        & \beta' + \beta''  = 1-t\\
        & \gamma' + \gamma''  = 1 \\
        & \alpha' + \beta' k + \gamma'(n-k-1) \geq 0\\
        & \alpha' + \beta' r - \gamma'(r+1) \geq 0\\
        & \alpha' + \beta' s - \gamma'(s+1) \geq 0\\
        & \alpha'', \beta'', \gamma'' \geq 0.
    \end{align*}
    A simple analysis shows that one can take $\alpha'' = \beta'' = 0$ and $\alpha'=1$ and rewrite the optimization problem as follows:
    \begin{align}
        \nonumber\sigma(G) = 1-\min \quad &\beta'\\
        \label{eq:LP-1}\text{s.t.}\quad  & 1 + \beta' k + \gamma'(n-k-1) \geq 0\\
        \label{eq:LP-2}& 1 + \beta' r - \gamma'(r+1) \geq 0\\
        \label{eq:LP-3}& 1 + \beta' s - \gamma'(s+1) \geq 0\\
        \label{eq:LP-4}& \gamma' \leq 1.
    \end{align}
    Using $s \leq 0 \leq r$ one can show that \eqref{eq:LP-2} and \eqref{eq:LP-3} $\implies$ \eqref{eq:LP-4}. Hence the feasible set is a triangle determined by the three vertices obtained by solving pairs of equations from Eqs.~\eqref{eq:LP-1}, \eqref{eq:LP-2}, \eqref{eq:LP-3}. Intersecting the equalities from \eqref{eq:LP-2} and \eqref{eq:LP-3} yields $\beta'=1$, while intersecting \eqref{eq:LP-1} with \eqref{eq:LP-2} and \eqref{eq:LP-3} respectively yields 
    $$\beta' = - \frac{n-k+r}{r(n-1) + k} \quad \text{ and } \quad \beta' = - \frac{n-k+s}{s(n-1) + k}.$$
    The first value above is smaller, and a simple computation gives the announced result.
\end{proof}

\begin{corollary}
    For the complete graph $\mathcal K_n$, we have $$\sigma(\mathcal K_n) = \omega(\mathcal K_n)/(\omega(\mathcal K_n) - 1) = n/(n-1).$$
\end{corollary}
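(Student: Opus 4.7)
The plan is to prove $\sigma(K_n) = n/(n-1)$ by establishing matching upper and lower bounds directly, rather than by invoking \cref{thm:sigma-rank3-SRG}. Note that while one might be tempted to apply that theorem, the complete graph $K_n$ has automorphism group $S_n$ acting with only two orbits on $V(K_n)\times V(K_n)$ (the diagonal and its complement), so $K_n$ is a \emph{rank 2} permutation group, not rank 3 in the sense needed. Fortunately, the special structure $A_{K_n} = J_n - I_n$ makes a direct argument essentially immediate.

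For the upper bound, I would simply quote the general inequality \cref{eq:UB-sigma}: since $\omega(K_n) = n$, we get
\begin{equation*}
    \sigma(K_n) \leq 1 + \frac{1}{\omega(K_n)-1} = \frac{n}{n-1}.
\end{equation*}

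For the matching lower bound, I would exhibit an explicit decomposition of $J_n - t A_{K_n}$ at $t = n/(n-1)$ witnessing membership in $\SPN_n$. Using $A_{K_n} = J_n - I_n$, a one-line computation gives
\begin{equation*}
    J_n - \frac{n}{n-1} A_{K_n} = J_n - \frac{n}{n-1}(J_n - I_n) = \frac{1}{n-1}\bigl(n I_n - J_n\bigr).
\end{equation*}
The matrix $nI_n - J_n$ has eigenvalues $0$ on $\operatorname{span}\{\mathbf{1}_n\}$ and $n$ on its orthogonal complement, so it lies in $\PSDR_n \subset \SPN_n$ (the entrywise nonnegative part of the decomposition can be taken to be zero). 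This shows $\sigma(K_n) \geq n/(n-1)$, and together with the upper bound yields the claimed equality. No step presents any real difficulty; the only thing worth flagging is the inapplicability of \cref{thm:sigma-rank3-SRG}, which is why the direct computation is the cleanest route.
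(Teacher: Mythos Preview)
Your proof is correct. The upper bound from \cref{eq:UB-sigma} and the explicit verification that $J_n - \tfrac{n}{n-1}A_{K_n} = \tfrac{1}{n-1}(nI_n - J_n) \in \PSDR_n \subseteq \SPN_n$ are both fine and give the result immediately.

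The paper proceeds differently: it simply asserts that $K_n$ is a rank~3 SRG with parameters $(n,n-1,n-1,n-1)$ and reads off the formula of \cref{thm:sigma-rank3-SRG}. Your objection to this is well taken: under the paper's own definition, the action of $\Aut(K_n)=S_n$ on ordered pairs has only two orbits (the non-adjacent orbit is empty), so $K_n$ is rank~2, not rank~3, and the stated $\lambda=n-1$ is also off (two adjacent vertices in $K_n$ have $n-2$ common neighbors). The paper is implicitly treating $K_n$ as a degenerate limiting case in which the $A_{\bar G}$ term vanishes; the linear program in the proof of \cref{thm:sigma-rank3-SRG} still collapses to the right answer, and the closed formula $\tfrac{n(r+1)}{r(n-1)+k}$ happens to be insensitive to the vacuous choice of $\mu$, but none of this is spelled out. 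Your direct argument sidesteps these edge-case issues entirely and is the cleaner justification; the only cost is that it does not illustrate the general SRG machinery.
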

\begin{proof}
    The complete graph $\mathcal K_n$ is a rank $2$ SRG with parameters $(n,n-1,n-2,\mu)$, with an undefined parameter $\mu$. The result above still applies, with the positive eigenvalue $r = \lambda_{\max}(J-I) = n-1$.  
\end{proof}

We present in \cref{tbl:small-rank3-SRG} a table of all non-isomorphic rank 3 strongly regular graphs with less than 20 vertices, following \cite[Chapter 11.6]{brouwer2022strongly}. We refer to \cite{brouwer2022strongly,brouwerSRGlist} for the families of graphs appearing in the table. We highlight in the last column the graphs $G$ for which positive indecomposable maps of the form $\Phi^G_t$ exist; these graphs are identified by the strict inequality 
$$\sigma(G) < \frac{\omega(G)}{\omega(G)-1},$$
where $\sigma(G)$ is computed using \cref{thm:sigma-rank3-SRG}. Note that the Hamming graph of parameters (2,4), which is the direct product of two complete graphs $\mathcal K_4$, is a strongly regular graph of rank 3 having parameters $(16,6,2,2)$; these parameters are identical to those of the Shrikhande graph from \cref{fig:SRG-example}, proving that these two SRGs having the same parameters are non-isomorphic.

\begin{table}[htb]
{\renewcommand{\arraystretch}{2}
\begin{tabular}{|r|c|c|c|c|c|}
\hline
\rowcolor[HTML]{EFEFEF} 
Name & Graph & Parameters & $\omega$ & $\sigma$ & gap \\ \hline
pentagon $C_5$ = Paley(5) & \tabfigure{scale=0.15}{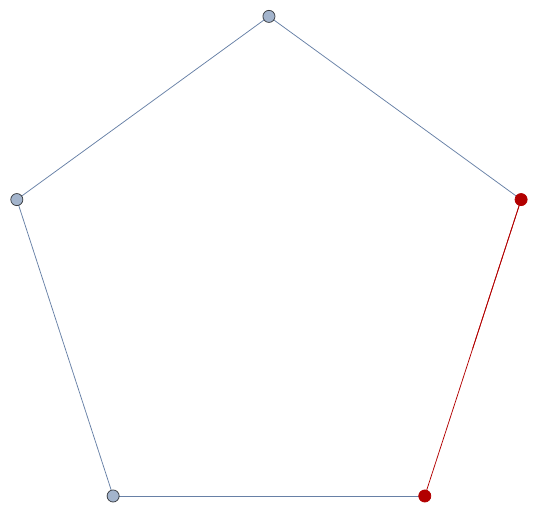} &  $(5,2,0,1)$ & 2 & $\displaystyle{\frac{5+\sqrt 5}{4}}$ & \textcolor{green}{\CheckmarkBold} \\ \hline
Paley(9) & \tabfigure{scale=0.15}{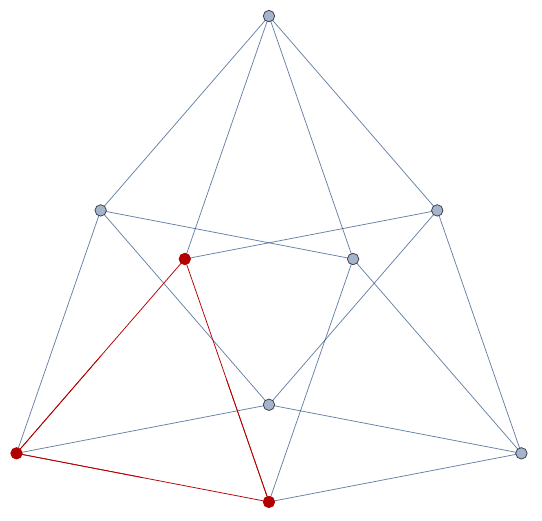} &  $(9,4,1,2)$ & 3 & $\displaystyle{\frac{3}{2}}$ & \textcolor{red}{\XSolidBrush} \\ \hline
Petersen & \tabfigure{scale=0.15}{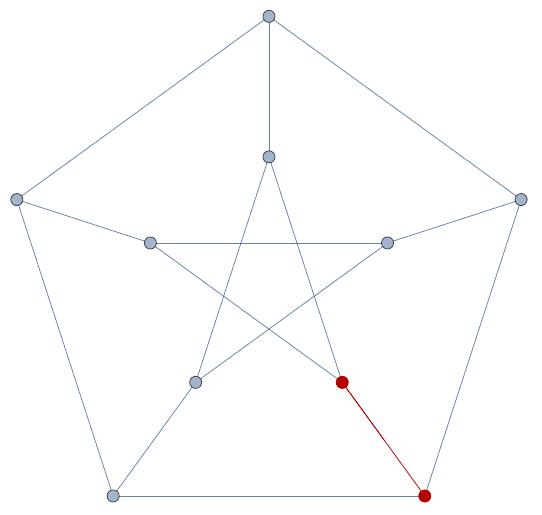} &  $(10,3,0,1)$ & 2 & $\displaystyle{\frac{5}{3}}$ & \textcolor{green}{\CheckmarkBold} \\ \hline
complement of Petersen & \tabfigure{scale=0.15}{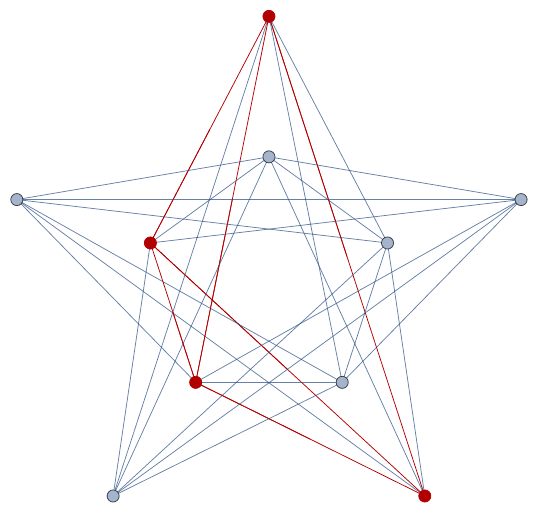} &  $(10,6,3,4)$ & 4 & $\displaystyle{\frac{4}{3}}$ & \textcolor{red}{\XSolidBrush} \\ \hline
Paley(13) & \tabfigure{scale=0.15}{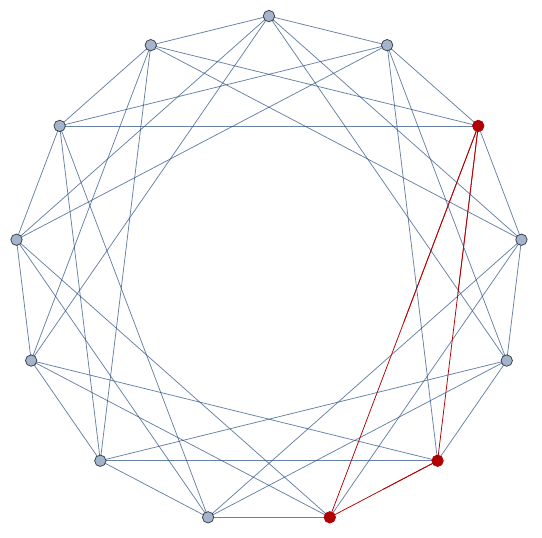} &  $(13,6,2,3)$ & 3 & $\displaystyle{\frac{13+\sqrt{13}}{12}}$ & \textcolor{green}{\CheckmarkBold} \\ \hline
GeneralizedQuadrangle(2,2) & \tabfigure{scale=0.15}{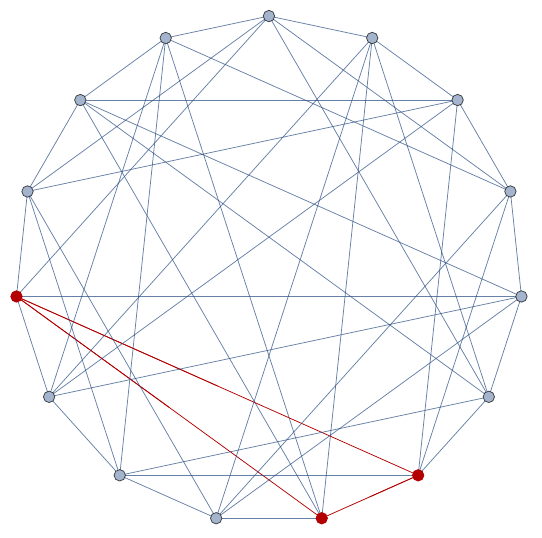} &  $(15,6,1,3)$ & 3 & $\displaystyle{\frac{3}{2}}$ & \textcolor{red}{\XSolidBrush} \\ \hline
Triangular(6) & \tabfigure{scale=0.15}{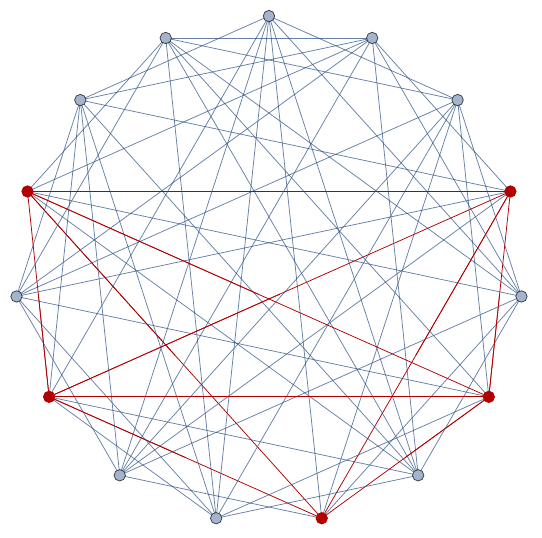} &  $(15,8,4,4)$ & 5 & $\displaystyle{\frac{5}{4}}$ & \textcolor{red}{\XSolidBrush} \\ \hline
Clebsch & \tabfigure{scale=0.15}{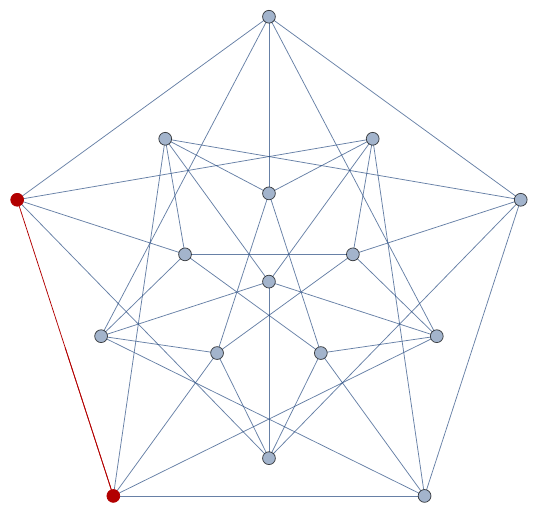} &  $(16,5,0,2)$ & 2 & $\displaystyle{\frac{8}{5}}$ & \textcolor{green}{\CheckmarkBold} \\ \hline
complement of Clebsch & \tabfigure{scale=0.15}{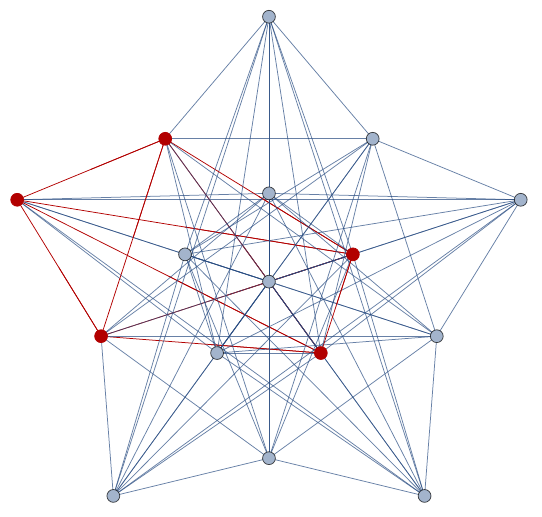} &  $(16,10,6,6)$ & 5 & $\displaystyle{\frac{6}{5}}$ & \textcolor{green}{\CheckmarkBold} \\ \hline
Hamming(2,4) & \tabfigure{scale=0.15}{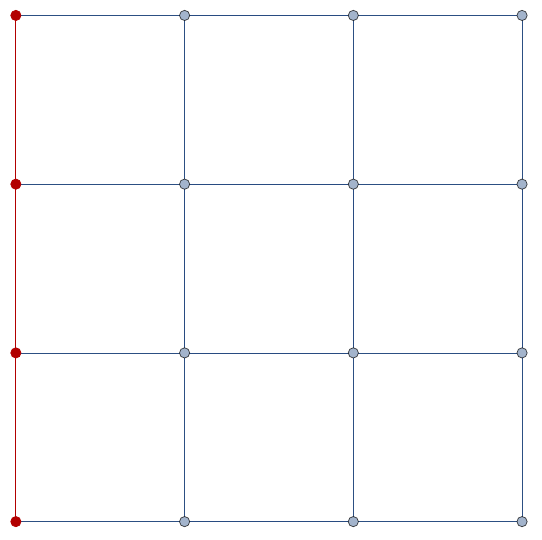} &  $(16,6,2,2)$ & 4 & $\displaystyle{\frac{4}{3}}$ & \textcolor{red}{\XSolidBrush} \\ \hline
complement of Hamming(2,4) & \tabfigure{scale=0.15}{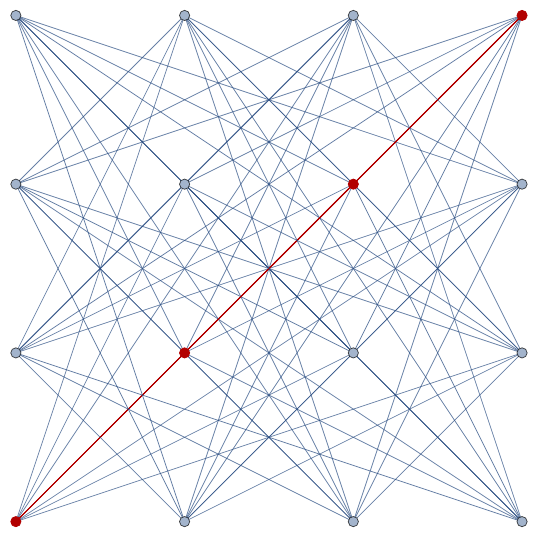} &  $(16,9,4,6)$ & 2 & $\displaystyle{\frac{4}{3}}$ & \textcolor{red}{\XSolidBrush} \\ \hline
Paley(17) & \tabfigure{scale=0.15}{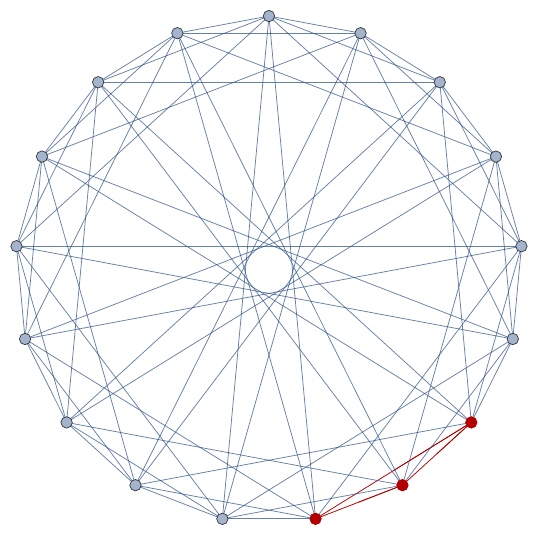} &  $(17,8,3,4)$ & 3 & $\displaystyle{\frac{17+\sqrt{17}}{16}}$ & \textcolor{green}{\CheckmarkBold} \\ \hline
\end{tabular}
}
\caption{Rank 3 strongly regular graphs with $\leq 20$ vertices along with their parameters. The last column indicates whether $\sigma(G) < \omega(G)/(\omega(G)-1)$, i.e.~whether positive indecomposable maps $\Phi^G_t$ exists.}
\label{tbl:small-rank3-SRG}
\end{table}

Let us conclude this section by providing an infinite family of SRGs which yield positive indecomposable maps. Recall that \emph{Paley graphs} of order $q$ are defined for prime powers $q$ such that $q = 1 \, (\textrm{mod }4)$. The vertex set of $\Paley(q)$ is the finite field $\mathbb F_q$, and two vertices are adjacent if their difference is a (non-zero) square in $\mathbb F_q$. $\Paley(q)$ is a strongly regular graphs with parameters 
$$\Big(q, \frac{q-1}{2}, \frac{q-5}{4}, \frac{q-1}{4} \Big)$$
having eigenvalues 
$$r,s = \frac{-1 \pm \sqrt q}{2}$$
with multiplicities $(q-1)/2$. Importantly, Paley graphs are self-complementary, and vertex- and edge-transitive; hence they are rank 3. We refer the reader to \cite[Chapter 7.4]{brouwer2022strongly}. \cref{thm:sigma-rank3-SRG} gives 
$$\sigma(\Paley(q)) = 1 + \frac{1}{\sqrt q - 1} = \frac{q + \sqrt q}{q-1}.$$
Computing the clique (or independence) number of Paley graphs is a very difficult problem \cite[Chapter 7.4.4]{brouwer2022strongly}. The Hoffman spectral bound gives
\begin{equation}\label{eq:omega-Paley-trivial-bound}
    \omega(\Paley(q)) \leq \sqrt q,
\end{equation}
which is known in the literature as the ``trivial bound''. This is relevant for the existence of positive indecomposable maps given by Paley graphs, since 
$$\sigma(\Paley(q)) < 1 + \frac{1}{\omega(\Paley(q))-1} \iff \omega(\Paley(q)) < \sqrt q.$$
In other words, a Paley graph $G$ provides positive indecomposable maps $\Phi^G_t$ for some values of $t$ if and only if it does not saturate the trivial bound for the clique number! It is known that if $q$ is an even prime power, then the trivial bound \eqref{eq:omega-Paley-trivial-bound} is saturated \cite{broere1988clique}. However, if $q=p$ is prime, it has been shown in \cite{hanson2021refined} that 
$$\omega(\Paley(p)) \leq \frac{\sqrt{2p-1}+1}{2}<\sqrt p,$$
which provides us with the following corollary. 
\begin{corollary}\label{cor:Paley}
    For any prime number $p$, and every 
    $$t \in \Big( 1 + \frac{1}{\sqrt p -1}, 1 + \frac{2}{\sqrt{2p-1}-1} \Big],$$
    the linear map $\Phi^{\Paley(p)}_t$ is positive and indecomposable. 
\end{corollary}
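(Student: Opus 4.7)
The plan is to combine Theorem \ref{thm:graph-map-properties} applied to $G = \Paley(p)$ with the explicit value of $\sigma(\Paley(p))$ obtained just above from Theorem \ref{thm:sigma-rank3-SRG}, and the Hanson--Petridis clique bound cited above. By the positivity and decomposability clauses of Theorem \ref{thm:graph-map-properties}, the map $\Phi_t^G$ is positive indecomposable whenever $\sigma(G) < t \leq 1 + 1/(\omega(G)-1)$, so it suffices to verify that the interval given in the statement is contained in this range for $G=\Paley(p)$.

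First, I would recall the formula $\sigma(\Paley(p)) = 1 + 1/(\sqrt p - 1)$, which is exactly the left endpoint of the stated interval, so the condition $t > \sigma(\Paley(p))$ is equivalent to $t > 1 + 1/(\sqrt p - 1)$. Therefore any $t$ in the stated interval satisfies $t > \sigma(\Paley(p))$, which by Theorem \ref{thm:graph-map-properties}(3) gives indecomposability (assuming positivity, which we establish next).

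Next, I would invoke the Hanson--Petridis bound \cite{hanson2021refined}:
$$\omega(\Paley(p)) \leq \frac{\sqrt{2p-1}+1}{2}.$$
Subtracting $1$ and taking reciprocals (both sides being positive for $p \geq 3$), this yields
$$\frac{1}{\omega(\Paley(p))-1} \geq \frac{2}{\sqrt{2p-1}-1},$$
so $1 + 1/(\omega(\Paley(p))-1) \geq 1 + 2/(\sqrt{2p-1}-1)$. Hence any $t$ with $t \leq 1 + 2/(\sqrt{2p-1}-1)$ also satisfies $t \leq 1 + 1/(\omega(\Paley(p))-1)$, and by Theorem \ref{thm:graph-map-properties}(4) the map $\Phi_t^{\Paley(p)}$ is positive.

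The only subtlety is to confirm that the interval is non-empty, i.e.\ that $1+1/(\sqrt p -1) < 1 + 2/(\sqrt{2p-1}-1)$, equivalently $\sqrt{2p-1}-1 < 2(\sqrt p - 1)$, i.e.\ $\sqrt{2p-1} < 2\sqrt p - 1$. Squaring (both sides positive for $p \geq 2$) gives $2p-1 < 4p - 4\sqrt p + 1$, i.e.\ $4\sqrt p < 2p+2$, which is equivalent to $(\sqrt p - 1)^2 > 0$ after rearrangement; this holds for all primes $p \geq 2$. For $p=2$ (where Paley graphs are not defined as an SRG under our conventions, $2 \not\equiv 1 \pmod 4$) the statement is vacuous, and for all other primes we have the desired strict inequality, so the interval is non-empty and the corollary follows. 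No step looks technically hard; the entire argument is an assembly of previously established pieces, and the main conceptual content lies in Theorem \ref{thm:sigma-rank3-SRG} and the Hanson--Petridis bound.
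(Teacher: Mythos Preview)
Your proposal is correct and follows essentially the same approach as the paper, which presents the argument in the text immediately preceding the corollary rather than as a separate proof: compute $\sigma(\Paley(p))$ via \cref{thm:sigma-rank3-SRG}, invoke the Hanson--Petridis bound to control $\omega(\Paley(p))$, and apply \cref{thm:graph-map-properties}. Your explicit verification that the interval is non-empty (equivalent to the paper's observation that $\tfrac{\sqrt{2p-1}+1}{2}<\sqrt p$) and your remark that the statement is vacuous for primes $p\not\equiv 1\pmod 4$ are welcome clarifications.
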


Note that the considerations above match the data from \cref{tbl:small-rank3-SRG}: the Paley graphs with parameters $q=5,13,17$ have a gap, while the perfect square $q=9$ does not.    

\section{Bosonic extendibility and SOS hierarchies} \label{sec:sos-hierarchies}
In this section, we will shift our focus to detect the extendibility properties of quantum states. This notion is closely connected to the problem of entanglement detection as it provides useful necessary criteria for separability of quantum states. We introduce various different notions  of quantum extensions, and make interesting connections to the hierarchies that have been well-studied for copositive matrices. This allows us to apply known results from polynomial optimization to the problem of entanglement detection. 

It is known that checking copositivity is a co-NP complete problem \cite{murty1987some,DG14}. Hierarchies of polynomially-sized semidefinite programs (SDP) or sum-of-squares (SOS) programs can be used to approximate the cone $\mathsf{COP}_n$. These hierarchies either provide inner and outer approximations that converge to the actual cone. See, for example, the works of \cite{deklerk2002approximation} and \cite{schweighofer2024sum} for details on such approximation schemes.
First of all, we recall the hierarchy of inner approximations to $\COP_n$ based on sum of squares \cite{parrilo2000structured}: 
    $$\K_n^{(r)}=\{X\in \mathcal{M}_n^{\rm sa}(\mathbb{R}):x\in \Real^n\mapsto \la x^{\odot 2}|X|x^{\odot 2}\ra\|x\|_2^{2r} \text{ is SOS}\}.$$
The following results are known about this hierarchy:
\begin{itemize}
    \item \cite{parrilo2000structured} $\K_n^{(0)}= \SPN_n=\PSDR_n+\EWP_n^\sa $. In particular we have, $\COP_n=\K_n^{(0)}$ for $n\leq 4$.

    \item \cite{parrilo2000structured,DDGH13} If $n\geq 5$, $\K_n^{(0)}\subsetneq \K_n^{(1)}\subseteq\cdots\subseteq \K_n^{(r)}\subsetneq \COP_n$. 

    \item \cite{laurent2023exactness,schweighofer2024sum} $\bigcup_{r\geq 0}\K_5^{(r)}=\COP_5$ and ${\rm int}(\COP_n)\subsetneq \bigcup_{r\geq 0}\K_n^{(r)}\subsetneq \overline{\bigcup_{r\geq 0}\K_n^{(r)}}=\COP_n$ for $n\geq 6$.
\end{itemize}

Let us mention that connection between hierarchies of semidefinite programs for polynomial optimization and hierarchies of separability tests in quantum information has been previously considered in \cite{DPS04,fang2021sum}, where the well-known Doherty-Parrilo-Spedalieri (DPS) hierarchy for detecting entanglement has been shown to be  dual to a sum-of-squares (SOS) representations for certain Hermitian polynomials. This duality has been used to establish quantitative bounds on the convergence rate of the DPS hierarchy. In this section, we establish hierarchies for LDUI/CLDUI witnesses that detect both PPT-bosonic extendibility \cite{DPS04,navascues2009power} and complete graph extendibility \cite{ACG+23+}, and we provide a complete characterization of such witnesses using the copositive hierarchy. In particular, we provide surprisingly wide family of entanglement witnesses which are not certifiable by any level of the DPS hierarchy, thus providing counterexamples to the question dating back to \cite{DPS04}. We also conjecture (see Question \ref{question:ExtLifting}) that the copositive hierarchy can be lifted to provide explicit witnesses for the DPS extendibility hierarchy interpolating the ``lifting" in \cref{thm:COPCP-from-COP} and \cref{thm:PDEC-from-SPN}.

\subsection{DPS hierarchy} \label{sec:DPS}
We use $\bigvee^r \Comp^n$ to denote the $r$-fold symmetric subspace of $(\Comp^n)^{\otimes r}$. {We start by defining the DPS hierarchy, or the hierarchy of bosonic (or bose-symmetric) extendibility}

\begin{definition}
\label{def:star-graph-extendibility}
    A state $\rho \in \mathsf{PSD}(\C{n} \otimes \C{n})$ is called $r$-bosonic extendible if there exists a positive operator $\tilde \rho \in \mathsf{PSD}(\Comp^n \otimes \bigvee^r \Comp^n)$ such that for all $1 \leq i \leq r$, the marginal state of system $\{0,i\}$ is $\rho$:
    $$\forall \,  1 \leq i \leq r \quad \rho_{0i}:=\id_n\otimes \operatorname{Tr}_{[r] \backslash \{i\}}\big(\tilde \rho \big) = \rho,$$
where we label the subsystems of $\Comp^n\otimes \bigvee^r \Comp^n$ by ${0,1,\ldots, r}$. Similarly, we call a state $r$-PPT bosonic extendible, if such an extension is also required to be PPT across all bipartitions. We will denote the set of $r$-bosonic extendible and its PPT version as 
$\mathsf{BExt}^{(r)}_n$ and  
$\mathsf{PPTBExt}^{(r)}_n$ respectively.
\end{definition} 

\begin{figure}[htb]
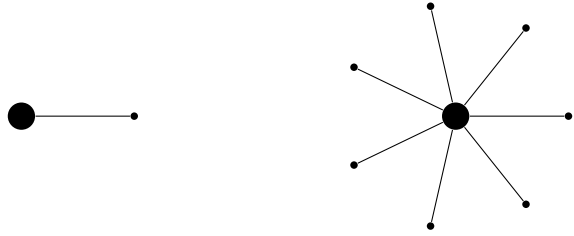

\centering
\stargraph{1}{1.5} \hspace{7em}\stargraph{7}{1.5}
\caption{On the left, we denote the state a bipartite $\rho$, and on the right, the the notion of $r$-extension that we consider in Definition \ref{def:star-graph-extendibility}. The marginal over any edges of the graph should be equal to $\rho$. Moreover, we demand that the systems corresponding to the nodes of the star graph together live in the bosonic subspace $\mathsf{BExt}^{(r)}_n$, and for PPT extendible states $\mathsf{PPTBExt}^{(r)}_n$, we impose the PPT condition across all bipartitions.}
\end{figure}

We have the following inclusion and convergence relations between the sets,

\[
\begin{array}{ccccccccccc}
\mathsf{Sep}(\mathbb{C}^n \otimes \mathbb{C}^n)
& \subseteq & \mathsf{BExt}_n^{(r)}
& \subseteq & \cdots
& \subseteq & \mathsf{BExt}_n^{(2)}
& \subseteq & \mathsf{BExt}_n^{(1)}
& = & \mathsf{PSD}(\mathbb{C}^n \otimes \mathbb{C}^n) \\[8pt]

\upwardeq &  & \upwardsubseteq & & \upwardsubseteq & & \upwardsubseteq & & \upwardsubseteq & & \upwardsubseteq \\[8pt]

\vspace{10pt}

\mathsf{Sep}(\mathbb{C}^n \otimes \mathbb{C}^n)
& \subseteq & \mathsf{PPTBExt}_n^{(r)}
& \subseteq & \cdots
& \subseteq & \mathsf{PPTBExt}_n^{(2)}
& \subseteq & \mathsf{PPTBExt}_n^{(1)}
& = & \mathsf{PPT}(\mathbb{C}^n \otimes \mathbb{C}^n)
\end{array}
\]

$$\mathsf{Sep}(\Comp^n\otimes \Comp^n)=\bigcap_{r\geq 1} \mathsf{BExt}_n^{(r)} =\bigcap_{r\geq 1} \mathsf{PPTBExt}_n^{(r)},$$
where the last fact follows from \cite{DPS04}.

\begin{remark}
    Note that the separable states have a $r$-bosonic extension and an $r$-PPT bosonic extension for all $r \geq 2$. For a separable state $\rho = \sum_k \ketbra{v_k}{v_k} \otimes \ketbra{w_k}{w_k}$ we have the following extension that is also PPT across all bipartitions, 
    $$\tilde \rho = \sum_k \ketbra{v_k}{v_k} \otimes \ketbra{w_k}{w_k}^{\otimes r}$$
\end{remark}

Let us first recall the following remarkable characterization of \emph{bosonic extendibility witnesses} in terms of the SOS hierarchy for a certain Hermitian form. The precise connection was implicit in \cite{DPS04,doherty2014entanglement} and later made explicit in \cite{fang2021sum}.

\begin{theorem}
\label{thm:ppt-extension-bosonic}
Let $W \in \Msa{n} \otimes \Msa{n}$ be a bipartite Hermitian operator. Then for $r\geq 1$,
\begin{enumerate}
    \item $W \in (\mathsf{BExt}_n^{(r)})^{\circ}$ if and only if there exists a positive semidefinite matrix $Z \in \M{n}^{\otimes (r+1)}$ such that, for all vectors $v,w \in \Comp^n$,
    \begin{equation} \label{eq:HermSOS1}
        \langle v w \,|\, W \,|\, v w \rangle \cdot \|w\|_2^{2(r-1)} \;=\;  \langle v w \cdots w \,|\, Z \,|\, v w \cdots w \rangle.
    \end{equation}

    \item $W\in (\mathsf{PPTBExt}_n^{(r)})^{\circ}$ if and only if there exist psd matrices $Z_0,Z_1,\ldots, Z_r\in \M{n}^{\otimes (r+1)}$ such that, for all vectors $v,w\in \Comp^n$,
    \begin{align}
        \la vw|W|vw\ra \|w\|_2^{2(r-1)} &= \la v w\cdots w|Z_0+Z_1^{\Gamma_{[1]}}+\cdots + Z_r^{\Gamma_{[r]}}|v w \cdots w\ra \nonumber\\
        &=\la vw^{\otimes r}|Z_0|vw^{\otimes r}\ra + \sum_{j=1}^r \la v\bar{w}^{\otimes j}w^{\otimes r-j}|Z_j|v\bar{w}^{\otimes j}w^{\otimes r-j}\ra \label{eq:HermSOS2},
    \end{align}
    where $\Gamma_{[j]}$ is the partial transpose operation with respect to the tensor bipartition $\{1,\ldots, j\}:\{0,j+1,\ldots, r\}$.
\end{enumerate}
\end{theorem}

Note that for all $r\geq 0$, we have the following inclusions: 
    $$(\mathsf{BExt}_n^{(r)})^{\circ}\subseteq (\mathsf{PPTBExt}_n^{(r)})^{\circ} \subseteq \mathsf{Sep}(\Comp^n\otimes \Comp^n)^{\circ}=\{W\in \Msa{n}^{\otimes 2}:\la vw|W|vw\ra\geq 0 \;\forall v,w\in \Comp^n\},$$
where the last set is the cone of \emph{block-positive} matrices. We also remark that \cref{eq:HermSOS1,eq:HermSOS2} indeed provide SOS decompositions of the Hermitian form $(v,w)\mapsto \la vw|W|vw\ra\|w\|_2^{2(r-1)}$ since they are conic combinations of Hermitian forms of the type
    $$v,w \in \Comp^n\mapsto \big|\la \psi|v\bar{w}^{\otimes j}w^{\otimes r-j}\ra\big|^2, \quad \psi\in (\Comp^n)^{\otimes r+1}, \;\;j=0,1,\ldots, r.$$

Now following the previous arguments in Propositions \ref{prop:properties-COPCP} and \ref{prop:SPN-from-PairDEC}, we further provide a \emph{necessary} condition for the LDUI/CLDUI operators to witness $r$-extendibility based on the copositive SOS hierarchy $(\K_n^{(r)})_{r\geq 0}$.

\begin{proposition} \label{prop:Ext-to-SOS}
Let $(A,B)\in \Mreal{n} \underset{\R{n}}{\times} \Msa{n}$. If $r\geq 1$ and $\XLDUICLDUI{A}{B}\in (\mathsf{PPTBExt}_n^{(r)})^{\circ} $, then
    $$(A+\mathring{B})+(A+\mathring{B})^{\top}=A+A^{\top}+ 2{\rm Re}(\mathring{B})\in \K_n^{(r-1)}.$$
\end{proposition}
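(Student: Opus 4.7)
The plan is to apply the dual characterization from Proposition \ref{thm:ppt-extension-bosonic} to $W = \XLDUICLDUI{A}{B}$, specialize the resulting polynomial identity to real arguments, and observe that the right-hand side is manifestly a sum of squares. This is a natural generalization of the argument behind \cref{itm:A+B-COP} of \cref{prop:properties-COPCP} (which is the $r=1$ case, where $\K_n^{(0)}=\SPN_n$) to higher levels of the hierarchy.

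First I would invoke Proposition \ref{thm:ppt-extension-bosonic} to obtain PSD matrices $Z_0, Z_1, \ldots, Z_r \in \M{n}^{\otimes (r+1)}$ such that
\[
\la vw|W|vw\ra \, \|w\|_2^{2(r-1)} \;=\; \la vw^{\otimes r}|Z_0|vw^{\otimes r}\ra + \sum_{j=1}^{r} \la v\bar{w}^{\otimes j}w^{\otimes r-j}|Z_j|v\bar{w}^{\otimes j}w^{\otimes r-j}\ra
\]
for all $v,w \in \Comp^n$. The key step is to specialize $v = w = x \in \Real^n$. Because $\bar x = x$ in this case, each of the $r+1$ summands on the right collapses to $\la x^{\otimes(r+1)}|Z_j|x^{\otimes(r+1)}\ra$, so the entire right-hand side becomes $\la x^{\otimes(r+1)}|Z|x^{\otimes(r+1)}\ra$ with $Z := \sum_{j=0}^{r} Z_j \succeq 0$.

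For the left-hand side with $v=w=x\in\Real^n$, Proposition \ref{prop:twirling-formulas} gives the same value in both the LDUI and CLDUI cases:
\[
\la xx|W|xx\ra \;=\; \la x^{\odot 2}|A|x^{\odot 2}\ra + \la x^{\odot 2}|\mathring{B}|x^{\odot 2}\ra.
\]
Since $x^{\odot 2}$ is real, $\la x^{\odot 2}|A|x^{\odot 2}\ra = \tfrac{1}{2}\la x^{\odot 2}|A+A^{\top}|x^{\odot 2}\ra$, and since $B$ is Hermitian, $\la x^{\odot 2}|\mathring B|x^{\odot 2}\ra = \la x^{\odot 2}|\operatorname{Re}(\mathring B)|x^{\odot 2}\ra$ (the imaginary part is real antisymmetric, so it vanishes against the real vector $x^{\odot 2}$). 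Combining these two computations yields the identity
\[
\la x^{\odot 2}\mid A + A^{\top} + 2\operatorname{Re}(\mathring B)\mid x^{\odot 2}\ra \, \|x\|_2^{2(r-1)} \;=\; 2\, \la x^{\otimes (r+1)}|Z|x^{\otimes (r+1)}\ra.
\]

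The final step is to check that the right-hand side, as a polynomial in the real variables $x_1,\ldots,x_n$, is a sum of squares. Decomposing the complex PSD $Z = \sum_k |\psi_k\ra\la \psi_k|$ and splitting each vector $\psi_k = \alpha_k + i\beta_k$ with real $\alpha_k, \beta_k$, evaluation at real $x$ gives
\[
\la x^{\otimes(r+1)}|Z|x^{\otimes(r+1)}\ra \;=\; \sum_k \big|\la \psi_k|x^{\otimes(r+1)}\ra\big|^2 \;=\; \sum_k\Big[(\alpha_k^{\top} x^{\otimes(r+1)})^2 + (\beta_k^{\top} x^{\otimes(r+1)})^2\Big],
\]
which is manifestly a sum of squares of real polynomials of degree $r+1$. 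Multiplying by $2$ preserves the SOS property, so $A + A^{\top} + 2\operatorname{Re}(\mathring B) \in \K_n^{(r-1)}$ directly from \cref{def:graph-parameters-K}. The only mild subtlety is the bookkeeping needed to convert a complex PSD certificate into a real SOS decomposition, but this is handled by the standard real/imaginary split; everything else is a clean specialization of the dual characterization.
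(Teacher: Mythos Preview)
Your proof is correct and follows essentially the same approach as the paper: invoke the dual characterization from Proposition~\ref{thm:ppt-extension-bosonic}, specialize to $v=w=x\in\Real^n$, and observe that the right-hand side is SOS. You are in fact more explicit than the paper about why $\la x^{\otimes(r+1)}|Z|x^{\otimes(r+1)}\ra$ is a real SOS polynomial (via the real/imaginary split of the eigenvectors), whereas the paper simply asserts this. One tiny quibble: your parenthetical remark that the $r=1$ case is \cref{itm:A+B-COP} of \cref{prop:properties-COPCP} is not quite right, since that item yields $\COP_n$ rather than $\K_n^{(0)}=\SPN_n$; the correct antecedent for $r=1$ is \cref{prop:SPN-from-PairDEC}.
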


\begin{proof}
From \cref{thm:ppt-extension-bosonic}, there exist PSD matrices $Z_0,\ldots, Z_r\in \M{n}^{\otimes r+1}$ such that
    $$h(v,w) :=\la vw|\XLDUICLDUI{A}{B}|vw\ra\|w\|_2^{2(r-1)} = \sum_{j=0}^r \la v\bar{w}^{\otimes j}w^{\otimes r-j}|Z_j|v\bar{w}^{\otimes j}w^{\otimes r-j}\ra, \quad v,w\in \Comp^n.$$
Setting $v = w = x \in \mathbb{R}^n$ and by the relations \cref{eq:LDUI-duality,eq:CLDUI-duality}, we have that the polynomial
\begin{align*}
    h(x,x) = \la x^{\odot 2}|A+\mathring{B}|x^{\odot 2}\ra \|x\|_2^{2(r-1)} =
     \frac{1}{2}\la x^{\odot 2} |A+A^{\top}+2{\rm Re}(\mathring{B})| x^{\odot 2} \ra \|x\|_2^{2(r-1)}
\end{align*}
is SOS. This concludes $A+A^{\top}+2{\rm Re}(\mathring{B})\in \mathsf{K}_n^{(r-1)}$.
\end{proof}
Later in \cref{thm:Ext-from-SOS}, we will establish the precise correspondence between $\XLDUI{A}{B}$ satisfying $A+A^{\top}+2{\rm Re}(\mathring{B})\in \K_n^{(r-1)}$ and the notion of \emph{complete graph extendibility}.

{\medskip
On the other hand, it was questioned in \cite[Section VI]{DPS04} whether the set $\mathsf{Sep}(\Comp^n\otimes \Comp^n)^{\circ}$ can be saturated by the cones $(\mathsf{PPTBExt}_n^{(r)})^{\circ}$, i.e., whether the equality
\begin{equation} \label{eq:eq-sep-dual}
    \mathsf{Sep}(\Comp^n\otimes \Comp^n)^{\circ}=\bigcup_{r\geq 1} (\mathsf{PPTBExt}_n^{(r)})^{\circ}
\end{equation}
holds. Since $\mathsf{Sep}(\Comp^n\otimes \Comp^n)^{\circ}= \big(\bigcap_{r\geq 1}\mathsf{PPTBExt_n^{(r)}}\big)^{\circ}=\overline{\bigcup_{r\geq 1} (\mathsf{PPTBExt}_n^{(r)})^{\circ}}$, the question mathematically asks whether the closure is essential or not. It was shown in \cite[Theorem 2]{DPS04} that the interior of $\mathsf{Sep}(\Comp^n\otimes \Comp^n)^{\circ}$ is contained in $\bigcup_{r\geq 1} (\mathsf{PPTBExt}_n^{(r)})^{\circ}$. In other words, every entanglement witness $W$ which is ``strictly'' block-positive,
    $$\la vw|W|vw\ra>0,\quad \forall\, v,w\neq 0,$$
is certifiable by $r$-SOS property for some finite $r$. Conversely, any witness $W\in \mathsf{Sep}(\Comp^n\otimes \Comp^n)^{\circ} \setminus \bigcup_{r\geq 1} (\mathsf{PPTBExt}_n^{(r)})^{\circ}$, if exists, could be used to detect entanglement of quantum states that are $r$-PPT bosonic extendible for each $r\geq 1$. Moreover, $W$ would be on the boundary in the sense that $\Tr(W\rho)=0$ for some $\rho\in \mathsf{Sep}(\Comp^n\otimes \Comp^n)$, as discussed in \cite{DPS04}. Note that the set $\bigcup_{r\geq 1} (\mathsf{PPTBExt}_n^{(r)})^{\circ}$ is not precisely the interior of $\mathsf{Sep}(\Comp^n\otimes \Comp^n)^{\circ}$: $W= \XLDUI{J_4}{\widetilde{H}-\mathring{J}_4}\in \mathsf{PPT}(\Comp^n\otimes \Comp^n)^{\circ}=(\mathsf{PPTBExt}_n^{(1)})^{\circ}$ from Example \ref{ex:HornPCOP2} while we have
    $$\Tr(W\, |v\ra\la v|^{\otimes 2})=\la v^{\otimes 2}|\XLDUI{J_4}{\widetilde{H}-\mathring{J}_4}|v^{\otimes 2}\ra= \la v^{\odot 2}|\widetilde{H}|v^{\odot 2}\ra =0$$
for $v=\frac{1}{\sqrt{2}}(0,1,1,0)^{\top}\in \Comp^4$, and hence $W\notin {\rm int}\big(\mathsf{Sep}(\Comp^n\otimes \Comp^n)^{\circ}\big)$.

Now we provide the following \emph{explicit ($(n^2-n)/2$-dimensional) counterexamples} to the question in \cref{eq:eq-sep-dual} which strengthens Corollary \ref{cor:PosIndecomp}.

\begin{theorem} \label{thm:EWnotSOS}
Suppose $n\geq 6$ and let $M$ be a copositive matrix which does not have $r$-SOS property for any $r$, i.e., $M\in \COP_n\setminus \bigcup_{r\geq 0}\mathsf{K}_n^{(r)}$. Then for any $N\in \EWP_n^{\sa}$ with $\diag(N)=\diag(M)$ and $N_{ij}\geq \frac{1}{2} M_{ij}$ $\forall\,i\neq j \in [n]$, we have
    $$\XLDUICLDUI{N}{M-\mathring{N}}\in \mathsf{Sep}(\Comp^n\otimes \Comp^n)^{\circ} \setminus \bigcup_{r\geq 1} (\mathsf{PPTBExt}_n^{(r)})^{\circ}.$$
Furthermore, all these entanglement witnesses are on the boundary of $\mathsf{Sep}(\Comp^n\otimes \Comp^n)^{\circ}$.
\end{theorem}
\begin{proof}
The fact $\XLDUICLDUI{N}{M-\mathring{N}}\in \mathsf{Sep}(\Comp^n\otimes \Comp^n)^{\circ}$ follows from Lemma \ref{lemma:CJiso}, Proposition \ref{prop:COPCP-postivie}, and \cref{thm:COPCP-from-COP}. Furthermore, Proposition \ref{prop:Ext-to-SOS} implies that $\XLDUICLDUI{N}{M-\mathring{N}}\notin (\mathsf{PPTBExt}_n^{(r)})^{\circ}$ for any $r\geq 0$. The last assertion follows a priori from \cite[Theorem 2]{DPS04}, as mentioned earlier.
\end{proof}

\begin{remark}
The existence of copositive matrices in $\COP_n\setminus \bigcup_{r\geq 0}\K_n^{(r)}$ for arbitrary $n\geq 6$ was first shown in \cite{laurent2023exactness}. One such example is
    $$M=H\oplus 0_{(n-5)\times (n-5)}\in \COP_n\setminus \bigcup_{r\geq 0}\K_n^{(r)}, \quad n\geq 6,$$
where $H\in \COP_5$ is the Horn matrix defined in \cref{eq:Horn}. For other examples, we refer to \cite[Theorem 3, Example 1 and 3]{laurent2023exactness}. In particular, this implies that
\begin{align*}
    \XLDUICLDUI{J_n}{M-\mathring{J}_n} &\in \mathsf{Sep}(\Comp^n\otimes \Comp^n)^{\circ} \setminus \bigcup_{r\geq 1} (\mathsf{PPTBExt}_n^{(r)})^{\circ}, \\
    \XLDUICLDUI{J_5\oplus 0_{(n-5)\times (n-5)}}{(H-\mathring{J}_5)\oplus 0_{(n-5)\times (n-5)}} &\in \mathsf{Sep}(\Comp^n\otimes \Comp^n)^{\circ} \setminus \bigcup_{r\geq 1} (\mathsf{PPTBExt}_n^{(r)})^{\circ}.
\end{align*}
Note that the latter matrix is a (trivial) lifting of $5\otimes 5$ Hermitian matrix $\XLDUICLDUI{J_5}{H-\mathring{J_5}}$ into $n\otimes n$ system (we also refer to \cite[Proposition 4.8]{singh2021diagonal}). Since
\begin{align*}
    &\la vw|\XLDUI{J_5\oplus 0_{(n-5)\times (n-5)}}{(H-\mathring{J}_5)\oplus 0_{(n-5)\times (n-5)}}|vw\ra \|w\|_2^{2(r-1)} \\
    &=\Big(\sum_{1\leq i<j\leq 5} |v_i\overline{w}_j-v_j\overline{w}_i|^2+\sum_{i,j=1}^5 (v_i\overline{v}_jw_i\overline{w}_{j}H_{ij})\Big)\Big(\sum_{j=1}^n|w_j|^2\Big)^{r-1}
\end{align*}
is SOS for no $n\geq 6$ and $r\geq 1$, this actually shows that
    $$\mathsf{Sep}(\Comp^{n_A}\otimes \Comp^{n_B})^{\circ}\supsetneq \bigcup_{r\geq 1}(\mathsf{PPTBExt}_{n_A,n_B}^{(r)})^{\circ}$$
whenever $n_A\geq 5$ and $n_B\geq 6$, by embedding $\XLDUI{J_5}{H-\mathring{J_5}}$ into proper spaces. 
\end{remark}
}

Recall that we constructed positive maps by lifting copositive matrices to the matrix pairs of the form $(N,M-\mathring{N})$ in \cref{thm:COPCP-from-COP}. We leave open the question whether a similar “lifting” construction of $r$-extendibility witnesses can be carried out from the hierarchies of copositive matrices.
\begin{question} \label{question:ExtLifting}
For $r\geq 2$ and $M\in \K_n^{(r-1)}$, do we have $\XLDUICLDUI{N}{M-\mathring{N}}\in (\mathsf{PPTExt}_n^{(r)})^{\circ}$ for some choice of $N\in \EWP_n^{\rm sa}$? Moreover, is it possible to choose $\mathring{N}=\max(M)\mathring{J}$ similar to the construction in \cref{thm:COPCP-from-COP}?
\end{question}

{Note that the above assertion for $r=1$ already follows from \cref{thm:PDEC-from-SPN}.} Although the complete answer to the above question remains unclear, we show in the next section that another type of hierarchy of extensions, called \emph{complete graph extendibility}, can be fully characterized in terms of the copositive sum of squares hierarchy, generalizing some results from the literature. 

\subsection{Complete extendibility}
Let $\mathcal{K}_{r}$ denotes the complete graph with $r$ vertices. A bosonic quantum state is called \emph{$\mathcal{K}_{r}$-PPT bosonic extendible} (or \textit{$r$-PPT bosonic exchangeable}) if there exists a bosonic operator $\tilde{\rho}\in \mathsf{PSD}\big(\bigvee^{r}\Comp^n\big)$ such that $(\id_n^{\otimes 2}\otimes \Tr^{\otimes r-2})(\tilde{\rho})=\rho$ and $\tilde{\rho}$ is also PPT across every bipartition of the $r$ systems. We refer to \cite{ACG+23+} for general notion of graph extendibility.

\begin{figure}[htb]
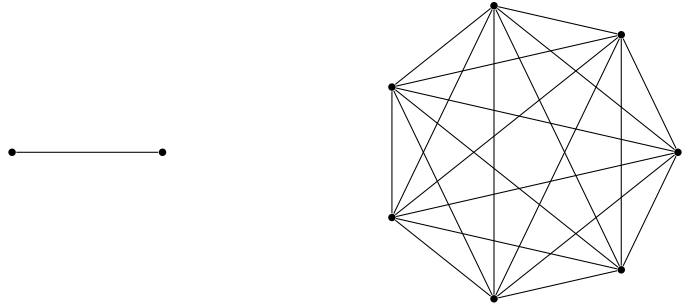

\centering
\completegraph{2}{1} \hspace{7em} \completegraph{7}{2}
\caption{On the left, we denote a bipartite bosonic state $\rho$, and on the right, the the notion of $r$-extension that we consider in Definition \ref{def:star-graph-extendibility}. The marginal over every edge of the graph should be equal to $\rho$. Moreover, we demand that the global state corresponding to the complete graph should live in the bosonic subspace. For PPT extendible states $\mathsf{PPTBExt}^{(\mathcal{K}_r)}_n$, we impose the PPT condition across all bipartitions.}
\end{figure}

For $r\geq 2$, let us denote by $\mathsf{PPTBExt}_n^{(\mathcal{K}_r)}$ the convex cone of $n\otimes n$ bosonic positive-semidefinite operators $\rho$ which are {$\mathcal{K}_{r}$-PPT bosonic extendible}. It is clear that $\mathsf{PPTBExt}_n^{(\mathcal{K}_2)}$ is simply the set of PPT bipartite bosonic operators $\mathsf{PPT}(\Comp^n\vee \Comp^n)$, and we also have an inclusion chain for $\mathcal{K}_r$-bosonic extendible states.
\begin{equation} \label{eq:PPTBExt}
    \mathsf{Sep}(\Comp^n\vee \Comp^n)\subseteq \mathsf{PPTBExt}_n^{(\mathcal{K}_r)}\subseteq \cdots \subseteq\mathsf{PPTBExt}_n^{(\mathcal{K}_3)}\subseteq \mathsf{PPTBExt}_n^{(\mathcal{K}_2)}=\mathsf{PPT}(\Comp^n\vee \Comp^n).
\end{equation}
\begin{figure}[htb]
\centering
$\ldots$
\hspace{2em}
\completegraph{7}{1}
\hspace{2em}
$\ldots$
\hspace{2em}
\completegraph{3}{1}
\hspace{2em}
\completegraph{2}{1}
\end{figure}

Indeed, the set of separable bosonic states are exactly equal the cone generated by product bosonic states \cite{ichikawa2008exchange}, i.e $$\mathsf{Sep}(\Comp^n\vee \Comp^n) = \operatorname{cone}\{\ketbra{v}{v} \otimes \ketbra{v}{v}\}$$ Therefore a separable bosonic state $\rho = \sum_k \ketbra{v_k}{v_k} \otimes \ketbra{v_k}{v_k}$ has a $\mathcal{K}_r$-bosonic extension $\sum_k \ketbra{v_k}{v_k}^{\otimes r}$ for all $r \geq 2$.

Furthermore, $\mathsf{PPTBExt}_n^{(\mathcal{K}_{r+1})}\subseteq \mathsf{PPTBExt}_n^{(r)} \subseteq \mathsf{BExt}_n^{(r)}$ for $r\geq 1$.
Thus, from the previous hierarchy of $(\mathsf{PPTBExt}_n^{(r)})_{r\geq 1}$ or simply from \emph{Quantum de Finetti theorem} \cite{caves2002unknown,christandl2007one}, we again have
    $$\mathsf{Sep}(\Comp^n\vee \Comp^n)=\bigcap_{r\geq 2} \mathsf{PPTBExt}_n^{(\mathcal{K}_r)}.$$
It is natural to ask whether the inclusion in \cref{eq:PPTBExt} would be finite or not. That is, given some local Hilbert space dimension $n$, is there a finite $r\geq 2$ such that
    $$\mathsf{PPTBExt}_n^{(\mathcal{K}_r)}=\mathsf{Sep}(\Comp^n\vee \Comp^n)?$$
If $n=2$ the question is obvious: $\mathsf{PPTBExt}_2^{(\mathcal{K}_2)}=\mathsf{PPT}(\Comp^2\vee \Comp^2)= \mathsf{Sep}(\Comp^2\vee \Comp^2)$. When $n\geq 3$, it is shown that $\mathsf{PPT}(\Comp^n\vee \Comp^n) \supsetneq \mathsf{Sep}(\Comp^n\vee \Comp^n)$ \cite{PhysRevLett.102.170503,marconi2021entangled}. However, explicit construction of examples beyond this, i.e., entangled bose-symmetric states having high degree of extendibility, remains to be a challenging problem.

This section is mainly devoted to finding an explicit element in $\mathsf{PPTBExt}_n^{(\mathcal{K}_r)} \setminus \mathsf{Sep}(\Comp^n\vee \Comp^n)$ for every $n\geq 5$ and $r\geq 2$, that is, an extendible and highly symmetric PPT entangled state. We also show that such a state exists within the class of (mixtures of) bipartite \emph{Dicke states}, \cite{yu2016separability, tura2018separability}. Our result also allows us to construct $r$-partite entangled bosonic states of which are PPT with respect to all bipartitions, by looking at the bosonic extension of a matrix $\mathsf{PPTBExt}_n^{(r-1)}\setminus \mathsf{Sep}_n$ . Due to the fact that it is an extension, the entanglement of this bosonic state can be even verified by checking that the \emph{two-body} marginals are entangled.

\begin{remark} \label{rmk:MultiEntanglement}
    Any $r$-partite bosonic state $\rho\in \mathsf{PSD}(\bigvee^r \Comp^n)$ have only two types of entanglement \cite{augusiak2012entangled,PhysRevLett.102.170503}:
    \begin{enumerate}
        \item $\rho$ is \emph{fully separable}, i.e., $\rho=\sum_{i} |v_i\ra\la v_i|^{\otimes r}$ for some vectors $v_i\in \Comp^n$,
        
        \item $\rho$ is  \emph{genuine entangled} i.e., $\rho$ cannot be written as a convex combination of states which are separable with respect to (possibly different) bipartitions.
    \end{enumerate}
    In particular, any bosonic state which is not fully separable is entangled across any bipartitions of the system. Hence, we do not make a distinction within various notions of multipartite entanglement for this class of states. 
\end{remark}

\medskip

Simiarly as in \cref{thm:ppt-extension-bosonic}, we first derive a SOS correspondence to the dual of $\mathcal{K}_{r}$-PPT bosonic extendibility (i.e the set of witnesses for $\mathcal{K}_{r}$-PPT bosonic extendibility), which we prove later has a correspondence with the SOS hierarchy for LDUI/CLDUI witnesses. 

\begin{theorem} \label{thm:PPTExtDual}
For a Hermitian operator $W\in \Msa{n}\otimes \Msa{n}$ and for $r\geq 2$, the following are equivalent.
\begin{enumerate}
    \item $W\in (\mathsf{PPTBExt}_n^{(\mathcal{K}_r)})^{\circ}$;

    \item There exist positive semidefinite matrices $Z_0,Z_1,\ldots, Z_{\lfloor r/2\rfloor}\in \M{n}^{\otimes r}$ such that, for all vectors $v\in \Comp^n$,
    \begin{equation} \label{eq:PPTExtDual}
        \la v^{\otimes 2}|W|v^{\otimes 2}\ra \|v\|_2^{2(r-2)} = \la v^{\otimes r}|Z_0+Z_1^{\Gamma_{[1]}}+\cdots + Z_{\lfloor r/2\rfloor}^{\Gamma_{\left[\lfloor r/2 \rfloor\right]}}| v^{\otimes r} \ra =\sum_{j=0}^{\lfloor r/2 \rfloor} \la \overline{v}^{\otimes j}v^{\otimes r-j}| Z_j|\overline{v}^{\otimes j}v^{\otimes r-j}\ra,
    \end{equation}
    where $\Gamma_{[j]}=\top^{\otimes j}\otimes \id_n^{\otimes r-j}$ is the partial transpose operation with respect to the tensor bipartition $\{[j], [r]\setminus[j]\}$.
\end{enumerate}
\end{theorem}
\begin{proof}
{
\textbf{((1)$\Rightarrow$(2))} If $W\in (\mathsf{PPTBExt}_n^{(\mathcal{K}_r)})^{\circ}$, then by definition we have
    $$\Tr\big((W\otimes I_n^{\otimes r-2})\tilde{\rho}\big) = \Tr(W\rho)\geq 0,$$
where $\rho\in \mathsf{PPTBExt}_n^{(\mathcal{K}_r)}$ and $\tilde{\rho}$ is any $\mathcal{K}_r$-PPT bosonic extension of $\rho$.
Note that $\tilde{\rho}\in \mathcal{B}\big( \bigvee^r \Comp^n\big)^+$ can be chosen to have a constraint that $\tilde{\rho}^{\top_{[j]}}\geq 0$ for all $j=1, \ldots, \lfloor r/2\rfloor$, that is indeed the the condition that $\tilde{\rho}$ is PPT with respect to every bipartition. We claim that there exist PSD operators $Z_0,\ldots, Z_{\lfloor r/2\rfloor}\in \M{n}^{\otimes r}$ such that
    $$\Pi_S(W\otimes I_n^{\otimes r-2})\Pi_S= \Pi_S \Big(Z_0+\sum_{j=1}^{\lfloor r/2\rfloor} Z_j^{\Gamma_{[j]}} \Big)\Pi_S,$$
where $\Pi_S$ is a projection of $(\Comp^n)^{\otimes r}$ onto the symmetric space $(\Comp^n)^{\vee r}$. 
This will imply that
    $$\la v^{\otimes 2}|W|v^{\otimes 2}\ra\|v\|_2^{2(r-2)} = \la v^{\otimes r}|\Pi_S(W\otimes I_n^{\otimes r-2})\Pi_S|v^{\otimes r}\ra = \la v^{\otimes r}|Z_0+Z_1^{\Gamma_{[1]}}+\cdots + Z_{r-1}^{\Gamma_{[\lfloor r/2\rfloor]}}| v^{\otimes r} \ra$$
for $v\in \Comp^n$. 
Note that the extension $\tilde \rho \in \bigcap_{j=0}^{\lfloor r/2\rfloor} \mathcal{C}_j$ where $\mathcal{C}_0=\mathsf{PSD}\big( \bigvee^r \Comp^n\big)$ and
    $$\mathcal{C}_j=\{X\in \Msa{n}^{\otimes r}: X^{\Gamma_{[j]}}\geq 0\}.$$
Therefore, the required condition says that $W\otimes I_n^{\otimes {r-2}}\in \big(\bigcap_{j=0}^{\lfloor r/2\rfloor} \mathcal{C}_j\big)^{\circ}=\overline{\sum_{j=0}^{\lfloor r/2\rfloor}\mathcal{C}_j^{\circ}}$. In other words, there exists $Z_0,\ldots, Z_{\lfloor r/2\rfloor}$ such that 
$$W\otimes I_n^{\otimes r-2}=Z_0+\sum_{j=1}^{\lfloor r/2\rfloor} Z_j^{\Gamma_{[j]}}$$ where $\Pi_SZ_0\Pi_S\geq 0$, $Z_j \geq 0$ for $j\geq 1$ (since  $Z_j^{\top_{[j]}} \in \mathcal{C}_j^{\circ}$), showing the claim
{(The closure can be ignored as we can show that the set of Hermitian polynomials of the type
    $$h_{Z_0,\ldots, Z_{\lfloor r/2\rfloor}}(z)=\sum_{j=0}^{\lfloor r/2 \rfloor} \la \overline{z}^{\otimes j}z^{\otimes r-j}| Z_j|\overline{z}^{\otimes j}z^{\otimes r-j}\ra, \quad z\in \Comp^n,$$
is closed. Indeed, this set is the conic hull of the set of Hermitian forms $z\mapsto |\la \varphi|\overline{z}^{\otimes j}z^{\otimes r-j}\ra|^2$ for $\varphi\in (\Comp^n)^{\otimes r}$ and $j=0,1,\ldots, \lfloor r/2\rfloor$.)}

\medskip

\textbf{((2)$\Rightarrow$(1))} The condition in \cref{eq:PPTExtDual} is equivalent to
\begin{align*}
    \la v^{\otimes r}|\big(W\otimes I_n^{\otimes r-2}-(Z_0+\sum_{j=1}^{\lfloor r/2\rfloor}Z_j^{\Gamma_{[j]}})\big)|v^{\otimes r}\ra = 0 &\quad \forall\,v\in \Comp^n\\
    \iff \Tr\big(X \big(W\otimes I_n^{\otimes r-2}-(Z_0+\sum_{j=1}^{\lfloor r/2\rfloor}Z_j^{\Gamma_{[j]}})\big) \big)=0 &\quad \forall\, X\in {\rm span}\{|v^{\otimes r}\ra \la v^{\otimes r}|:v\in \Comp^n\}.
\end{align*}
It is well-known that ${\rm span}\{|v^{\otimes r}\ra \la v^{\otimes r}|:v\in \Comp^n\}=\mathcal{B}\big( \bigvee^r \Comp^n \big)$ \cite{chiribella2010quantum,harrow2013church}, 
and hence one has
    $$\Pi_S(W\otimes I_n^{\otimes r-2})\Pi_S = \Pi_S(Z_0+Z_1^{\Gamma_{[1]}}+\cdots Z_{\lfloor r/2\rfloor}^{\Gamma_{[\lfloor r/2\rfloor]}})\Pi_S.$$
Now for $\rho\in \mathsf{PPTBExt}_n^{(\mathcal{K}_r)}$ and for $\tilde{\rho}$ a $\mathcal{K}_r$-PPT bosonic extension of $\rho$, it is straightforward that
    $$\Tr(W\rho)=\Tr\big(\Pi_S(Z_0+Z_1^{\top_{[1]}}+\cdots Z_{\lfloor r/2\rfloor}^{\Gamma_{[\lfloor r/2\rfloor]}})\Pi_S\tilde{\rho}\big)= \Tr(Z_0\tilde{\rho})+\sum_{j=1}^{\lfloor r/2\rfloor} \Tr\big(Z_j\tilde{\rho}^{\Gamma_{[j]}}\big)\geq 0,$$
and therefore, $W\in (\mathsf{PPTBExt}_n^{(\mathcal{K}_r)})^{\circ}$.
}
\end{proof}

Next, let us recall the following characterization of SOS polynomials from \cite[Proposition 9]{ZVP06}.

\begin{proposition} \label{prop:SOS}
Let $p(x)$ be a real homogeneous polynomial in $x\in \Real^n$ of degree $r$ such that $p(x^{\odot 2})$ is a sum of squares. Then $p$ admits a decomposition of the form
\begin{equation} \label{eq:SOS}
    p(x)=\sum_{j=0}^{\lfloor r/2\rfloor}\sum_{\substack{\alpha\in \mathbb{N}_0^n\\ |\alpha|=r-2j}} x^{\alpha}\psi_{j,\alpha}(x)^2
\end{equation}
where $|\alpha|:=\alpha_1+\cdots+\alpha_n$, $x^{\alpha}:=x_1^{\alpha_1}\cdots x_n^{\alpha_n}$, and each $\psi_{j,\alpha}$ is a homogeneous polynomial of degree $j$.
\end{proposition}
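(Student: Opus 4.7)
The plan is to start from a sum-of-squares decomposition of $p(x^{\odot 2})$ and exploit its $\mathbb{Z}_2^n$-symmetry under the sign flips $x_i \mapsto -x_i$. Since $p(x^{\odot 2})$ is a homogeneous SOS polynomial of degree $2r$, I can write $p(x^{\odot 2}) = \sum_k Q_k(x)^2$ with each $Q_k$ homogeneous of degree $r$. The key idea is to decompose each $Q_k$ according to the parity class of its monomials: for $\alpha \in \{0,1\}^n$, let $Q_{k,\alpha}(x)$ collect those monomials $x^\beta$ of $Q_k$ with $\beta \equiv \alpha \pmod 2$, so that $Q_k(\epsilon \odot x) = \sum_\alpha \epsilon^\alpha Q_{k,\alpha}(x)$ for every $\epsilon \in \{\pm 1\}^n$, where $\epsilon^\alpha := \prod_i \epsilon_i^{\alpha_i}$.

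Since $p(x^{\odot 2})$ is invariant under these sign flips, averaging the identity $p(x^{\odot 2}) = \sum_k Q_k(\epsilon \odot x)^2$ over $\epsilon \in \{\pm 1\}^n$ and using the character orthogonality $2^{-n}\sum_\epsilon \epsilon^{\alpha + \alpha'} = \mathds{1}_{\alpha = \alpha'}$ for $\alpha, \alpha' \in \{0,1\}^n$ kills all cross-terms and yields
\begin{equation*}
p(x^{\odot 2}) = \sum_k \sum_{\alpha \in \{0,1\}^n} Q_{k,\alpha}(x)^2.
\end{equation*}
By construction each $Q_{k,\alpha}$ factors as $Q_{k,\alpha}(x) = x^\alpha R_{k,\alpha}(x^{\odot 2})$ for some polynomial $R_{k,\alpha}$; homogeneity of $Q_{k,\alpha}$ forces $\deg R_{k,\alpha} = (r - |\alpha|)/2$, so only indices with $|\alpha| \equiv r \pmod 2$ and $|\alpha| \leq r$ contribute.

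Substituting $y := x^{\odot 2}$ (which is valid as a polynomial identity on $\Real^n$, since the image $\{x^{\odot 2} : x \in \Real^n\}$ is Zariski dense in $\Real^n$) and regrouping by $j := (r - |\alpha|)/2$ then yields
\begin{equation*}
p(y) = \sum_{j=0}^{\lfloor r/2 \rfloor} \sum_{\substack{\alpha \in \{0,1\}^n \\ |\alpha| = r - 2j}} y^\alpha \Big(\sum_k R_{k,\alpha}(y)^2\Big),
\end{equation*}
with each $R_{k,\alpha}$ homogeneous of degree $j$ in $y$. The inner sum of squares per $(j,\alpha)$ can be encoded as a single $\psi_{j,\alpha}(y)^2$ by allowing $\psi_{j,\alpha}$ to be vector-valued (or by enlarging the indexing over $k$), giving the claimed decomposition; the argument in fact produces the sharper form in which $\alpha \in \{0,1\}^n$, and the statement with $\alpha \in \mathbb{N}_0^n$ is an immediate weakening. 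The only step requiring genuine care is the $\mathbb{Z}_2^n$-averaging that diagonalizes the squares across the isotypic components; once this is in place, the rest is routine bookkeeping driven by the homogeneity constraints on degrees and parities of $\alpha$.
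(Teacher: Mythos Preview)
The paper does not actually prove this proposition; it merely recalls it from \cite[Proposition 9]{ZVP06}. Your argument is the standard and correct one: exploit the $\mathbb{Z}_2^n$-invariance of $p(x^{\odot 2})$ under sign flips, average an SOS decomposition over $\{\pm 1\}^n$ to diagonalize it across parity classes, then factor out $x^\alpha$ and substitute $y=x^{\odot 2}$. All steps are sound, including the Zariski-density (or nonempty-interior) justification for passing from an identity on $\{x^{\odot 2}\}$ to a polynomial identity in $y$.

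One small presentational point: what you obtain is $p(y)=\sum_{j}\sum_{\alpha\in\{0,1\}^n,\,|\alpha|=r-2j} y^\alpha \sum_k R_{k,\alpha}(y)^2$, i.e.~a \emph{sum} of squares for each $(j,\alpha)$, whereas the proposition as stated lists a single $\psi_{j,\alpha}$ per index. Your remark about ``enlarging the indexing over $k$'' is the right fix, and indeed the cited reference and the way the proposition is used downstream (in the proof of Theorem~\ref{thm:Ext-from-SOS}) are perfectly compatible with this extra index. So this is a harmless discrepancy in how the statement is phrased, not a gap in your proof.
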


{Recall from Proposition \ref{prop:Ext-to-SOS} that a necessary condition for $\XLDUI{A}{B}\in (\mathsf{PPTBExt}_n^{(r-1)})^{\circ}$ is 
    $$A+A^{\top}+2 {\rm Re}(\mathring{B})\in \K_n^{(r-2)}.$$
The next result shows that this is actually also sufficient when we consider the dual of $\mathcal{K}_r$-PPT bosonic extendibility (note that $(\mathsf{PPTBExt}_n^{(r-1)})^{\circ}\subseteq (\mathsf{PPTBExt}_n^{(\mathcal{K}_r)})^{\circ}$).
}

\begin{theorem} \label{thm:Ext-from-SOS}
Let $(A,B)\in \Mreal{n} \underset{\R{n}}{\times} \Msa{n}$ and let $r\geq 2$. Then $\XLDUI{A}{B}\in (\mathsf{PPTBExt}_n^{(\mathcal{K}_r)})^{\circ}$ if and only if
    $$A+A^{\top}+2{\rm Re}(\mathring{B})\in \K_n^{(r-2)}.$$
In particular, we have $\XLDUI{N}{M-\mathring{N}}\in (\mathsf{PPTBExt}_n^{(\mathcal{K}_r)})^{\circ}$ for any $M\in \mathsf{K}_n^{(r-2)}$ and $N\in \Mrealsa{n}$ such that ${\rm diag}(N)={\rm diag}(M)$.
\end{theorem}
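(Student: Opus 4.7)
The plan is to pair the SOS characterization of $\K_n^{(r-2)}$ with the dual characterization of $\mathsf{PPTBExt}_n^{(\mathcal{K}_r)}$ given in \cref{thm:PPTExtDual}, linked through the LDUI quadratic-form identity \cref{eq:LDUI-duality}. For the \emph{necessary} direction, assume $\XLDUI{A}{B}\in(\mathsf{PPTBExt}_n^{(\mathcal{K}_r)})^{\circ}$ and extract PSD matrices $Z_0,\ldots,Z_{\lfloor r/2\rfloor}$ satisfying \cref{eq:PPTExtDual}. Specializing that identity to real vectors $v = x \in \Real^n$ collapses $\bar v = v$, so each term on the right becomes $\la x^{\otimes r}|Z_j|x^{\otimes r}\ra$, a sum of squares of real polynomials of degree $r$. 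On the left, \cref{eq:LDUI-duality} with $v=w=x$ gives $\la x^{\odot 2}|A+\mathring B|x^{\odot 2}\ra$, and the real-symmetrization $\la x|X|x\ra=\tfrac{1}{2}\la x|X+X^\top|x\ra$ rewrites this as $\tfrac{1}{2}\la x^{\odot 2}|A+A^\top+2\Re(\mathring B)|x^{\odot 2}\ra$. Multiplying by $\|x\|_2^{2(r-2)}$ exhibits the polynomial appearing in the definition of $\K_n^{(r-2)}$ as a sum of squares, yielding $A+A^\top+2\Re(\mathring B)\in\K_n^{(r-2)}$.

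For the \emph{sufficient} direction, set $M := A+A^\top+2\Re(\mathring B)\in\K_n^{(r-2)}$ and consider the real homogeneous polynomial $q(z) := \la z|M|z\ra(\sum_i z_i)^{r-2}$ of degree $r$. By hypothesis $q(x^{\odot 2})$ is SOS, so \cref{prop:SOS} supplies a structured decomposition
\[
q(x) = \sum_{j=0}^{\lfloor r/2\rfloor}\sum_{|\alpha|=r-2j} x^\alpha\,\psi_{j,\alpha}(x)^2
\]
with each $\psi_{j,\alpha}$ homogeneous of degree $j$. Substituting $x_i = |v_i|$ and using $|v|^{2\alpha} = v^\alpha \bar v^\alpha$, each summand rewrites as $|Q_{j,\alpha}(v,\bar v)|^2$ with $Q_{j,\alpha}(v,\bar v) := v^\alpha \psi_{j,\alpha}(|v|^{\odot 2})$, a polynomial of bidegree $(r-j,\,j)$ in $(v,\bar v)$. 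Any polynomial of this bidegree can be written as $\la \varphi_{j,\alpha}|\bar v^{\otimes j}\otimes v^{\otimes r-j}\ra$ for a suitable coefficient tensor $\varphi_{j,\alpha}\in (\Comp^n)^{\otimes r}$, hence $|Q_{j,\alpha}|^2 = \la \bar v^{\otimes j} v^{\otimes r-j}|\varphi_{j,\alpha}\ra\la \varphi_{j,\alpha}|\bar v^{\otimes j} v^{\otimes r-j}\ra$. Setting $Z_j := \tfrac{1}{2}\sum_\alpha |\varphi_{j,\alpha}\ra\la \varphi_{j,\alpha}|\succeq 0$ (the factor $\tfrac{1}{2}$ absorbs the symmetrization used in the other direction) produces PSD matrices satisfying the decomposition of \cref{thm:PPTExtDual}, which gives $\XLDUI{A}{B}\in(\mathsf{PPTBExt}_n^{(\mathcal{K}_r)})^{\circ}$.

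The ``in particular'' assertion follows directly: with $A=N$ and $B = M - \mathring N$ the hypothesis $\diag(N)=\diag(M)$ ensures $(A,B)\in\Mreal{n}\underset{\R{n}}{\times}\Msa{n}$ and $\mathring B = M - N$, so $A+A^\top+2\Re(\mathring B) = 2N + 2(M-N) = 2M\in\K_n^{(r-2)}$ and the main equivalence applies. The main technical hurdle I anticipate is the combinatorial bookkeeping when identifying the coefficient tensors $\varphi_{j,\alpha}$ from the polynomial coefficients of $Q_{j,\alpha}$, and then verifying that the resulting identity holds as a polynomial identity in the full pair $(v,\bar v)$ rather than only for real $v$; but because both sides of the target identity depend only on the products $v_i\bar v_i$, coefficient matching is forced and automatic.
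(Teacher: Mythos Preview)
Your proposal is correct and follows essentially the same route as the paper: both directions hinge on \cref{thm:PPTExtDual} combined with the LDUI duality formula \cref{eq:LDUI-duality}, and the sufficient direction uses the structured SOS decomposition of \cref{prop:SOS} to manufacture the PSD matrices $Z_j$ as sums of rank-one tensors $|\varphi_{j,\alpha}\ra\la\varphi_{j,\alpha}|$. One small slip: the substitution should be $x_i=|v_i|^2$ (i.e.\ $x=v\odot\bar v$), not $x_i=|v_i|$; your subsequent expressions $Q_{j,\alpha}(v,\bar v)=v^\alpha\psi_{j,\alpha}(|v|^{\odot 2})$ and the bidegree count $(r-j,j)$ are already consistent with the correct substitution.
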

\begin{proof}
Let us consider a Hermitian polynomial
\begin{align*}
    h(z) &:=\la z^{\otimes 2}|\XLDUI{A}{B}|z^{\otimes 2}\ra\|z\|_2^{2(r-2)} \\
    &=\la z\odot \overline{z}|A+\mathring{B}|z\odot \overline{z}\ra \|z\|_2^{2(r-2)} \\
    &= \frac{1}{2}\la z\odot \overline{z} |A+A^{\top}+2{\rm Re}(\mathring{B})| z\odot \overline{z} \ra \|z\|_2^{2(r-2)}
\end{align*}
for $z\in \Comp^n$, where the second and third equalities follow from the duality relation \cref{eq:LDUI-duality} and the observation $z\odot\overline{z}\in \Real^n$. If $\XLDUI{A}{B}\in (\mathsf{PPTBExt}_n^{(\mathcal{K}_r)})^{\circ}$, then
\cref{thm:PPTExtDual} implies that the real polynomial $x\in \Real^n \mapsto h(x)$ is a sum of squares, and hence $A+A^{\top}+2{\rm Re}(\mathring{B}) \in \mathsf{K}_n^{(r-2)}$. 

Conversely, if $A+A^{\top}+2{\rm Re}(\mathring{B})\in \mathsf{K}_n^{(r-2)}$, then Proposition \ref{prop:SOS} implies that $h(z)=p(z\odot \overline{z})$ where $p=\sum_{j=0}^{\lfloor r/2\rfloor}\sum_{\alpha} x^{\alpha}\psi_{j,\alpha}^2$ is written as in \cref{eq:SOS}. Let us further write
    $$\psi_{j,\alpha}(x)=\sum_{\beta\in \mathbb{N}_0^n,\,|\beta|=j} c_{\beta}^{(j,\alpha)}x^{\beta}$$
for coefficients $c_{\beta}^{(j,\alpha)}\in \Real$. Now for $0\leq j\leq \lfloor r/2\rfloor$ and $\alpha\in \mathbb{N}_0^n$, one has
\begin{align*}
    (z\odot \overline{z})^{\alpha}\big(\psi_{j,\alpha}(z\odot \overline{z})\big)^2 &= |z_1|^{2\alpha_1}\cdots |z_n|^{2\alpha_n}\Big(\sum_{|\beta|=j} c_{\beta}^{(j,\alpha)}|z_1|^{2\beta}\cdots |z_n|^{2\beta_n}\Big)^2\\
    &=\Big|\sum_{|\beta|=j} c_{\beta}^{(j,\alpha)} z_1^{\alpha_1+\beta_1}\cdots z_n^{\alpha_n+\beta_n}\overline{z}_1^{\beta_1} \cdots \overline{z}_n^{\beta_n}\Big|^2\\
    &=\big|\la \varphi_{j,\alpha}|\overline{z}^{\otimes j} z^{\otimes r-j}\ra\big|^2,
\end{align*}
where $|\varphi_{j,\alpha}\ra := \sum_{|\beta|=j}c_{\beta}^{(j,\alpha)}\big(|1\ra^{\otimes \beta_1}\cdots|n\ra^{\otimes \beta_n}\otimes |1\ra^{\otimes \alpha_1+\beta_1}\cdots |n\ra^{\otimes \alpha_n+\beta_n}\big) \in (\Comp^n)^{\otimes r}$. Consequently, we obtain $\displaystyle h(z)=\sum_{j=0}^{\lfloor r/2 \rfloor} \la \overline{z}^{\otimes j}z^{\otimes r-j}| Z_j|\overline{v}^{\otimes j}z^{\otimes z-j}\ra$ for positive semidefinite matrices
    $$Z_j=\sum_{|\alpha|=r-2j}|\varphi_{j,\alpha}\ra \la \varphi_{j,\alpha}|\in \mathsf{PSD}\big(\Comp^{\otimes r}\big), \quad 0\leq j\leq \lfloor r/2\rfloor.$$
By comparison with \cref{eq:PPTExtDual}, we conclude that $\XLDUI{A}{B}\in (\mathsf{PPTBExt}_n^{(\mathcal{K}_r)})^{\circ}$.
\end{proof}

We provide further characterizations of the dual cones of separable / PPT bosonic states.

\begin{corollary}
\label{cor:bosonic-witness-cop}
For a pair $(A,B) \in \Mreal{n} \underset{\R{n}}{\times} \Msa{n}$, 
\begin{enumerate}
    \item $\XLDUI{A}{B}\in \mathsf{Sep}(\Comp^n\vee \Comp^n)^{\circ}$ if and only if $A+A^{\top}+2{\rm Re}(\mathring{B})\in \mathsf{COP}_n$,

    \item $\XLDUI{A}{B}\in \PPT(\Comp^n\vee \Comp^n)^{\circ}$ if and only if $A+A^{\top}+2\Re(\mathring{B})\in \K_n^{(0)} = \SPN_n$.
\end{enumerate}
In particular, given $M\in \COP_n\setminus \SPN_n$, we can generate an LDUI operator $\XLDUI{N}{M-\mathring{N}}$, where $\diag(N)=\diag(M)$ and $N\in \Mrealsa{n}$, as a witness for PPT entanglement of (bipartite) bosonic states.
\end{corollary}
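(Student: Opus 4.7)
The plan is to deduce Part (2) directly from Theorem~\ref{thm:Ext-from-SOS} and to establish Part (1) by a short duality argument that exploits the particularly simple extremal structure of bosonic separable states, as recorded in Remark~\ref{rmk:SepBos}. Part (2) is essentially immediate: since $\mathsf{PPTBExt}_n^{(\mathcal{K}_2)}=\PPT(\Comp^n\vee\Comp^n)$ and $\K_n^{(0)}=\SPN_n$, the claim is precisely the $r=2$ instance of Theorem~\ref{thm:Ext-from-SOS}.

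For Part (1), I would start from Remark~\ref{rmk:SepBos}, which gives
$$\mathsf{Sep}(\Comp^n\vee\Comp^n)=\operatorname{cone}\{|v\ra\la v|\otimes|v\ra\la v|:v\in\Comp^n\}.$$
Hence $\XLDUI{A}{B}\in \mathsf{Sep}(\Comp^n\vee\Comp^n)^\circ$ if and only if $\la v\otimes v|\XLDUI{A}{B}|v\otimes v\ra\geq 0$ for every $v\in\Comp^n$. Setting $w=v$ in the twirling identity \eqref{eq:LDUI-duality} of Proposition~\ref{prop:twirling-formulas}, this inner product simplifies to
$$\la v\odot\bar v\,|\,A+\mathring B\,|\,v\odot\bar v\ra.$$
Writing $x:=v\odot\bar v\in\R{n}_+$ and symmetrizing the (now real) quadratic form yields $\tfrac12\la x|A+A^\top+2\Re(\mathring B)|x\ra$. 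Since $\{v\odot\bar v:v\in\Comp^n\}=\R{n}_+$, nonnegativity of this expression for all $v$ is exactly the defining property of $A+A^\top+2\Re(\mathring B)\in\COP_n$, giving the claimed equivalence.

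Finally, the concluding statement about PPT entangled bosonic witnesses is a direct consequence of the two equivalences combined with the strict inclusion $\COP_n\supsetneq\SPN_n$ for $n\geq 5$: given $M\in\COP_n\setminus\SPN_n$ and any real symmetric $N$ with $\diag(N)=\diag(M)$, the pair $(A,B)=(N,M-\mathring N)$ satisfies
$$A+A^\top+2\Re(\mathring B)=2N+2(\mathring M-\mathring N)=2(\diag(M)+\mathring M)=2M,$$
so by Part (1) the LDUI matrix $\XLDUI{N}{M-\mathring N}$ lies in $\mathsf{Sep}(\Comp^n\vee\Comp^n)^\circ$, and by Part (2) it fails to lie in $\PPT(\Comp^n\vee\Comp^n)^\circ$. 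I do not anticipate any real technical obstacle here; the only minor subtlety is keeping track of the symmetrization of the (possibly non-symmetric) form $A+\mathring B$ when restricting to real nonnegative arguments, but this is handled cleanly by the identity $x^\top Y x=\tfrac12 x^\top(Y+Y^\top)x$ for real $x$.
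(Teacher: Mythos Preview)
Your proposal is correct and follows essentially the same route as the paper: Part~(2) is obtained as the $r=2$ case of Theorem~\ref{thm:Ext-from-SOS}, and Part~(1) is proved by combining the extremal description of $\mathsf{Sep}(\Comp^n\vee\Comp^n)$ from Remark~\ref{rmk:SepBos} with the duality formula \eqref{eq:LDUI-duality} at $w=v$ and the symmetrization $x^\top Yx=\tfrac12 x^\top(Y+Y^\top)x$. Your explicit verification that $(A,B)=(N,M-\mathring N)$ yields $A+A^\top+2\Re(\mathring B)=2M$ is a nice addition that the paper leaves implicit.
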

\begin{proof}
Since $\mathsf{Sep}(\Comp^n\vee \Comp^n)={\rm conv}\{|v^{\otimes 2}\ra\la v^{\otimes 2}|: v\in \Comp^n \}$, $\XLDUI{A}{B}\in \mathsf{Sep}(\Comp^n\vee \Comp^n)^{\circ}$ if and only if
\begin{align*}
    & \la z^{\otimes 2}|\XLDUI{A}{B}|z^{\otimes 2}\ra = \frac{1}{2}\la z\odot \overline{z} |A+A^{\top}+2{\rm Re}(\mathring{B})| z\odot \overline{z} \ra \geq 0 \quad \forall\,z\in \Comp^n \\
    & \iff \la x^{\odot 2} |A+A^{\top}+2{\rm Re}(\mathring{B})| x^{\odot 2} \ra \geq 0 \quad \forall\,x\in \Real^n \\
    & \iff A+A^{\top}+2{\rm Re}(\mathring{B})\in \COP_n,
\end{align*}
showing (1). The assertion (2) simply follows from \cref{thm:Ext-from-SOS} with $r=2$.
\end{proof}

\begin{remark}
The class of bosonic witness introduced in Corollary \ref{cor:bosonic-witness-cop} generalize the witnesses considered in  \cite{marconi2021entangled}. Their ``indecomposable bosonic" witness corresponds to setting $(A,B) = (\operatorname{diag}(M),M)$, and $\XLDUI{\diag(M)}{M} = \sum_{ij} M_{ij} \ketbra{ij}{ji}$. Moreover, our results in \cref{thm:Ext-from-SOS} show that these witnesses can be generalized using copositive hierarchies to detect \textbf{complete} bosonic extendibility of quantum states.   
\end{remark}

{
Let us also record the result analogous to \cref{thm:EWnotSOS}, which follows directly from \cref{thm:Ext-from-SOS,cor:bosonic-witness-cop}.
\begin{corollary}
If $n\geq 6$ and $M\in \COP_n\setminus \bigcup_{r\geq 0} \K_n^{(r)}$, then we have for any $N\in \Mrealsa{n}$,
    $$\XLDUI{N}{M-\mathring{N}}\in \mathsf{Sep}(\Comp^n\vee\Comp^n)^{\circ}\setminus \bigcup_{r\geq 2}(\mathsf{PPTBExt}_n^{(\mathcal{K}_r)})^{\circ}.$$
In particular, $\mathsf{Sep}(\Comp^n\vee\Comp^n)^{\circ}=\overline{\bigcup_{r\geq 2}(\mathsf{PPTBExt}_n^{(\mathcal{K}_r)})^{\circ}}\supsetneq \bigcup_{r\geq 2}(\mathsf{PPTBExt}_n^{(\mathcal{K}_r)})^{\circ}$ whenever $n\geq 6$.
\end{corollary}

Now we shift our focus to the characterization of (complete graph) extendibility of \emph{Dicke states} introduced in \cref{sec:LDUI}.} First of all, the following proposition is an easy application of \cite[Proposition 3.2]{PY24}, but we include its proof for completeness.

\begin{proposition} \label{prop:TwirlingExt}
For $\pi:G\to \mathcal{U}_n$ a unitary representation of a compact group $G$, consider the $\pi\otimes \pi$-twirling operation
    $$\mathcal{T}_{\pi\otimes \pi}(X):=\int_{G}(\pi(g)\otimes \pi(g))X(\pi(g)\otimes \pi(g))^* dg, \quad X\in \M{n}\otimes \M{n},$$
where $dg$ denotes the (normalized) Haar measure on $G$. Then $\mathcal{T}_{\pi\otimes \pi}$ preserves the properties regarding extendibility: if $X\in \mathsf{PPTBExt}_n^{(\mathcal{K}_r)}$ (resp. $(\mathsf{PPTBExt}_n^{(\mathcal{K}_r)})^{\circ}$), then $\mathcal{T}_{\pi\otimes \pi}(X)\in \mathsf{PPTBExt}_n^{(\mathcal{K}_r)}$ (resp. $(\mathsf{PPTBExt}_n^{(\mathcal{K}_r)})^{\circ}$).
\end{proposition}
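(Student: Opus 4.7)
The plan is to handle the primal statement first and then bootstrap the dual statement from it via self-adjointness of the twirl.

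For the primal part, suppose $X\in \mathsf{PPTBExt}_n^{(\mathcal{K}_r)}$, so there exists $\tilde{X}\in \mathsf{PSD}(\bigvee^{r}\Comp^n)$ which is PPT across every bipartition and satisfies $(\id_n^{\otimes 2}\otimes \Tr^{\otimes (r-2)})(\tilde X) = X$. The natural candidate extension of $\mathcal{T}_{\pi\otimes\pi}(X)$ is the $r$-fold twirled operator
\begin{equation*}
    \tilde Y \;:=\; \int_G \pi(g)^{\otimes r}\,\tilde X\,(\pi(g)^{\otimes r})^*\,dg.
\end{equation*}
I would then verify the three defining properties of $\mathcal{K}_r$-PPT bosonic extendibility in turn:
\textbf{(i) Bosonic support.} The symmetric subspace $\bigvee^{r}\Comp^n$ is invariant under $\pi(g)^{\otimes r}$ because $\pi(g)^{\otimes r}$ commutes with the permutation action of $S_r$. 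Hence each integrand is a PSD operator supported on $\bigvee^{r}\Comp^n$, and the integral preserves both properties.
\textbf{(ii) PPT across every bipartition.} For a bipartition $[j]:[r]\setminus [j]$ I would use the identity
\begin{equation*}
    \bigl[\pi(g)^{\otimes r}\tilde X(\pi(g)^{\otimes r})^*\bigr]^{\Gamma_{[j]}} = \bigl(\overline{\pi(g)}^{\otimes j}\otimes \pi(g)^{\otimes r-j}\bigr)\tilde X^{\Gamma_{[j]}}\bigl(\overline{\pi(g)}^{\otimes j}\otimes \pi(g)^{\otimes r-j}\bigr)^*,
\end{equation*}
which is PSD for each $g$ since $\tilde X^{\Gamma_{[j]}}\geq 0$; integrating preserves positivity.
\textbf{(iii) Marginal.} Since $\pi(g)$ is unitary, partial trace on the last $r-2$ factors eats the corresponding tensor factors of $\pi(g)^{\otimes r}$, so
\begin{equation*}
    (\id_n^{\otimes 2}\otimes \Tr^{\otimes(r-2)})(\tilde Y) \;=\; \int_G (\pi(g)\otimes \pi(g))\,X\,(\pi(g)\otimes \pi(g))^*\,dg \;=\; \mathcal{T}_{\pi\otimes\pi}(X).
\end{equation*}

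For the dual part, I would observe that $\mathcal{T}_{\pi\otimes \pi}$ is self-adjoint with respect to the Hilbert--Schmidt inner product: using the substitution $g\mapsto g^{-1}$ and the invariance of the Haar measure, one gets $\Tr(\mathcal{T}_{\pi\otimes\pi}(W)\,\rho) = \Tr(W\,\mathcal{T}_{\pi\otimes\pi}(\rho))$ for all $W,\rho$. Then for any $W\in(\mathsf{PPTBExt}_n^{(\mathcal{K}_r)})^{\circ}$ and any $\rho\in\mathsf{PPTBExt}_n^{(\mathcal{K}_r)}$, the primal part gives $\mathcal{T}_{\pi\otimes\pi}(\rho)\in\mathsf{PPTBExt}_n^{(\mathcal{K}_r)}$, hence $\Tr(\mathcal{T}_{\pi\otimes\pi}(W)\rho)=\Tr(W\,\mathcal{T}_{\pi\otimes\pi}(\rho))\geq 0$, i.e.\ $\mathcal{T}_{\pi\otimes\pi}(W)\in(\mathsf{PPTBExt}_n^{(\mathcal{K}_r)})^{\circ}$.

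I do not anticipate a serious obstacle: this is essentially a direct verification leveraging equivariance of partial transpose and partial trace under tensor-power conjugation. The only mildly delicate point is the bookkeeping in step (ii), where one must remember that partial transpose conjugates $\pi(g)$ into $\overline{\pi(g)}$ on the transposed factors; but this is exactly what makes PSD preservation still hold, since $\overline{\pi(g)}^{\otimes j}\otimes \pi(g)^{\otimes r-j}$ is still unitary.
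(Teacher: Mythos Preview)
Your proposal is correct and follows essentially the same approach as the paper: construct the extension of the twirled state by applying the $r$-fold twirl $\int \pi(g)^{\otimes r}\tilde X(\pi(g)^{\otimes r})^*\,dg$ to the given extension, and then deduce the dual statement from self-adjointness of $\mathcal{T}_{\pi\otimes\pi}$. The paper simply asserts that the twirled extension works without writing out the checks (i)--(iii), whereas you spell them out; your more careful handling of the partial-transpose step in (ii) is exactly the content the paper is leaving implicit.
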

\begin{proof}
It is straightforward to check that, if $\tilde{\rho}$ is a $\mathcal{K}_r$-PPT bosonic extension of $\rho\in \mathsf{PPTBExt}_n^{(\mathcal{K}_r)}$, then $\mathcal{T}_{\pi^{\otimes r}}(\tilde{\rho})=\int \pi(g)^{\otimes r}\tilde{\rho}(\pi(g)^*)^{\otimes r}dg$ is a $\mathcal{K}_r$-PPT bosonic extension of $\mathcal{T}_{\pi\otimes \pi}(\rho)$. On the other hand, if $W\in (\mathsf{PPTBExt}_n^{(\mathcal{K}_r)})^{\circ}$, then
    $$\Tr(\mathcal{T}_{\pi\otimes \pi}(W)\rho)=\Tr(W\mathcal{T}_{\pi\otimes \pi}(\rho))\geq 0$$
since $\mathcal{T}_{\pi\otimes \pi}(\rho)\in \mathsf{PPTBExt}_n^{(\mathcal{K}_r)}$. This gives $\mathcal{T}_{\pi\otimes \pi}(W)\in (\mathsf{PPTBExt}_n^{(\mathcal{K}_r)})^{\circ}$.
\end{proof}

Since $\mathsf{K}_n^{(r)}$ a proper subset of $\mathsf{COP}_n$ for all $n\geq 5$ and $r\geq 0$, we obtain the following main result in this section, which provides examples for quantum states in $\mathsf{PPTBExt}_n^{(\mathcal{K}_r)}\setminus \mathsf{Sep}_{n}$.

\begin{theorem} \label{thm:DickeExt}
The Dicke state $\XLDUI{P}{P}$ is $\mathcal{K}_r$-PPT bosonic extendible ($r \geq 2$) if and only if $P\in (\mathsf{K}_n^{(r-2)})^{\circ}$. In particular, for $n \geq 5$, any matrix $P\in (\mathsf{K}_n^{(r-2)})^{\circ}\setminus \mathsf{CP}_n$ gives a $\mathcal K_{r}$-PPT bosonic extendible entangled state $\XLDUI{P}{P}$.
\end{theorem}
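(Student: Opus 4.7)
The plan is to use the duality $\XLDUI{P}{P} \in \mathsf{PPTBExt}_n^{(\mathcal{K}_r)}$ iff $\Tr(W\XLDUI{P}{P}) \geq 0$ for all $W \in (\mathsf{PPTBExt}_n^{(\mathcal{K}_r)})^{\circ}$, and then reduce the family of witnesses to the LDUI class, where \cref{thm:Ext-from-SOS} gives a clean SOS/hierarchy description. First, since $\XLDUI{P}{P}$ is invariant under the LDUI twirling $\mathcal{T}_{\mathsf{LDUI}}$ (the $U\otimes U$-twirl over diagonal unitaries), for any witness $W$ we have $\Tr(W\XLDUI{P}{P}) = \Tr(\mathcal{T}_{\mathsf{LDUI}}(W)\XLDUI{P}{P})$. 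By \cref{prop:TwirlingExt} applied with $\pi(U)=U$ on $D\mathcal{U}_n$, the twirl $\mathcal{T}_{\mathsf{LDUI}}(W)$ remains in $(\mathsf{PPTBExt}_n^{(\mathcal{K}_r)})^{\circ}$, and by construction it is LDUI, hence of the form $\XLDUI{A}{B}$ for some admissible pair $(A,B)$. So it suffices to test $\XLDUI{P}{P}$ against witnesses of LDUI form.

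Next I compute the pairing. A direct calculation from $\XLDUI{A}{B}=\sum_{ij}A_{ij}|ij\ra\la ij|+\sum_{i\neq j}B_{ij}|ij\ra\la ji|$ and the symmetry $P=P^\top$ yields
\begin{equation*}
    \Tr\!\big(\XLDUI{A}{B}\,\XLDUI{P}{P}\big) \;=\; \la A,P\ra + \la \Re(\mathring B),\mathring P\ra \;=\; \tfrac{1}{2}\la A+A^\top+2\Re(\mathring B),\,P\ra.
\end{equation*}
Combining this with \cref{thm:Ext-from-SOS}, which says $\XLDUI{A}{B}\in (\mathsf{PPTBExt}_n^{(\mathcal{K}_r)})^{\circ}$ iff $A+A^\top+2\Re(\mathring B)\in \K_n^{(r-2)}$, the extendibility of $\XLDUI{P}{P}$ is equivalent to $\la Q,P\ra\geq 0$ for all $Q$ of the form $A+A^\top+2\Re(\mathring B)$ with $(A,B)$ admissible and $Q\in \K_n^{(r-2)}$. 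The map $(A,B)\mapsto A+A^\top+2\Re(\mathring B)$ is surjective onto $\Mrealsa{n}$: for any $Q\in \Mrealsa{n}$, take $A=Q/2$ and $B=\diag(Q)/2$, which satisfy $\diag(A)=\diag(B)$ and produce $Q$. Therefore the equivalence reduces to $\la Q,P\ra \geq 0$ for all $Q\in \K_n^{(r-2)}$, that is, $P\in (\K_n^{(r-2)})^{\circ}$.

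For the ``in particular'' clause, note that the inclusion $\K_n^{(r-2)}\subseteq \COP_n$ is strict for every $r\geq 2$ when $n\geq 5$, as recalled at the beginning of \cref{sec:sos-hierarchies}. Since $\K_n^{(r-2)}$ is a closed convex cone (the image of the PSD cone under a linear map, being the SOS cone), strict containment dualizes to strict containment in the reverse direction, giving $(\K_n^{(r-2)})^{\circ}\supsetneq \COP_n^{\circ}=\CP_n$. Hence there exists $P\in (\K_n^{(r-2)})^{\circ}\setminus \CP_n$. The first part of the theorem gives $\XLDUI{P}{P}\in \mathsf{PPTBExt}_n^{(\mathcal{K}_r)}$, while the separability criterion $\XLDUI{P}{P}\in \mathsf{Sep}\iff P\in \CP_n$ recalled after \cref{prop:(C)DUC-PPT-PDNN} ensures that $\XLDUI{P}{P}$ is entangled, as desired.

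The main technical point to be careful with is the surjectivity step together with the fact that the pairing produces the symmetrized combination $A+A^\top+2\Re(\mathring B)$ rather than $A$ or $B$ directly—this is what lets us match \cref{thm:Ext-from-SOS} with the inner product on the Dicke side. The remaining ingredients (twirling invariance, closedness of the SOS cone, strict separation) are standard. No genuinely hard step is anticipated beyond bookkeeping.
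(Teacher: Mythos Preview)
Your proof is correct and follows essentially the same route as the paper: reduce to LDUI witnesses via the diagonal-unitary twirl (using \cref{prop:TwirlingExt}), invoke \cref{thm:Ext-from-SOS} to identify the LDUI witnesses with $\K_n^{(r-2)}$, and compute the pairing $\langle \XLDUI{A}{B},\XLDUI{P}{P}\rangle=\tfrac12\langle A+A^\top+2\Re(\mathring B),P\rangle$. The only cosmetic difference is that you package both directions into a single equivalence via the surjectivity of $(A,B)\mapsto A+A^\top+2\Re(\mathring B)$ (with preimage $(Q/2,\diag(Q)/2)$), whereas the paper treats the two directions separately and, for the reverse direction, uses the explicit choice $A=B=Y=\tfrac12(M+\diag M)$; these are the same idea in slightly different clothing.
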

\begin{proof}
Suppose $P\in (\mathsf{K}_n^{(r-2)})^{\circ}$ and let $W\in (\mathsf{PPTBExt}_n^{(\mathcal{K}_r)})^{\circ}$. Then by \cref{thm:Ext-from-SOS,prop:TwirlingExt}, $\XLDUI{A}{B}:=\mathcal{T}_{\mathsf{LDUI}}W\in (\mathsf{PPTBExt}_n^{(\mathcal{K}_r)})^{\circ}$ and $(A+\mathring{B})+(A+\mathring{B})^{\top}\in \mathsf{K}_n^{(r-2)}$. Therefore, one has
\begin{align*}
    \la \XLDUI{P}{P}, W\ra &= \la \mathcal{T}_{\mathsf{LDUI}}\XLDUI{P}{P},W\ra = \la \XLDUI{A}{A}, \mathcal{T}_{\mathsf{LDUI}}W\ra \\
    &= \la \XLDUI{P}{P}, \XLDUI{A}{B}\ra = \la P, A+\mathring{B}\ra \\
    &= \frac{1}{2}\la P, (A+\mathring{B})+(A+\mathring{B})^{\top}\ra\geq 0.
\end{align*}
This shows that $\XLDUI{P}{P}\in \mathsf{PPTBExt}_n^{(\mathcal{K}_r)}$. Conversely, suppose $\XLDUI{P}{P}\in \mathsf{PPTBExt}_n^{(\mathcal{K}_r)}$ and let $M\in \mathsf{K}_n^{(r-2)}$. Then \cref{thm:Ext-from-SOS} again implies that $\XLDUI{Y}{Y}\in (\mathsf{PPTBExt}_n^{(\mathcal{K}_r)})^{\circ}$ where $Y=\frac{1}{2}(M+{\rm diag}(M))$. Therefore, we have
    $$\la P,M\ra=\la P,Y+\mathring{Y}\ra=\la \XLDUI{P}{P},\XLDUI{Y}{Y}\ra\geq 0,$$
establishing $P\in (\mathsf{K}_n^{(r-2)})^{\circ}$.
\end{proof}

\begin{remark}
This result gives examples of $\mathcal{K}_r$-PPT bosonic extendible entangled states for arbitrary $r \geq 2$. On the other hand, \cref{thm:DickeExt} also implies that the CLDUI state $\XCLDUI{P}{P}$ is $\mathcal{K}_r$-PPT (non-bosonic) extendible for $P\in (\mathsf{K}_n^{(r-2)})^{\circ}$. The main difference is that $\XCLDUI{P}{P}=(\XLDUI{P}{P})^{\Gamma}$ is not a bosonic state, but is symmetric in the sense of commuting with the flip operator $F:=\sum_{i,j}|ij\ra\la ji|$, i.e
    $$F\XCLDUI{P}{P} = \XCLDUI{P}{P}F.$$
\end{remark}

\begin{corollary} \label{cor:MultiDickeEnt}
For $n\geq 5$ and $r\geq 2$, there exists a bosonic state $\tilde{\rho}\in \mathsf{PSD}(\bigvee^r \Comp^n)$ that is PPT across every bipartition while being genuinely entangled (Remark \ref{rmk:MultiEntanglement}). Moreover, such a state also exists within the class of \emph{$r$-partite diagonal symmetric (or mixture of Dicke) states}, i.e., bosonic states with $\mathcal{DU}_n^{\otimes r}$-symmetry.
\end{corollary}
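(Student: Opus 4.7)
The plan is to apply \cref{thm:DickeExt} with a matrix $P \in (\mathsf{K}_n^{(r-2)})^\circ \setminus \mathsf{CP}_n$, and then use the two-body marginal to witness genuine entanglement via \cref{rmk:MultiEntanglement}. First I need to confirm that such $P$ exists for $n\geq 5$: since $\mathsf{K}_n^{(r-2)} \subseteq \mathsf{COP}_n$ with strict inclusion when $n\geq 5$, taking duals gives $\mathsf{CP}_n = \mathsf{COP}_n^\circ \subsetneq (\mathsf{K}_n^{(r-2)})^\circ$, so the required set of matrices is nonempty. Fix any such $P$, and let $\rho := \XLDUI{P}{P}$, which is an entangled bipartite Dicke state since $P \notin \mathsf{CP}_n$. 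By \cref{thm:DickeExt}, there exists a bosonic extension $\tilde{\rho} \in \mathsf{PSD}(\bigvee^r \Comp^n)$ of $\rho$ which is PPT across every bipartition. This $\tilde{\rho}$ is the state we want.

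It remains to verify that $\tilde{\rho}$ is genuinely entangled. By \cref{rmk:MultiEntanglement}, any bosonic state is either fully separable or genuinely entangled, so it suffices to rule out full separability. Suppose for contradiction that $\tilde{\rho} = \sum_i \lambda_i |v_i\rangle\langle v_i|^{\otimes r}$ with $v_i \in \Comp^n$ and $\lambda_i \geq 0$. Tracing out the last $r-2$ tensor factors yields the two-body marginal
\begin{equation*}
    (\id_n^{\otimes 2}\otimes \Tr^{\otimes (r-2)})(\tilde\rho) \;=\; \sum_i \lambda_i\, |v_iv_i\rangle\langle v_iv_i| \;\in\; \mathsf{Sep}(\Comp^n\vee \Comp^n),
\end{equation*}
which contradicts the fact that this marginal equals the entangled state $\XLDUI{P}{P}$. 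Hence $\tilde{\rho}$ is genuinely entangled.

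For the second assertion, I would enforce the $\mathcal{DU}_n^{\otimes r}$-symmetry by twirling: set $\tilde{\rho}' := \mathcal{T}_{\mathcal{DU}_n^{\otimes r}}(\tilde{\rho})$, where $\mathcal{T}_{\mathcal{DU}_n^{\otimes r}}(X) := \int_{\mathcal{DU}_n} U^{\otimes r} X (U^*)^{\otimes r}\, dU$. Three properties need to be checked. First, since each $U^{\otimes r}$ preserves the symmetric subspace $\bigvee^r \Comp^n$, the twirled state $\tilde{\rho}'$ remains bosonic. Second, the partial transpose commutes with the twirling map (an adaptation of \cref{prop:TwirlingExt} from the bipartite to the multipartite setting, using that diagonal unitaries satisfy $U^{\top} = U$ so each tensor factor of the twirl is diagonal-unitary covariant under transposition), so PPT is preserved across every bipartition. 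Third, since $\XLDUI{P}{P}$ is itself LDUI-invariant, the two-body marginal of $\tilde{\rho}'$ still equals $\XLDUI{P}{P}$, which remains entangled. The same argument as above then shows $\tilde{\rho}'$ is genuinely entangled, and $\tilde{\rho}'$ lies within the class of $r$-partite diagonal symmetric states by construction.

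The main (minor) obstacle is checking that the multipartite $\mathcal{DU}_n^{\otimes r}$-twirl commutes with all partial transposes and preserves the PPT property across every bipartition; this is a routine extension of \cref{prop:TwirlingExt} but deserves an explicit line. Everything else is a direct application of previously established results.
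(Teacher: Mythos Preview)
Your proposal is correct and follows essentially the same approach as the paper: pick $P\in (\mathsf K_n^{(r-2)})^{\circ}\setminus \CP_n$, take the $\mathcal K_r$-PPT bosonic extension $\tilde\rho$ of $\XLDUI{P}{P}$ guaranteed by \cref{thm:DickeExt}, argue non--full-separability from the entangled two-body marginal (the paper phrases this as ``partial trace operations are separability preserving''), and then twirl by $\mathcal{DU}_n^{\otimes r}$ for the diagonal-symmetric version. Your added justifications---the nonemptiness of $(\mathsf K_n^{(r-2)})^{\circ}\setminus \CP_n$ via duality of the strict inclusion $\mathsf K_n^{(r-2)}\subsetneq \COP_n$, and the explicit check that the $r$-partite twirl preserves bosonicity, PPT, and the two-body marginal---are welcome details that the paper leaves implicit.
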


\begin{proof}
Let $P\in (\mathsf{K}_n^{(r-2)})^{\circ}\setminus \mathsf{CP}_n$ and let $\tilde{\rho}$ be a $\mathcal{K}_r$-PPT bosonic extension of a bipartite Dicke state $\XLDUI{P}{P}$. We claim that $\tilde{\rho}$ satisfies the advertised property. Indeed, since $(\id_n^{\otimes \{1,2\}}\otimes \Tr^{\otimes [r]\setminus \{1,2\}})(\tilde{\rho})=\XLDUI{P}{P}$
is entangled, and since partial trace operations are separability preserving, this implies that $\tilde{\rho}$ is not fully separable. Finally, the same property holds if we consider the \emph{$\mathcal{DU}^{\otimes r}$-twirling} 
    $$\mathcal{T}_{D\mathcal{U}_n^{\otimes r}}(\tilde{\rho}):=\int_{D\mathcal{U}_n}U^{\otimes r}\tilde{\rho}U^{*\otimes r} dU$$
of $\tilde{\rho}$, which is now a diagonal symmetric extension of $\XLDUI{P}{P}$.
\end{proof}

\begin{remark}
The same conclusion as in Corollary \ref{cor:MultiDickeEnt} was obtained in \cite{romero2025multipartite}. A careful inspection of the arguments in that paper reveals that the existence of an $r$-partite PPT Dicke state having a PPT entangled bipartite reduction is assumed without proof. \cref{thm:DickeExt,cor:MultiDickeEnt} fills in this gap in a mathematically rigorous way.
\end{remark}

We also refer to \cite{augusiak2012entangled} for a construction of $r$-qubit ($r \geq 4$) entangled states that are PPT across all bipartitions. Since we do not have the complete characterization of the hierarchy $\mathsf{PPTBExt}_n^{(r-1)}$, we do not know whether two hierarchies we consider are equivalent for bosonic states, or particularly mixtures of Dicke states. We pose this as the open question. 

\begin{question}
For mixtures of Dicke states, are the following two sets equivalent? 
    $$\mathsf{PPTBExt}_n^{(\mathcal{K}_r)} = \mathsf{PPTBExt}_n^{(r-1)}, \quad r\geq 2$$
\end{question}

The sets are equivalent to if  Question \ref{question:ExtLifting} admits a affirmative solution, in which case the arguments in \cref{thm:DickeExt} can be applied analogously to star $r$-extendibility. We would also like to state the following question regarding the $\K_n^{(r)}$ hierarchy and its potential implications for the theory of entanglement. 

\begin{question}
For $n \geq 5$, is the inclusion $\K_n^{(r)}\subseteq \K_n^{(r+1)}$ strict for every $r\geq 1$? The affirmative answer would imply that all the inclusions in \cref{eq:PPTBExt} are strict. Furthermore, the same strict inclusions would hold within the class of (mixtures of) Dicke states. 
\end{question}

\section{Discussion and open problems}

This work introduces a unified and powerful framework that connects four key areas: the theory of \emph{positive linear maps}, the \emph{optimization theory of copositive matrices}, \emph{algebraic graph theory}, and the rich structure of \emph{quantum entanglement} in mixed states.

The impact of this unification is quite broad. For quantum information theorists, we provide a complete characterization of positivity for a major class of maps and a systematic method for producing new indecomposable maps and highly extendible entangled states. The established ``lifting'' theorems create a direct bridge from the mature, classical theory of copositive matrices to the quantum world, allowing for the construction of vast families of maps and states having useful properties for entanglement detection. Furthermore, our graph-theoretic construction, which links the properties of maps to fundamental graph parameters, offers a rich and structured source of these crucial examples.

We also believe this work will spur significant new directions of research. The newly introduced families of maps and states invite further investigation, particularly, it would be interesting to have a generalization of the copositive hierarchies to hierarchies for $\COPCP$. The explicit connection between SOS hierarchies and state extendibility opens the door to applying powerful semidefinite programming tools developed in optimization theory to unsolved problems in entanglement theory. The new classes of positive-indecomposable maps might play a role in answering questions around PPT entanglement, particularly the PPT$^2$ conjecture \cite{PPTsq,Christandl2018,singh2022ppt} and the Schmidt number of PPT entangled states \cite{huber2018high,pal2019class,krebs2024high}.

\bigskip

\noindent\textbf{Acknowledgments.} The authors thank the organizers of the workshop \href{https://homepages.laas.fr/henrion/pop23/}{POP23 - Future Trends in Polynomial OPtimization} where this project was started. This research was supported by the ANR project \href{https://esquisses.math.cnrs.fr/}{ESQuisses}, grant number ANR-20-CE47-0014-01. A.G received support from the University Research School EUR-MINT
(State support managed by the National Research Agency for Future Investments
program bearing the reference ANR-18-EURE-0023)

\bibliography{references}
\bibliographystyle{alpha}

\end{document}